\newcommand{\idmat}{\mbox{$1 \hspace{-1.3 mm} 1$}}
\newcommand{\be}{\begin{equation}}
\newcommand{\ee}{\end{equation}}
\newcommand{\bea}{\begin{eqnarray}}
\newcommand{\eea}{\end{eqnarray}}
\DeclareMathOperator{\Tr}{\mathrm{Tr}}
\newtheorem{trm}{Theorem}
\newtheorem{lem}[trm]{Lemma}
\newtheorem{prop}[trm]{Proposition}
\newtheorem{cor}[trm]{Corollary}
\newtheorem{definition}{Definition}
\title{Single-cone real-space finite difference schemes \\for the  Dirac von Neumann equation}
\author{Magdalena Schreilechner}
\author{Walter P\"{o}tz\corref{cor1}}
\ead{walter.poetz@uni-graz.at}
\address{Institut f\"{u}r Physik, Karl-Franzens-Universit\"{a}t Graz, Universit\"{a}tsplatz 5, 8010 Graz, Austria}
\begin{document}

\begin{abstract}
Two finite difference schemes  for the numerical treatment of the von Neumann equation for the (2+1)D  Dirac Hamiltonian are presented.  Both utilize a single-cone staggered space-time grid which ensures a single-cone energy dispersion  to formulate  a numerical treatment of the mixed--state dynamics within the von Neumann equation.  The first scheme executes the time-derivative according to the product rule for "bra" and "ket" indices of the density operator.   It therefore directly inherits all the favorable properties of the difference scheme for the pure--state Dirac equation and conserves positivity.  The second scheme proposed here performs the time-derivative in one sweep.  This direct scheme is investigated regarding stability and convergence.  Both schemes are tested numerically for elementary simulations  using parameters which pertain to topological insulator surface states.  Application of the schemes to a Dirac Lindblad equation and real--space--time Green's function formulations are discussed.
\end{abstract}
\begin{keyword}
Dirac equation \sep  vonNeumann equation \sep finite-difference scheme  \sep staggered grid \sep fermion doubling \sep topological insulator 
\end{keyword} 

\maketitle

\section{Introduction - preliminaries and definitions}

\subsection{The Dirac equation and numerical schemes}

In context with graphene and,  more recently, topological insulator surface states, the Dirac equation has received renewed interest within the physics community.  
Introduced by P. A. M. Dirac in 1928, the (3+1)D Dirac equation traditionally has been a model in high-energy physics to describe relativistic spin-1/2 particles and, as a field theory,  is a key ingredient  to the standard model of elementary particle physics.\cite{dirac,sakurai,ryder,itzykson,srednicki}   In condensed matter and atomic  physics its  importance lies in the non-relativistc limit providing the spin-orbit interaction, which is instrumental to an understanding of atomic spectra and represents the foundation for the field of spintronics.\cite{spintr}  In electronic structure calculations the full Dirac equation has been used to describe inner shell electrons.\cite{wien1,wien2}   While the (1+1)D  and (2+1)D Dirac equation, allowing a two-dimensional representation of the Clifford algebra,  in the early days of quantum physics was used for simplicity in formal model studies, condensed matter physics has recently identified physical realizations of the (2+1)D Dirac equation in form of low energy electronic excitations in graphene and topological insulators (TI).\cite{neto,qi,hsieh,hasan,xia,analytis}

With the interest in graphene and TIs as components in electronic and spintronic devices efficient schemes for the numerical  solution of the (2+1)D Dirac equation have become desirable.  Numerical approaches for solving the Dirac equation have taken several approaches.  For investigations of relativistic electrons in atomic physics, a  (2+1)D FFT-split-operator code was used\cite{piazza,mocken}.  In such an approach, fast Fourier transformation between the space and momentum representation is used to compute the Dirac propagator. The computational effort scales like $\mathcal{O}(N \ln N)$ where $N$ is the number of grid-points. An efficient code using operator splitting in real space was introduced for the (3+1)D case.\cite{gourdeau}  It leads to the highly efficient operations count of $\mathcal{O}(N)$.   Alternatively, approaches using finite-difference approximations to the first-order space-time derivatives have been used and have foremostly been applied in lattice quantum chromo dynamics (QCD).\cite{QCD} Traditionally they were plagued by the fermion doubling problem associated with centered difference approximations to first-order space- and time-derivatives.\cite{Stacey} We note that such a 2d finite-difference scheme for the Dirac--Poisson system was presented recently.\cite{brinkman}

Recently we have identified a class of staggered grids on which fermion doubling can be avoided in arbitrary space dimensions, with explicit schemes given for the (1+1)D, (2+1)D, and (3+1)D case.\cite{hammer1d,hammer3d}  This class of schemes scales linearly in the number of grid points and allows a gauge-invariant formulation of electromagnetic fields  on the lattice.  Moreover, this grid can be used in any space-time formulation involving the Dirac operator, such as the density matrix and Green's function approach, while yielding a single Dirac cone.  

In this work we develop two numerical schemes for the mixed-state time evolution under the Dirac Hamiltonian $H$ in (2+1)D , which we term the Dirac-von-Neumann equation
\be
i\hbar {\dot \rho}=H\rho -  \rho H= \left[ H, \rho\right] ~ .
\label{vN}
\ee
It describes the coherent mixed state dynamics of Dirac fermions which is useful  for simulations of quantum transport on TI surfaces near the Dirac point.  Both schemes to be presented in the following sections utilize the staggered grid of Ref. \cite{hammer3d}  to define centered differences over a single lattice spacing, thereby eliminating the very source for fermion doubling from the start.    The first scheme ("bra-ket") scheme, introduced in Sect. \ref{braket},  treats the time derivative of the density operator $\rho=\sum_i \gamma_i|\phi_i\rangle\langle \phi_i|$ within the  product rule as indicated by the commutator on the r.h.s. of Eq. \ref{vN}:  apply $H$  from the left and therewith propagate the ket in time, apply $H$ from the right and therewith propagate the bra in time, and then form the difference.   The second ("direct") scheme treats the time-derivative of $\rho$ in one step and is introduced in Sect. \ref{directs}, with further details given in the Appendix.   Numerical examples, one for each scheme, are given in Sect. \ref{NumSim}.  Finally, we briefly discuss the extension of this approach to the Lindblad equation and space-time  formulations of the Green's function method for the Dirac Hamiltonian in Sect. \ref{EXT} and close with Summary and Outlook.  

\subsection{The continuum formulation of the von Neumann equation for the (2+1)D Dirac Hamiltonian}

We consider the effective model Hamiltonian which accounts for the energy spectrum of TI surface states near the Dirac point\cite{qi}\footnote{Note that the TI  representation differs from the standard representation used in Ref. \cite{hammer3d}, however, shifting  $P_x \rightarrow P_y$ and $P_y \rightarrow -P_x$ converts the latter to the former.}
\begin{equation}
H=v\left({\bf \sigma} \times {\bf P}\right)\mid_z + m(X,Y,t)\sigma_z +V(X,Y,t)\idmat_2
\label{H}
\end{equation}
$\sigma_i$ denote the Pauli matrices. 
Using the abbreviation\footnote{For now,  the electromagnetic vector potential is set equal to zero.}
$$
\partial_\pm=v\left(P_y\pm i P_x\right) 
$$
and (omitting space--time arguments)
$$
m_\pm=V \pm m
$$
this Hamilton operator takes the form of a  $2\times 2$ matrix
$$
H= \left( \begin{array}{cc} m_+ & \partial_+ \\
 \partial_- & m_- \end{array}\right) ~.
 $$
 Inserted into the von Neumann equation \eqref{vN} 
one obtains a set of first-order partial differential equation in space and time for the density operator elements $\rho_{i j}$, conveniently written as the  $2\times 2$ matrix identity
$$
i\hbar \frac{\partial }{\partial t} \left( \begin{array}{cc} \rho_{11} &  \rho_{12}\\  \rho_{21} & \rho_{22} \end{array}\right)= \left( \begin{array}{cc}  m_+ \rho_{11} -\rho_{11}m_+   + \partial_+\rho_{21}    - \rho_{12} \partial_-   &   m_+\rho_{12} - \rho_{12}m_- +  \partial_+\rho_{22} - \rho_{11}\partial_+ \\
m_-\rho_{21}   - \rho_{21}m_+   +  \partial_- \rho_{11} - \rho_{22} \partial_-   &  m_-\rho_{22}   -  \rho_{22}m_-   +  \partial_- \rho_{12}   -  \rho_{21}\partial_+ \end{array}\right)~~.
$$ 
Note that,  as in the original von Neumann equation, Hamilton matrix operator elements to the right of the density operator matrix elements act to the left (and vice versa).  The two-component nature of the spin-$1/2$ Dirac fermion suggests this $2\times 2$ form.  Choosing a continuous space representation for the orbital degrees of freedom one arrives at 
\begin{eqnarray}
i\hbar {\dot \rho}({\bf r},{\bf r'}) &= &    ~~~~~~~~~~~~~~~~~~~~~~~~~~~~~~~~~~~~~~~~~~~~~~~~~~~~~~~~~~~~~~~~~~~~~~~~~~~~~~~~~~~~~~~~~~~~~~~~~~~~~~~~~~~~~~~~~~~~~~ \label{DVN}
\end{eqnarray}
\be
\left( \begin{array}{cc}  (m_+({\bf r})-m_+({\bf r'}) )\rho_{11}({\bf r},{\bf r'})+  \partial_+\rho_{21}({\bf r},{\bf r'})    + \partial_-'\rho_{12}({\bf r},{\bf r'})   &   (m_+({\bf r})-m_- ({\bf r'}))\rho_{12}({\bf r},{\bf r'}) +  \partial_+\rho_{22}({\bf r},{\bf r'}) + \partial_+' \rho_{11}({\bf r},{\bf r'})\\ 
(m_-({\bf r})-m_+({\bf r'}))\rho_{21}({\bf r},{\bf r'})    +  \partial_- \rho_{11}({\bf r},{\bf r'}) +  \partial_-' \rho_{22}({\bf r},{\bf r'})  &  (m_-({\bf r})-m_-({\bf r'}))\rho_{22}({\bf r},{\bf r'})    +  \partial_- \rho_{12}({\bf r},{\bf r'})   + \partial_+'\rho_{21}({\bf r},{\bf r'}) \end{array}\right)~~ . \label{vNrsp}
\ee
Here we omit the time variable $t$, common to all terms, for brevity, and 
use the real-space versions of the abbreviations defined above, {\it i.e.}, 
$$
\partial_\pm=\frac{\hbar v}{i}\left(\partial_y\pm i \partial_x\right)~ , \partial_\pm'=\frac{\hbar v}{i}\left(\partial_{y'}\pm i \partial_{x'}\right)
$$
and (including arguments)
$$
m_\pm(x,y,t)=V(x,y,t) \pm m(x,y,t)~.
$$
Note that in the real-space representation we have
$$
 \langle \Psi_j \mid \left(P_y\pm i P_x\right) \mid x,y\rangle= \langle   \left(P_y\mp i P_x\right) \Psi_j \mid  x,y\rangle = 
\langle    x,y\mid \left(P_y\mp i P_x\right) \Psi_j \rangle^*=  - \partial_\pm  \Psi_j ( x,y)^* ~.
$$

\subsection{Placement onto a space-time grid}\label{placement}

The task at hand now is to develop a space-time finite-difference approximation to Eq. \ref{vNrsp} which avoids fermion doubling.  
This is achieved by using the staggered grid introduced in an earlier paper to accommodate the real-space density matrix in which upper and lower spinor component(s) are placed on two adjacent time-sheets.\cite{hammer3d}
The proper implementation is  found by inspection of the density operator for a pure state spinor.  In (2+1)D one has  a pure-state  ket $\mid\Psi \rangle=\left(\begin{array}{c}\psi_1\\ \psi_2\end{array}\right)$ and a ket-bra projector  for the density operator 
\begin{equation}
\rho= \mid\Psi \rangle \langle \Psi\mid = \left(\begin{array}{c}\psi_1\\ \psi_2\end{array}\right) \left(\begin{array}{cc}\psi_1^*& \psi_2^*\end{array}\right) = 
\left(\begin{array}{cc} \psi_1 \psi_1^* & \psi_1 \psi_2^* \\ \psi_2 \psi_1^* & \psi_2 \psi_2^*\end{array}\right) ~ . \label{pures}
\end{equation}
This shows that the Pauli indices  of $\rho$ $, 1$ and  $2$ respectively, take the face-centered rectangular $\psi_1$-grid ($ {\cal G}_1$) and $\psi_2$-grid ($ {\cal G}_2$)  position.  For a single-cone representation the latter are\cite{hammer3d}

\begin{align}
\psi_1(j) ~\mbox{with}~ j &~\in~ {\cal G}_1 =\left\{(j_x,j_y,j_o-1/2),~(j_x+1/2,j_y+1/2,j_o-1/2)\mid  j_\nu  \in\mathbb{Z}, \nu=x,y,o\right\}~,  \nonumber \\
\psi_2(j)  ~\mbox{with}~ j &~\in~  {\cal G}_2 =\left\{(j_x+1/2,j_y,j_o),~(j_x,j_y+1/2,j_o)\mid   j_\nu \in\mathbb{Z}, \nu=x,y,o\right\}. \nonumber 
\end{align}
Here the time index is labeled $o$.  Note that, for given time, $\psi_1$ and $\psi_2$ are placed onto adjacent time sheets, respectively, $j_0-1/2$ and $j_0$.  Each time sheet contains a rectangular face-centered spatial grid, symmetrically staggered relative to the ones on the two adjacent time sheets, such that symmetric difference quotients replace the respective partial derivatives on the grid.  The grid spacings are denoted by $\Delta_x$, $\Delta_y$,  and $\Delta_o = v\Delta_t$.   
We also introduce the  associated grids
\begin{align}
{\cal \bar{G}}_1 =\left\{(j_x,j_y,j_o),~(j_x+1/2,j_y+1/2,j_o)\mid  j_\nu  \in\mathbb{Z}, \nu=x,y,o\right\}~,  \nonumber \\
{\cal \bar{G}}_2 =\left\{(j_x+1/2,j_y,j_o+1/2),~(j_x,j_y+1/2,j_o+1/2)\mid   j_\nu \in\mathbb{Z}, \nu=x,y,o\right\}. \nonumber 
\end{align}
These grids are obtained from their non-barred counterparts by a forward time-shift by $\Delta_t/2$ and thus share the spatial positions with the former.  These will be  the grids on which we place mass, scalar potential, and all partial derivatives.  Common to all four grids is their symmetry in space-time, as required in a covariant theory.  Since most dynamic problems are formulated as an initial--value problem in time, however, it is useful to view each of the two  grids as a set of time sheets.  Each time sheet in turn  consists of a rectangular-,  for $\Delta_x\neq\Delta_y$,  or cubic-face-centered spatial grid,   for $\Delta=\Delta_x=\Delta_y$, which in units $(\Delta_x,\Delta_y)$ is characterized by  the set of relative positions 
$(j_x,j_y)\in \mathbb{Z}^2 \cup (\mathbb{Z}+\frac{1}{2})^2$.  

The matrix elements $\rho_{ij}$ (and the vector potential to be introduced later) live on the grids ${\cal G}_i$
$$
\rho_{ij}({\bf r},t; {\bf r'},t')\rightarrow \rho_{ij}({\cal  G}_i; {\cal  G}_j)  ~,~  i,j=1,2.
$$
Note that $t=t'$ leads to the need for two adjacent time sheets (e.g., $j_o-1/2$  and $ j_o$)  for the representation of all matrix elements of $\rho(t)$ on the lattice:  while $\rho_{11}$ and  $\rho_{22}$  each are placed on a single time sheet, 
$\rho_{12}$ and  $\rho_{21}$ require the use of two adjacent time sheets.  This leads one to the definition of the trace of $\rho$ using two adjacent time sheets of the grid:

\begin{definition} \label{def1}
The trace of $\rho$  at a given time ($j_o-1/2, j_o$) on the grid, 
\bea
\Tr\{ \rho\} &=& \Tr\{ \rho_{11}({\cal G}^{(j_o-1/2)}_1;{\cal G}^{(j_o-1/2)}_1) + \rho_{22}({\cal G}_2^{(j_o)};{\cal G}^{(j_o)}_2)\} \nonumber \\
&= &\sum_{(j_x,j_y)}\left[ \rho_{11}(j_x,j_y,j_o-\frac{1}{2};j_x,j_y,j_o-\frac{1}{2}) + \rho_{22}(j_x+\frac{1}{2},j_y,j_o;j_x+\frac{1}{2},j_y,j_o)\right] ~,\label{gridtrace}
\eea
is defined  as the sum of the trace of  $\rho_{11}$ with respect to the  lattice sites ${\cal G}^{(j_o-1/2)}_1$ and the trace of  $\rho_{22}$ with respect to the lattice sites ${\cal G}^{(j_o)}_2$.  
The summation runs over all lattice sites of the respective time sheet, $(j_x,j_y) \in \mathbb{Z}^2 \cup (\mathbb{Z}+\frac{1}{2})^2$.  

We extend this definition to $(j_x,j_y)$ lattice sums for non-diagonal density matrix elements
\bea
\Tr\{ \rho_{ij} ({\cal G};{\cal G}')\}&=& 
\sum_{(j_x,j_y)}\left[ \rho_{ij}(j_x,j_y,j_t; j_x',j_y',j_t')\right]\mid_{ j_x'=j_x+\Delta j_x,j_y'=j_y+\Delta j_y} ~,\label{gridtrace1}
\eea
with  constant  relative indices $\Delta j_x$ and $\Delta j_y$.
\end{definition}

\begin{definition} \label{def2}
Centered spatial difference operators   are defined on the barred grids as follows
$$
D_x f({\cal {\bar G}}_i): D_x f\mid^{j_t}_{j_x,j_y} = \frac{1}{\Delta_x} ( f_{j_x+1/2,j_y}^{j_t}-f_{j_x-1/2,j_y}^{j_t}),  ~~(j_x,j_y,j_t) \in {\cal {\bar G}}_i~,
$$
$$
D_y f({\cal {\bar G}}_i): ~D_y f\mid^{j_t}_{j_x,j_y} =\frac{1}{\Delta_y} ( f_{j_x,j_y+1/2}^{j_t}-f_{j_x,j_y-1/2}^{j_t}) ,  ~~(j_x,j_y,j_t) \in {\cal {\bar G}}_i~,.
$$ 
$$
D_\pm f({\cal {\bar G}}_i):D_\pm f\mid^{j_t}_{j_x,j_y} = D_y f\mid^{j_t}_{j_x,j_y}  \pm iD_x f\mid^{j_t}_{j_x,j_y}, ,  ~~(j_x,j_y,j_t) \in {\cal {\bar G}}_i ~,
$$ 
and
$$
D_o f({\cal {\bar G}}_i):~D_o f\mid^{j_t}_{j_x,j_y} = \frac{1}{\Delta_o} ( f_{j_x,j_y}^{j_t+1/2}-f_{j_x,j_y}^{j_t-1/2}),  ~~(j_x,j_y,j_t) \in {\cal {\bar G}}_i~.
$$
Furthermore, we will use 
 the time-average operation 
$$
{\cal T} f({\cal {\bar G}}_i):     {\cal T} f\mid^{j_t}_{j_x,j_y} = \frac{1}{2} ( f_{j_x,j_y}^{j_t+1/2}+f_{j_x,j_y}^{j_t-1/2}),  ~~(j_x,j_y,j_t) \in {\cal {\bar G}}_i~.
$$ 
\end{definition}

Formally these operators are defined on the barred grids ${\cal {\bar G}}_i$, however,  the execution involves density matrix elements on the unbarred grids.  
Note that when applied to the density matrix, these operators may act on the first (bra) or second set of indices (ket).

Partial derivatives, scalar potential and mass terms, respectively, are  implementing into Eq. (\ref{DVN}) by the substitutions
$$
\partial_\pm \rightarrow \frac{v\hbar}{i} D_\pm~ ,
$$
$$
m_\pm(x,y,t)\rightarrow \left[V({\cal {\bar G}}_i) \pm m({\cal  {\bar G}}_i)\right]{\cal T}=2iv\hbar M_\pm({\cal  {\bar G}}_i){\cal T}, ~M_\pm=\frac{m_\pm}{2i\hbar v}~.
$$

Within a sum over all lattice sites  $(j_x,j_y) \in \mathbb{Z}^2 \cup (\mathbb{Z}+\frac{1}{2})^2$ a spatial difference operation ($D'_k$, k=x,y) performed  on the second set of indices $i'$  of the density matrix $\rho(i;i')$ is  equivalent to minus the derivative ($-D_k$) performed on the first set $i$. 

\begin{cor}\label{cor1}
Integration by parts on the grid: 
Let $(j_x,j_y) \in \mathbb{Z}^2 \cup (\mathbb{Z}+\frac{1}{2})^2$. Let $D_k$ and  $D_k', ~k=x,y$ as in Def. \ref{def2}, respectively,  denote spatial difference operators with respect to the first and second set of indices of $\rho_{ij}$.  
Then, under periodic or zero spatial boundary conditions for $\rho$ and with Def. \ref{def1}
\be
\Tr\{D_k'\rho_{ij}( {\cal {G}}; {\cal {\bar G}}')\} =\sum_{(j_x,j_y)}D_k'\rho_{ij}( {\cal {G}}; {\cal {\bar G}}') = -\sum_{(j_x,j_y)}D_k\rho_{ij}({\cal {\bar G}}; {\cal {G}}')= -\Tr\{D_k\rho_{ij}({\cal {\bar G}}; {\cal {G}}') \}~, \mbox{ for } k\in\{x,y\}~.
\ee
\end{cor}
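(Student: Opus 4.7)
The identity is the discrete counterpart of integration by parts for centered differences, and I plan to prove it by an explicit reindexing of the lattice sum that transfers the half-integer shift inside $D_x'$ from the primed (second) argument of $\rho_{ij}$ to the unprimed (first) argument, producing the overall minus sign.

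Concretely, I would first expand $D_x'\rho_{ij}$ using the centered-difference formula of Def.~\ref{def2}, and invoke Def.~\ref{def1} to parametrize the primed indices as $(j_x+\Delta j_x,\,j_y+\Delta j_y)$ with constants $\Delta j_x,\Delta j_y$, so that the entire summand is a function of the single summation variable $(j_x,j_y)$. In the term containing $\rho(\cdots;\,j_x+\Delta j_x+\tfrac{1}{2},\cdots)$ I would then perform the substitution $j_x\to j_x-\tfrac{1}{2}$, and in the term containing $\rho(\cdots;\,j_x+\Delta j_x-\tfrac{1}{2},\cdots)$ the substitution $j_x\to j_x+\tfrac{1}{2}$. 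After the two substitutions the second index of $\rho$ has been restored to $(j_x+\Delta j_x,\,j_y+\Delta j_y)$ in both terms, while the first index has picked up an opposite $\mp\tfrac{1}{2}$ shift; regrouping then yields $-\Delta_x^{-1}\sum_{(j_x,j_y)}\bigl[\rho(j_x+\tfrac{1}{2},j_y;\,\cdots)-\rho(j_x-\tfrac{1}{2},j_y;\,\cdots)\bigr]=-\sum D_x\rho$, which is the $k=x$ claim. The $k=y$ case is identical after exchanging the roles of $x$ and $y$.

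The role of the boundary conditions is to legitimize the shift of the summation variable. Under periodic boundary conditions the maps $j_x\to j_x\pm\tfrac{1}{2}$ are bijections of the finite periodic lattice and the sum is invariant by definition; under zero boundary conditions the finitely many displaced sites that would otherwise lie beyond the boundary carry factors of $\rho$ that vanish by hypothesis, so again no boundary contribution survives. The main piece of bookkeeping, and the only nontrivial step I anticipate, is tracking the face-centered sublattice structure: the summation runs over $(j_x,j_y)\in\mathbb{Z}^2\cup(\mathbb{Z}+\tfrac{1}{2})^2$, and a half-integer shift in $x$ alone maps this set onto the complementary sublattice $\mathbb{Z}\times(\mathbb{Z}+\tfrac{1}{2})\cup(\mathbb{Z}+\tfrac{1}{2})\times\mathbb{Z}$. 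Verifying that this sublattice swap is precisely the geometric content of the $\mathcal{G}\leftrightarrow\bar{\mathcal{G}}$ exchange in the statement is the delicate point; once it is checked, the remainder of the argument is a routine telescoping computation.
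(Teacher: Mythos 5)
Your argument is correct and rests on the same mechanism as the paper's proof: expand the centered difference, use the constancy of the relative indices $(\Delta j_x,\Delta j_y)$ under the trace, shift the summation variable by half a lattice spacing (legitimate under periodic or zero boundary conditions), and recombine the two terms into minus a centered difference on the first index. The one place you genuinely diverge is the choice of shift. The paper shifts $j_x$ \emph{and} $j_y$ simultaneously by $(\pm\tfrac{1}{2},\pm\tfrac{1}{2})$, precisely so that the summation set $\mathbb{Z}^2\cup(\mathbb{Z}+\tfrac{1}{2})^2$ is mapped onto itself (the diagonal shift merely exchanges corner and face-center sites), and the offset of the difference center then appears as a $(-\tfrac{1}{2},0)$ displacement inside the summand. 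You instead shift only the coordinate being differenced, by $\mp\tfrac{1}{2}$, which carries the sum onto the complementary face-centered sublattice $\mathbb{Z}\times(\mathbb{Z}+\tfrac{1}{2})\cup(\mathbb{Z}+\tfrac{1}{2})\times\mathbb{Z}$; as you correctly anticipate, that complementary sublattice is exactly the set of spatial positions of the barred grid on which $D_k\rho_{ij}$ is centered (e.g.\ for $\rho_{12}$, the first index moves from ${\cal G}_1$-type sites to ${\cal \bar G}_2$-type sites), so the final sum is again over all sites of the relevant time sheet and nothing is lost. Your version makes the ${\cal G}\leftrightarrow{\cal \bar G}$ relabeling in the statement visible directly as the sublattice swap, at the price of having to argue that the sum over the complementary sublattice is the right notion of trace there; the paper's diagonal shift avoids changing the summation set but buries the same geometric fact in the interpretation of the shifted center. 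The remaining step you defer --- checking that the swapped sublattice is the barred grid --- is a one-line verification from the definitions of ${\cal G}_i$ and ${\cal \bar G}_i$, so the plan is complete.
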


\begin{proof}
We give the proof for  $ i=1, j=2, k=y$ 

\begin{eqnarray}
\sum_{(j_x,j_y)}D_y'\rho_{12}( {\cal {G}}^{(j_o-1/2)}_1;{\cal {\bar G}}^{(j_o)}_{1} ) &=& \sum_{(j_x,j_y)}\frac{{\rho_{12}}_{j_x,j_y ;j_{x'},j_{y'}+ \frac{1}{2}}^{j_o-\frac{1}{2} ; j_{o} } -{\rho_{12}}_{j_x,j_y ;j_{x'},j_{y'}- \frac{1}{2}}^{j_o- \frac{1}{2} ; j_{o}} }{\Delta_{y'}}\nonumber \\
&=& \frac{1}{\Delta_{y'}}\sum_{(j_x,j_y)}{\rho_{12}}_{j_x,j_y ;j_{x'},j_{y'}+ \frac{1}{2}}^{j_o-\frac{1}{2} ; j_{o} }-\frac{1}{\Delta_{y'}}\sum_{(j_x,j_y)}{\rho_{12}}_{j_x,j_y ;j_{x'},j_{y'}- \frac{1}{2}}^{j_o- \frac{1}{2} ; j_{o}}\nonumber \\
&=& \frac{1}{\Delta_{y}}\sum_{{(j_x->j_x+1/2,},{j_y->j_y+1/2)}}{\rho_{12}}_{j_x-1/2,j_y-1/2 ;j_{x'}-1/2,j_{y'}}^{j_o-\frac{1}{2} ; j_{o} }\nonumber \\
&-&\frac{1}{\Delta_{y}}\sum_{(j_x->j_x+1/2, j_y->j_y-1/2)}{\rho_{12}}_{j_x-1/2,j_y+1/2 ;j_{x'}-1/2,j_{y'}}^{j_o- \frac{1}{2} ; j_{o}}\nonumber \\
&=&-\sum_{(j_x,j_y)}D_y\rho_{12}( {\cal {\bar G}}^{(j_o-1/2)}_2;{\cal {G}}^{(j_o)}_{2} )~. 
\label{proofD'}
\end{eqnarray}
We have used that  $\Delta_{y'}=\Delta_y$ and that,  under the trace $(j_{x'}-j_x)$ and $(j_{y'}-j_y)$ are constant.   
Note that a combined shift (carried out in the third step) of $j_x$ and $j_y$ is required to stay on the proper sub-grid ${\cal G}_i$.  
This mixes corner and face-center positions of a given  time sheet.  However, since the sum is to be taken over all grid points this "integration by parts" rule holds under the trace.  Note also how the barred grids used for the definition of center position of the spatial derivative are related to the index of the density matrix elements.  Other cases for $i$, $j$, and $k$ can be shown in the same fashion.  
\end{proof}

\section{The bra-ket scheme}\label{braket}

The bra-ket time-propagation scheme is obtained directly by adapting the scheme  for the pure-state Dirac equation, applied to the bra- and the ket-side of the density operator.  This corresponds to the interpretation 
\be
{\dot \rho} = \sum_k \gamma_k \left[| {\dot \Psi_k} \rangle\langle  \Psi_k |+|  \Psi_k  \rangle\langle {\dot \Psi_k}  |\right] = \frac{1}{i\hbar} \sum_k  \gamma_k \left[ (H |  \Psi_k \rangle)\langle  \Psi_k |-|  \Psi_k 
\rangle\langle (\Psi_k |H)  \right] ~.
\ee

\begin{definition}\label{purest}
Within the representation Eq. \eqref{H} the  pure state time--evolution of  Ref. \cite{hammer3d} from initial time $"-": j_0-1/2, j_0$ to final time $"+":  j_0+1/2, j_0+1$  may  be written  as
\be
\psi_1^+=\alpha \psi_1^- + \beta \psi_2^- \label{p1}
\ee
\be
\psi_2^+=\gamma \psi_2^- + \delta \psi_1^+~, \label{p2}
\ee
with the coefficients
$$
\alpha=\frac{\frac{1}{\Delta_o}+M_+}{\frac{1}{\Delta_o}-M_+}~, ~ \alpha^*=1/\alpha~,
$$
$$
\beta=-\frac{1}{\frac{1}{\Delta_o}-M_+}D_+~,~~\beta^*=-\frac{1}{\frac{1}{\Delta_o}+M_+}D_-~,
$$
 living on integer  time sheet  $j_o$ of grid ${\cal \bar{G}}_1$, and 
$$
\gamma=\frac{\frac{1}{\Delta_o}+M_-}{\frac{1}{\Delta_o}-M_-}~, ~~\gamma^*=1/\gamma~,
$$
$$
\delta=-\frac{1}{\frac{1}{\Delta_o}-M_-}D_- ~,~~ \delta^*=-\frac{1}{\frac{1}{\Delta_o}+M_-}D_+ ~,
$$
living on half-integer time sheet $j_o+1/2$ of grid ${\cal \bar{G}}_2$.  We have used $(M_\pm)^*=-M_\pm$ which holds for real-valued mass and scalar potential $V$.  
\end{definition}

Using the short-hand notation $\rho_{11'}=\langle x,y\mid \rho(t)\mid x',y'\rangle$, the progression by one full time step $\Delta_o$ is executed as follows (initial ($-$) and final ($+$) time is indicated by the respective superscript):

$$
\rho_{1^-{1'}^-} \rightarrow \left\{\begin{array}{l} \rho_{1^+{1'}^-} = \alpha \rho_{1^-{1'}^-} + \beta \rho_{2^-{1'}^-} \\ \rho_{1^-{1'}^+} ={\alpha'}^* \rho_{1^-{1'}^-} + {\beta'}^* \rho_{1^-{2'}^-} \end{array} \right.~,
$$
$$
\rho_{1^-{2'}^-} \rightarrow \left\{\begin{array}{l} \rho_{1^+{2'}^-} = \alpha \rho_{1^-{2'}^-} + \beta \rho_{2^-{2'}^-} \\ \rho_{1^-{2'}^+} ={\gamma'}^* \rho_{1^-{2'}^-} +{\delta'}^* \rho_{1^-{1'}^+} \end{array} \right.~,
$$
$$
\rho_{2^-{1'}^-} \rightarrow \left\{\begin{array}{l} \rho_{2^+{1'}^-} = \gamma \rho_{2^-{1'}^-} + \delta \rho_{1^+{1'}^-} \\ \rho_{2^-{1'}^+} ={\alpha'}^* \rho_{2^-{1'}^-} +{\beta'}^* \rho_{2^-{2'}^-} \end{array} \right.~,
$$
$$
\rho_{2^-{2'}^-} \rightarrow \left\{\begin{array}{l} \rho_{2^+{2'}^-} = \gamma \rho_{2^-{2'}^-} + \delta \rho_{1^+{2'}^-} \\ \rho_{2^-{2'}^+} ={\gamma'}^* \rho_{2^-{2'}^-} +{\delta'}^* \rho_{2^-{1'}^+} \end{array} \right. ~.
$$
This set of operations is followed by 
$$
\rho_{1^+{1'}^+} = \alpha \rho_{1^-{1'}^+} + \beta \rho_{2^-{1'}^+} ={\alpha'}^* \rho_{1^+{1'}^-} + {\beta'}^* \rho_{1^+{2'}^-} 
$$
$$
\rho_{1^+{2'}^+} = \alpha \rho_{1^-{2'}^+} + \beta \rho_{2^-{2'}^+}  ={\gamma'}^* \rho_{1^+{2'}^-} +{\delta'}^* \rho_{1^+{1'}^+}  
$$
$$
 \rho_{2^+{1'}^+} = \gamma \rho_{2^-{1'}^+} + \delta \rho_{1^+{1'}^+}  ={\alpha'}^* \rho_{2^+{1'}^-} +{\beta'}^* \rho_{2^+{2'}^-}  
$$
$$
\rho_{2^+{2'}^+} = \gamma \rho_{2^-{2'}^+} + \delta \rho_{1^+{2'}^+} ={\gamma'}^* \rho_{2^+{2'}^-} +{\delta'}^* \rho_{2^+{1'}^+}  
$$
Executed in this order the scheme is explicit and follows exactly the single-cone time-propagation of Ref. \cite{hammer3d} applied to both sets of indices  of the density matrix, with unprimed and primed operators acting on, respectively, bra and ket.  As such all properties of the scheme for the pure-state propagation, such as  stability, convergence, and spectral properties, are carried over to the scheme for the density matrix.

\begin{definition}\label{bra-ket}
With the abbreviations defined above, the single time-step progression under the bra--ket scheme is defined as
\be
\rho_{1^+{1'}^+} =  \alpha {\alpha'}^* \rho_{1^-{1'}^-} + \alpha{\beta'}^* \rho_{1^-{2'}^-} + \beta {\alpha'}^* \rho_{2^-{1'}^-} +\beta{\beta'}^* \rho_{2^-{2'}^-} , \label{BK1}
\ee
\be
\rho_{1^+{2'}^+} = \alpha {\gamma'}^* \rho_{1^-{2'}^-} +\beta{\gamma'}^* \rho_{2^-{2'}^-}+ {\delta'}^* \rho_{1^+{1'}^+} ~,
\ee
\be
 \rho_{2^+{1'}^+} = \gamma {\alpha'}^* \rho_{2^-{1'}^-} +\gamma{\beta'}^* \rho_{2^-{2'}^-}+ \delta  \rho_{1^+{1'}^+}  ~,
\ee
\be
\rho_{2^+{2'}^+} ={\gamma'}^*\left(\gamma  + {\delta}{\beta}\right) \rho_{2^-{2'}^-} +\delta\alpha {\gamma'}^*\rho_{1^-{2'}^-} + {\delta'}^*\rho_{2^+{1'}^+} =\\
\gamma\left({\gamma'}^*  + {\delta'}^*{\beta'}^*\right) \rho_{2^-{2'}^-} +\gamma{\delta'}^*{\alpha'}^*\rho_{2^-{1'}^-} + \delta\rho_{1^+{2'}^+} 
 ~. \label{BK4}
\ee
\end{definition}
Def. \ref{bra-ket} extends the single-cone pure-state time-propagation scheme in Def. \ref{purest} to the one for mixed-states.  For pure states, the latter reduces to the former. 

The following lemmas and propositions pertaining to the bra-ket scheme are a direct consequence of the properties of the pure state scheme Ref. \cite{hammer3d} and the observation that the density operator at initial time, without loss of generality, 
can be written as as sum of pure-state contributions of the form Eq. \ref{pures}.

\begin{lem}\label{L2}
The bra-ket scheme conserves positivity and Hermiticity of the density matrix.
\end{lem}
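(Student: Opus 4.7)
The plan is to reduce the lemma to the corresponding (known) properties of the pure-state scheme by combining linearity with a spectral decomposition of $\rho$. Specifically, I would first write the density matrix at the initial time as a convex combination of pure-state projectors,
\begin{equation}
\rho(t_-)=\sum_k\gamma_k\,|\Psi_k\rangle\langle\Psi_k|,\qquad \gamma_k\ge 0,
\end{equation}
which is always possible for a Hermitian, positive-semidefinite $\rho$ by the spectral theorem; on the grid this simply means expanding $\rho$ in a basis of grid pure states of the form Eq.~\eqref{pures}. Because the coefficients $\alpha,\beta,\gamma,\delta$ and their primed counterparts in Def.~\ref{bra-ket} do not depend on the state, every update equation in Eqs.~\eqref{BK1}--\eqref{BK4} is linear in the matrix elements $\rho_{i^-{j'}^-}$, so the single-step map commutes with this decomposition.

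Second, I would verify, by direct inspection of Def.~\ref{bra-ket}, that a single pure-state contribution $|\Psi_k\rangle\langle\Psi_k|$ is propagated to $(U|\Psi_k\rangle)(U|\Psi_k\rangle)^\dagger$, where $U$ is the pure-state single-cone propagator of Def.~\ref{purest} in Ref.~\cite{hammer3d}. This is exactly what is asserted in the paragraph preceding the lemma: the unprimed operators $\alpha,\beta,\gamma,\delta$ implement $U$ on the ket indices, while the primed and complex-conjugated operators ${\alpha'}^*,{\beta'}^*,{\gamma'}^*,{\delta'}^*$ implement $U^\dagger$ from the right on the bra indices, because $(M_\pm)^*=-M_\pm$ and $D_\pm^{\,*}=-D_\mp$ for real mass and potential. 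Substituting the pure-state ansatz into Eqs.~\eqref{BK1}--\eqref{BK4} and factoring the right-hand sides then reproduces $(\alpha\psi_1^- + \beta\psi_2^-)(\alpha\psi_1^- + \beta\psi_2^-)^*$ and the analogous bilinears for the other three components.

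Given this factorization, both conservation properties are immediate. Hermiticity: each pure-state image $(U|\Psi_k\rangle)(U|\Psi_k\rangle)^\dagger$ is manifestly self-adjoint, and the weights $\gamma_k$ are real, hence $\rho(t_+)=\sum_k\gamma_k(U|\Psi_k\rangle)(U|\Psi_k\rangle)^\dagger$ is self-adjoint. Positivity: each rank-one term is positive semidefinite and $\gamma_k\ge 0$, so the sum is positive semidefinite. Iterating one time-step at a time carries these properties to arbitrary times.

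The main obstacle, and essentially the only nontrivial point, is the explicit verification of the factorization $\rho\mapsto U\rho U^\dagger$ for a single pure state under the staggered, partially implicit ordering of Def.~\ref{bra-ket}: the updates for $\rho_{1^+{2'}^+}$, $\rho_{2^+{1'}^+}$, and $\rho_{2^+{2'}^+}$ reference quantities that have themselves just been advanced (e.g.\ $\rho_{1^+{1'}^+}$), and one must check that the two equivalent expressions given for $\rho_{2^+{2'}^+}$ agree and that no spurious cross-terms are generated. This reduces to checking that the bra- and ket-updates, although executed sequentially and on staggered time sheets, commute on pure-state tensor products — a property that follows from the fact that $D_\pm$ and ${D'}_\pm$ act on disjoint sets of indices and that the single-cone pure-state scheme of Ref.~\cite{hammer3d} is itself a well-defined linear map $U$ on each time sheet.
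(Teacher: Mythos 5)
Your proposal is correct and takes essentially the same route as the paper, which simply observes that the bra--ket update is by construction of Kraus form $\rho\mapsto K\rho K^{\dagger}$ with $K$ the pure-state one-step propagator, so that Hermiticity is immediate and positivity follows from $\langle\Phi|K\rho K^{\dagger}|\Phi\rangle=\langle K^{\dagger}\Phi|\rho|K^{\dagger}\Phi\rangle\ge 0$ (the spectral decomposition is not even needed for that step). One minor slip: the identity you invoke should read $(D_\pm)^{*}=D_\mp$ rather than $-D_\mp$; the minus sign is already built into the definitions of $\beta$ and $\delta$, and with the corrected sign your factorization argument matches the conjugated coefficients recorded in Def.~\ref{purest}.
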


\begin{proof}
Both properties are a direct result of the bra--ket scheme:  Let, at initial time the density matrix $\rho_o=\sum_k \gamma_k \mid\Psi_k\rangle  \langle \Psi_k \mid$ be positive definite, i.e., 

$$
\langle \Phi\mid \rho_o \mid \Phi\rangle =\sum_{\i, \j} \langle \Phi\mid i \rangle \rho_o(i;j) \langle j \mid \Phi\rangle \ge 0 
$$
for an arbitrary element $\mid \Phi\rangle$ of the Hilbert space. 
Abbreviating the one-step time evolution of a pure state Eqs. \eqref{p1} and \eqref{p2} by $\Psi^+=K \Psi^-$, 
\bea
K=\left( \begin{array}{cc} \alpha & \beta \nonumber\\ \delta \alpha  & (\gamma +\delta\beta) \end{array}\right)
\eea
the bra-ket scheme propagates 
$$
\langle \Phi\mid \rho_o \mid \Phi\rangle \rightarrow  \langle \Phi\mid K\rho_o K^\dagger \mid \Phi\rangle = \langle \Phi'\mid \rho_o \mid \Phi'\rangle \ge 0 , ~ \mid \Phi'\rangle=K^\dagger \mid \Phi\rangle  
$$
since positivity holds for $\rho_o$.   Hence under the bra-ket scheme positivity is preserved  for arbitrary time step.   Preservation of Hermiticity is seen from  $\left(K\rho_o K^\dagger\right)^\dagger= K\rho_o K^\dagger$. 
\end{proof}

Note that the standard norm is not conserved within the pure-state scheme since  $K$ is not unitary.\cite{hammer3d}  Hence, although the time-evolution superoperator can be cast in Kraus form with a single Kraus operator (living on two adjacent time-sheets), we do not have $K^\dagger K=1$.\cite{choi,kraus}

\begin{lem}\label{L1}
Under periodic, respectively, zero boundary conditions, the functional
\be
E_\mathbf{r}^0 = \Tr\{ \rho_{11}({\cal G}^{(j_o-1/2)}_1;{\cal G}^{(j_o-1/2)}_1) + \rho_{22}({\cal G}_2^{(j_o)};{\cal G}^{(j_o)}_2)+
\Delta_o  \Re\big[D_- 
\rho_{12}({\cal {\bar G}}^{(j_o-1/2)}_2;{\cal G}^{(j_o)}_2)\big] \} , \label{tr2} 
\ee
with ${\bf r}=(r_x=\Delta_o/\Delta_x, r_y=\Delta_o/\Delta_y)$,  is conserved under the bra-ket time propagation, Eqs.\ref{BK1} to \ref{BK4}.
\end{lem}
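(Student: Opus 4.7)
The plan is to reduce the mixed-state conservation claim to the pure-state conservation established in Ref.~\cite{hammer3d} by exploiting the linearity of the bra-ket scheme. As shown in the proof of Lemma~\ref{L2}, one bra-ket time step acts as $\rho \mapsto K \rho K^\dagger$ with the pure-state propagator $K$ of Def.~\ref{purest}. Decomposing an admissible initial density matrix as $\rho_o = \sum_k \gamma_k \mid\Psi_k\rangle\langle \Psi_k\mid$ with real weights $\gamma_k$, the evolved density matrix is $\rho(t) = \sum_k \gamma_k \mid\Psi_k(t)\rangle\langle \Psi_k(t)\mid$, where each spinor propagates under the pure-state scheme. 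Since the trace, the difference operator $D_-$, and $\Re$ are all real-linear, $E_\mathbf{r}^0$ is real-linear in $\rho$, so $E_\mathbf{r}^0[\rho(t)] = \sum_k \gamma_k\, E_\mathbf{r}^0[\mid\Psi_k(t)\rangle\langle \Psi_k(t)\mid]$. It therefore suffices to prove conservation for a single pure-state projector.

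The next step is to specialize Eq.~\ref{tr2} to $\rho_{ij} = \psi_i \psi_j^*$. The two diagonal trace terms collapse to the lattice sums $\sum |\psi_1|^2$ on ${\cal G}_1^{(j_o - 1/2)}$ and $\sum |\psi_2|^2$ on ${\cal G}_2^{(j_o)}$. For the off-diagonal term, since $D_-$ acts only on the first (bra) coordinate, $D_- \rho_{12} = (D_-\psi_1)\,\psi_2^*$, and taking the spatial trace with the real part produces $\Delta_o \Re \sum (D_-\psi_1)\,\psi_2^*$. I expect this combination to coincide with the norm-like functional that is shown to be conserved by the single-cone pure-state scheme of Ref.~\cite{hammer3d}; once this identification is made, conservation of $E_\mathbf{r}^0$ under $\rho \mapsto K\rho K^\dagger$ follows from the pure-state result, applied separately to each $k$ in the mixture.

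The main obstacle I foresee is the staggered-grid bookkeeping needed to verify this identification. The derivative $D_- \rho_{12}$ lives on ${\cal \bar{G}}_2^{(j_o - 1/2)}$ in its first argument and on ${\cal G}_2^{(j_o)}$ in its second, whereas $\psi_1$ sits on ${\cal G}_1^{(j_o - 1/2)}$ and $\psi_2$ on ${\cal G}_2^{(j_o)}$, which differ by half-integer shifts in $(j_x, j_y)$. Checking that the summands land on matching spatial sites, so that the real part is well defined and equal to the cross term in the pure-state invariant of Ref.~\cite{hammer3d}, requires careful tracking of $\pm 1/2$ shifts between corner and face-center positions, in the same spirit as the re-indexing carried out in the proof of Cor.~\ref{cor1}. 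The periodic or zero boundary assumption enters precisely at this step, eliminating boundary contributions during the re-indexing; once the identification is complete, the lemma follows.
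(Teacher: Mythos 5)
Your proposal follows essentially the same route as the paper's own proof: decompose the initial density matrix into weighted pure-state projectors, observe that the bra-ket scheme propagates each projector exactly as the pure-state scheme of Ref.~\cite{hammer3d}, and use real-linearity of the functional to conclude $E_\mathbf{r}^0=\sum_k\gamma_k {E_\mathbf{r}^0}^{(k)}$ with each ${E_\mathbf{r}^0}^{(k)}$ conserved by the pure-state result. The staggered-grid bookkeeping you flag as the remaining verification is likewise left implicit in the paper, which simply identifies the specialized functional with the known pure-state invariant.
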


\begin{proof} 
The proof of this lemma as well as the following two propositions rest on the fact that the density matrix at initial time may be cast into diagonal form and then, under the scheme, is placed onto the grid according to

\be
\rho= \sum_k \gamma_k \mid\Psi ^{(k)}\rangle \langle \Psi^{(k)}\mid =   
\sum_k \gamma_k \left(\begin{array}{cc} \psi_1^{(k)} {\psi_1^{(k)}}^* & \psi_1^{(k)} {\psi_2^{(k)}}^* \\ \psi_2^{(k)} {\psi_1^{(k)}}^* & \psi_2^{(k)} {\psi_2^{(k)}}^*\end{array}\right) \rightarrow \sum_k \gamma_k \left(\begin{array}{cc} \psi_1^{(k)}({\cal  G}_1){\psi_1^{(k)}({\cal  G}_1)}^* & \psi_1^{(k)}({\cal  G}_1) {\psi_2^{(k)}({\cal  G}_2})^* \\ \psi_2^{(k)}({\cal  G}_2) {\psi_1^{(k)}({\cal  G}_1})^* & \psi_2^{(k)}({\cal  G}_2) {\psi_2^{(k)}({\cal  G}_2})^*\end{array}\right) , \label{mixeds}
\ee
where $\mid \Psi^{(k)}\rangle $ are normalized to one, $0\leq \gamma_k \leq 1$ real, and $\sum_k \gamma_k =1$.  Since each term in this sum is propagated individually within this scheme, all the stability properties of the pure-state dynamics apply.  The  
conserved functional for the individual pure-state contribution,  ${E_{\mathbf{r}}^0}^{(k)}$,  and the definition of the trace on the grid  in Eq. \eqref{gridtrace} as a sum over all lattice sites of the respective time-sheet  render the conserved functional for the density operator  $\sum_k \gamma_k {E_{\mathbf{r}}^0}^{(k)}=E_\mathbf{r}^0$ .
\end{proof}

\begin{prop}\label{P1}
Let $r_x=\frac{\Delta_o}{\Delta_x}$, $r_y=\frac{\Delta_o}{\Delta_y}$ with $r_x+r_y< 1$ (e.g.~using $r_x=r_y <  1/2$). Then, the bra-ket scheme Eqs.\ref{BK1} to \ref{BK4} is stable and satisfies the estimate
\be
\Tr\{ \rho_{11}({\cal G}^{(j_o-1/2)}_1;{\cal G}^{(j_o-1/2)}_1) + \rho_{22}({\cal G}_2^{(j_o)};{\cal G}^{(j_o)}_2)\} \leq \frac{E^0_\mathbf{r}}{1- r_x - r_y} 
\label{stab1}
\ee
for all time.
\end{prop}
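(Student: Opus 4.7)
The plan is to combine Lemma \ref{L1} (conservation of the functional $E_{\mathbf{r}}^0$) with an a priori bound on the off-diagonal cross term $\Delta_o\,\Re[D_-\rho_{12}]$ in terms of the diagonal trace $\Tr\{\rho_{11}+\rho_{22}\}$. Since $E_{\mathbf{r}}^0$ is time-independent, any estimate of the form
\[
-\Delta_o\,\Re[D_-\rho_{12}(\bar{\mathcal{G}}_2^{(j_o-1/2)};\mathcal{G}_2^{(j_o)})] \;\leq\; (r_x+r_y)\,\Tr\{\rho_{11}+\rho_{22}\}
\]
at the current time immediately yields the claim after rearrangement:
\[
\Tr\{\rho_{11}+\rho_{22}\} = E_{\mathbf{r}}^0 - \Delta_o\,\Re[D_-\rho_{12}] \leq E_{\mathbf{r}}^0 + (r_x+r_y)\Tr\{\rho_{11}+\rho_{22}\},
\]
so that $(1-r_x-r_y)\Tr\{\rho_{11}+\rho_{22}\}\leq E_{\mathbf{r}}^0$ under the assumption $r_x+r_y<1$.

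To establish the cross-term estimate, I would use the pure-state decomposition \eqref{mixeds}. Since each ket $|\Psi^{(k)}\rangle$ is propagated individually by the single-cone pure-state scheme and the $\gamma_k\geq 0$ are time-independent, it suffices to prove the inequality for a single pure state and then sum with weights. For the pure-state contribution, $\rho_{12}=\psi_1\psi_2^*$ with $\psi_1$ on $\mathcal{G}_1$ and $\psi_2$ on $\mathcal{G}_2$, so by Definition \ref{def2},
\[
\Delta_o\,D_-(\psi_1\psi_2^*)\big|_{j_x,j_y} = r_y\bigl[(\psi_1\psi_2^*)_{j_x,j_y+1/2}-(\psi_1\psi_2^*)_{j_x,j_y-1/2}\bigr]+ i\,r_x\bigl[(\psi_1\psi_2^*)_{j_x+1/2,j_y}-(\psi_1\psi_2^*)_{j_x-1/2,j_y}\bigr].
\]
Taking the real part, summing over $(j_x,j_y)$, and applying Cauchy--Schwarz (or AM--GM, $|ab|\leq \tfrac12(|a|^2+|b|^2)$) to each of the four resulting bilinear terms, followed by the shift-invariance of the full lattice sum (as exploited in the proof of Corollary \ref{cor1}), collapses each direction's contribution to a bound of the form $r_\nu\sum_{(j_x,j_y)}(|\psi_1|^2+|\psi_2|^2)$, for $\nu=x,y$.

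Summing these contributions and then summing over $k$ with weights $\gamma_k$ produces exactly $(r_x+r_y)\Tr\{\rho_{11}+\rho_{22}\}$ on the right-hand side, and this combined with Lemma \ref{L1} yields \eqref{stab1} uniformly in time, which is the required stability statement in the appropriate discrete norm. The main technical obstacle is to carry out the Cauchy--Schwarz/AM--GM step cleanly while keeping track of the face-centered sub-lattice structure: the shifted terms $(\psi_1\psi_2^*)_{j_x\pm 1/2,j_y}$ and $(\psi_1\psi_2^*)_{j_x,j_y\pm 1/2}$ live on different sub-grids of $\mathcal{G}_1$ and $\mathcal{G}_2$, respectively, and only after re-indexing by a combined half-step shift (as in Eq.~\eqref{proofD'}) do the sums over the full lattice $\mathbb{Z}^2\cup(\mathbb{Z}+1/2)^2$ line up so that the bound reduces to $(r_x+r_y)$ times the conserved grid trace.
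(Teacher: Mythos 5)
Your proof is correct, and it reaches the estimate by a more explicit route than the paper does. The paper's own proof of Prop.~\ref{P1} is a single sentence: decompose $\rho$ into pure states via Eq.~\eqref{mixeds} and invoke, term by term, the stability estimate already established for the pure-state scheme in the cited reference, so that the role of the conserved functional and of the cross-term bound is left implicit in that reference. You instead supply the mechanism: conservation of $E^{0}_{\mathbf r}$ (Lemma~\ref{L1}) combined with the a priori bound $\big|\Delta_o\,\Re[D_-\rho_{12}]\big|\le (r_x+r_y)\,\Tr\{\rho_{11}+\rho_{22}\}$, obtained from the pure-state factorization $\rho_{12}=\psi_1\psi_2^*$, the inequality $2|\Re(\bar u v)|\le |u|^2+|v|^2$, and shift invariance of the full lattice sum over $\mathbb{Z}^2\cup(\mathbb{Z}+\tfrac12)^2$ as in Cor.~\ref{cor1}. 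This is precisely the technique the paper itself writes out later for the direct scheme in the proof of Prop.~\ref{P5} (the chain of estimates culminating in $2(r_x+r_y)\Tr\{\rho_{11}+\rho_{22}\}$), where the cross term carries an extra factor of $2$ and accordingly yields the denominator $1-2(r_x+r_y)$; your version is the bra-ket analogue with a single cross term and denominator $1-r_x-r_y$, so the two are mutually consistent. Two minor points: in your expansion of $D_-$ the $x$-term should carry $-i\,r_x$ rather than $+i\,r_x$ (Def.~\ref{def2} has $D_-=D_y-iD_x$), which is immaterial once absolute values are taken; and the final rearrangement tacitly uses $\Tr\{\rho_{11}+\rho_{22}\}\ge 0$, which is guaranteed here by positivity preservation under the bra-ket scheme (Lemma~\ref{L2}).
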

\begin{proof}
Using the decomposition Eq. \eqref{mixeds} stability can be shown term by term in the sum.  
\end{proof}
Comment: Conservation of $E_\mathbf{r}^0$ on the grid corresponds to the conservation of  the trace of $\rho$ in the continuum limit.  At the same time the conservation of the former within this scheme implies non-conservation  of the trace when the latter is defined  on the grid according to Eq. \eqref{gridtrace}.   This non-conservation is of order $\Delta_o/\Delta$ and thus can be adjusted by  this ratio.

From the analysis of the pure-state scheme we know that Prop.  \ref{P1} is too restrictive.  In fact, the stability condition  $r_x^2+r_y^2 \leq 1$ holds for constant mass and potential.  For the special case $r=r_x=r_y$ this  less restrictive stability condition reads $r \leq 1/\sqrt{2}$. This stability condition also holds for an arbitrary space- and time-dependent mass and potential.

\begin{definition}
We define the functionals\footnote{For compactness of notation  the diagonal time-index is placed onto $\rho$ as a superscript.}
\bea
{\tilde \Tr}(\rho_{11}^{j_o-\frac{1}{2}})& = & \frac{1}{2}\sum_{ j_x,j_y} \left[ \rho_{11}^{j_o-\frac{1}{2}}( j_x,j_y; j_x,j_y) + \frac{1}{2}\left[ \Re\bigg\{ \rho_{11}^{j_o-\frac{1}{2}}( j_x,j_y+1; j_x,j_y) \bigg\}  +  \Re\bigg\{ \rho_{11}^{j_o-\frac{1}{2}}( j_x+\frac{1}{2},j_y+\frac{1}{2}; j_x-\frac{1}{2},j_y+\frac{1}{2}) \bigg\} \right. \right.  \nonumber  \\
 &   +  &  \left. \left.
 \Im\bigg\{ \rho_{11}^{j_o-\frac{1}{2}}( j_x+\frac{1}{2},j_y+\frac{1}{2}; j_x,j_y) \bigg\} + \Im\bigg\{ \rho_{11}^{j_o-\frac{1}{2}}( j_x+\frac{1}{2},j_y+\frac{1}{2}; j_x,j_y+1) \bigg\}  \right. \right. \\ \nonumber 
  &   +  &  \left. \left. \Im\bigg\{ \rho_{11}^{j_o-\frac{1}{2}}( j_x-\frac{1}{2},j_y+\frac{1}{2}; j_x,j_y) \bigg\}+  \Im\bigg\{ \rho_{11}^{j_o-\frac{1}{2}}( j_x-\frac{1}{2},j_y+\frac{1}{2}; j_x,j_y+1) \bigg\}  \right] \right]  \label{TRU}
\eea
\bea
{\tilde \Tr}(\rho_{22}^{j_o})& = & \frac{1}{2}\sum_{ j_x,j_y} \left[ \rho_{22}^{j_o}( j_x,j_y-\frac{1}{2}; j_x,j_y-\frac{1}{2}) + \frac{1}{2}\left[ \Re\bigg\{ \rho_{22}^{j_o}( j_x,j_y+\frac{1}{2}; j_x,j_y-\frac{1}{2}) \bigg\}  +  \Re\bigg\{ \rho_{22}^{j_o}( j_x+\frac{1}{2},j_y; j_x-\frac{1}{2},j_y) \bigg\} \right. \right.  \nonumber  \\
 &   +  &  \left. \left.
 \Im\bigg\{ \rho_{22}^{j_o}( j_x+\frac{1}{2},j_y; j_x,j_y-\frac{1}{2}) \bigg\} + \Im\bigg\{ \rho_{22}^{j_o}( j_x+\frac{1}{2},j_y; j_x,j_y+\frac{1}{2}) \bigg\}  \right. \right. \\ \nonumber 
  &   +  &  \left. \left. \Im\bigg\{ \rho_{22}^{j_o}( j_x-\frac{1}{2},j_y; j_x,j_y-\frac{1}{2}) \bigg\}+  \Im\bigg\{ \rho_{22}^{j_o}( j_x-\frac{1}{2},j_y; j_x,j_y+\frac{1}{2}) \bigg\}  \right] \right]   ~.\label{TRV}
\eea

Again, the summation runs over all lattice sites of  time sheet ${\cal G}_l^{j_l}$, i.e., $j_x,j_y \in \mathbb{Z}^2 \cup (\mathbb{Z}+\frac{1}{2})^2$, with $j_1=j_0-\frac{1}{2}$ and $j_2=j_0$ and $j_o \in \mathbb{Z}$.
\end{definition}

\begin{prop}\label{P2}
Let $r=r_x=r_y=1/\sqrt{2}$ hold in \eqref{BK1}-\eqref{BK4}. Then this scheme is stable and for all time satisfies 
\be
  {\tilde \Tr}(\rho_{11}^{j_o-1/2}) + {\tilde \Tr}(\rho_{22}^{j_o})  \le 2E^0~, \mbox{ with } E^0= E^0_{1/\sqrt{2},1/\sqrt{2}}~.\label{BKL3}
\ee
\end{prop}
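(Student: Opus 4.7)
The plan is to reduce Prop.~\ref{P2} to the pure-state case via the mixture decomposition already used in Lemma~\ref{L1} and Prop.~\ref{P1}, and then to establish a time-independent pointwise inequality
$$\tilde{\Tr}(\rho_{11}^{j_o-1/2}) + \tilde{\Tr}(\rho_{22}^{j_o}) \le 2\, E^0_{1/\sqrt{2},1/\sqrt{2}}$$
between the two functionals at any fixed time-slice. Since $E^0$ is conserved under the bra-ket propagation by Lemma~\ref{L1}, stability then follows at once: the right-hand side is constant in $j_o$ and equals its value on the initial data.

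First I would write $\rho = \sum_k \gamma_k \mid\Psi^{(k)}\rangle\langle\Psi^{(k)}\mid$ as in Eq.~\eqref{mixeds}. Because the bra-ket scheme propagates each pure-state summand independently, and both $\tilde{\Tr}(\rho_{ii})$ and $E^0$ are linear in $\rho$, it is enough to prove the pointwise estimate for a single pure state. In that case $\rho_{11}(a;b)=\psi_1(a)\psi_1^*(b)$, $\rho_{22}(a;b)=\psi_2(a)\psi_2^*(b)$ and $\rho_{12}(a;b)=\psi_1(a)\psi_2^*(b)$, so $\tilde{\Tr}(\rho_{11})$ and $\tilde{\Tr}(\rho_{22})$ in Eqs.~\eqref{TRU}--\eqref{TRV} become sums of real and imaginary parts of bilinears $\psi_i(a)\psi_i^*(b)$ over nearby lattice sites, while $\Delta_o\,\Re[D_-\rho_{12}]$ in $E^0$ becomes a sum of bilinears $\Re[\psi_1(a)\psi_2^*(b)]$ weighted by $r_x$ and $r_y$ (coming from the factors $1/\Delta_x$, $1/\Delta_y$ in $D_\pm$ combined with the overall $\Delta_o$).

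The core step is then the elementary inequality $2|\Re(zw^*)|,\,2|\Im(zw^*)|\le |z|^2+|w|^2$, applied to each off-diagonal bilinear on both sides. On the left, each shifted term inside $\tilde{\Tr}(\rho_{ii})$ is dominated by an average of two $|\psi_i|^2$ values at neighbouring sites, so after using the $1/2$ prefactors in Eqs.~\eqref{TRU}--\eqref{TRV} one arrives at an upper bound involving only $\sum |\psi_1|^2$ over ${\cal G}_1^{j_o-1/2}$ and $\sum |\psi_2|^2$ over ${\cal G}_2^{j_o}$. On the right, the cross-spinor piece $\Delta_o\,\Re[D_-\rho_{12}]$ is bounded below by minus a weighted average of $|\psi_1|^2$ and $|\psi_2|^2$ values; with $r_x=r_y=1/\sqrt{2}$ the weights are precisely those needed so that twice $E^0$ dominates the combined upper envelope of $\tilde{\Tr}(\rho_{11})+\tilde{\Tr}(\rho_{22})$.

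The main obstacle is the combinatorial bookkeeping: verifying that the particular shifted sites appearing in Eqs.~\eqref{TRU}--\eqref{TRV} and the sites arising from $D_-$ acting on $\rho_{12}$ on the staggered sub-grids ${\cal G}_1,{\cal G}_2,{\cal{\bar G}}_2$ line up so that, after the AM-GM step, every $|\psi_1|^2$ and $|\psi_2|^2$ contribution is counted with the same weight on both sides, with no over-counting between corner- and face-centre positions. The value $r=1/\sqrt{2}$ is dictated by requiring this AM-GM bound to saturate: it is the unique ratio for which the $D_-\rho_{12}$ term in $E^0$ exactly balances the off-diagonal shifts selected in the definition of $\tilde{\Tr}$, mirroring the pure-state stability analysis of Ref.~\cite{hammer3d}. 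Once the pointwise estimate is established, Lemma~\ref{L1} supplies conservation of $E^0$ in time and hence the claimed bound \eqref{BKL3} for all $j_o$.
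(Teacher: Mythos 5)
Your proposal follows essentially the same route as the paper's proof: decompose $\rho$ into $\gamma_k$-weighted pure states via Eq.~\eqref{mixeds}, identify ${\tilde \Tr}(\rho_{11}^{j_o-1/2})+{\tilde \Tr}(\rho_{22}^{j_o})$ for each pure state with the averaged norms $\|\tilde{\psi_1}\|^2+\|\tilde{\psi_2}\|^2$, bound these by $2E^0$ at fixed time, and then use conservation of $E^0$ (Lemma~\ref{L1}) to conclude. The only difference is that you sketch the AM--GM/complete-the-square verification of the pure-state inequality yourself while conceding the lattice bookkeeping, whereas the paper outsources precisely that step to Ref.~\cite{hammer3d}.
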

\begin{proof}
This inequality has been shown to hold for pure states based on the definition of an averaged norm.\cite{hammer3d}  In the present representation of the Dirac Hamiltonian Eq. \eqref{H} one defines for  upper  and lower spinor component $\psi_1$ and $\psi_2$, respectively, 
\be
\left\|\tilde{\psi_1^{j_o-\frac{1}{2}}}\right\|^2 := \sum_{(j_x,j_y)} \bigg|\frac{{\psi_1^{j_o-\frac{1}{2}}}_{j_x+\frac{1}{2} , j_y+\frac{1}{2}}-i {\psi_1^{j_o-\frac{1}{2}}}_{j_x+1,j_y}}{2\sqrt{2}} + \frac{{\psi_1^{j_o-\frac{1}{2}}}_{j_x+\frac{1}{2},j_y-\frac{1}{2}}-i {\psi_1^{j_o-\frac{1}{2}}}_{j_x ,j_y}}{2\sqrt{2}}\bigg|^2 ~,
\ee
and 

\be
\left\|\tilde{\psi_2^{j_o}}\right\|^2 := \sum_{(j_x,j_y)} \bigg|\frac{{\psi_2^{j_o}}_{j_x , j_y+\frac{1}{2}}-i {\psi_2^{j_o}}_{j_x+\frac{1}{2},j_y}}{2\sqrt{2}} + \frac{{\psi_2^{j_o}}_{j_x,j_y-\frac{1}{2}}-i {\psi_2^{j_o}}_{j_x-\frac{1}{2} ,j_y}}{2\sqrt{2}}\bigg|^2 ~,
\ee
with  $j_x,j_y \in \mathbb{Z}^2 \cup (\mathbb{Z}+1/2)^2$ .
For a pure state the averaged norm  
can be recast into the expression \eqref{TRU} for $\rho_{11}$ on subgrid ${\cal G}_1^{j_o-\frac{1}{2}}$.   Similarly, using the averaged norm for components $\psi_2$ one arrives at \eqref{TRV}  for subgrid ${\cal G}_2^{j_o}$.
 Using the pure-state decomposition  Eq. \eqref{mixeds} , adding the $\gamma_j$ weighted contributions for individual pure states  verifies inequality  \eqref{BKL3} for a general density matrix.
\end{proof}

\section{Direct scheme}\label{directs}

An alternative, direct  scheme  using the same placement of density matrix elements on the lattice formulated is formulated here.  It allows centered difference quotients over one lattice spacing for an approximation of all first-order partial derivatives thus eliminating the source for fermion doubling.    Rather than  following the separate bra-ket application of the pure-state scheme, however, it is based on a simultaneous time propagation of the bra and ket indices thereby reducing the number of operations required per time step.

The compact global form oft his numerical scheme may be formulated  as an initial value problem: at initial time the four density  matrix elements are stored on the two adjacent time sheets, ${\cal G}_1^{(j_o-1/2)}$ and ${\cal G}_2^{(j_o)}$,  associated, respectively, with time  $j_o-1/2$ for grid ${\cal G}_1$ and time $j_o$ for grid ${\cal G}_2$.  Replacing partial derivatives in Eq. \ref{DVN} by difference quotients, as detailed in Sect \ref{placement}, on obtains for this direct scheme:
\begin{definition}\label{defdi}
\begin{eqnarray}
 D_o \rho_{11}({\cal {\bar G}}_1^{(j_o)};{\cal {\bar G}}_1^{(j_o)'})  & = & \left[M_+({\cal {\bar G}}_1^{(j_o)})-M_+({\cal {\bar G}}_1^{(j_o)'}) \right]{\cal T} \rho_{11}({\cal {\bar G}}_1^{(j_o)};{\cal {\bar G}}_1^{(j_o)'}) \nonumber \\
 & -  &D_+\rho_{21}({\cal {\bar G}}_1^{(j_o)};{\cal G}_1^{(j_o-1/2)'}) 
-  D_-'\rho_{12}({\cal  G}_1^{(j_o-1/2)};{\cal  {\bar G}}_1^{(j_o)'}) ~,
\label{11}
\end{eqnarray}
\begin{eqnarray}
D_o  \rho_{12}({\cal {\bar G}}_1^{(j_o)};{\cal {\bar G}}_2^{(j_o+1/2)'}) & =  & \left[M_+({\cal {\bar G}}_1^{(j_o)})-M_- ({\cal {\bar G}}_2^{(j_o+1/2)'}\right]{\cal T}\rho_{12}({\cal {\bar G}}_1^{(j_o)};{\cal {\bar G}}_2^{(j_o+1/2)'}) \nonumber \\
& -  & D_+\rho_{22}({\cal {\bar G}}_1^{(j_o)};{\cal G}_2^{(j_o)'}) 
- D_+' \rho_{11}({\cal  G}_1^{(j_o+1/2)};{\cal {\bar G}}_2^{(j_o+1/2)'}) ~,
\label{12}
\end{eqnarray}
\begin{eqnarray}
D_o  \rho_{21}({\cal {\bar G}}_2^{(j_o+1/2)};{\cal {\bar G}}_1^{(j_o)'}) & =  &  \left[M_-({\cal {\bar G}}_2^{(j_o+1/2)})-M_+({\cal {\bar G}}_1^{(j_o)'})\right]{\cal T}\rho_{21}({\cal {\bar G}}_2^{(j_o+1/2)};{\cal {\bar G}}_1^{(j_o)'})   \nonumber  \\ 
&- &  D_- \rho_{11}({\cal  {\bar G}}^{(j_o+1/2)}_2;{\cal G}_1^{(j_o+1/2)'}) 
- D_-' \rho_{22}({\cal  G}_2^{(j_o)};{\cal {\bar G}}_1^{(j_o)'})  ~,
\label{21}
\end{eqnarray}
and
\begin{eqnarray}
D_o  \rho_{22}({\cal {\bar G}}^{(j_o+1/2)}_2;{\cal {\bar G}}^{(j_o+1/2)'}_2) & =  &  \left[M_-({\cal {\bar G}}_2^{(j_o+1/2)})-M_-({\cal {\bar G}}_2^{(j_o+1/2)'})\right]{\cal T}\rho_{22}({\cal {\bar G}}_2^{(j_o+1/2)};{\cal {\bar G}}_2^{(j_o+1/2)'})  \nonumber \\
&   - &  D_- \rho_{12}({\cal {\bar G}}_2^{(j_o+1/2)};{\cal  G}_2^{(j_o+1)'})   - D_+'\rho_{21}({\cal  G}_2^{(j_o+1)};{\cal {\bar G}}_2^{(j_o+1/2)'}) ~.
\label{22}
\end{eqnarray}
\end{definition}
Eqs. (\ref{12}) and (\ref{21}) are equivalent.    The explicit form of the scheme for specific lattice sites is given in the Appendix.  

The key, as for the pure state case, is a combined staggering in space and time which allows the calculation of symmetric difference quotients for each of the first-order partial derivatives.  
Note  that the first order-difference operators, as well as the time-averaging operator, which formally are defined on grids $ {\cal  {\bar G}}_i$ place the density matrix 
elements on the grids $ {\cal  G}_i$ (rather than $ {\cal  {\bar G}}_i$),so that the latter must be stored as matrices  $\rho_{ij}({\cal G}_i;{\cal G}_j), ~ i,j=1,2$. 
In particular, spatial derivatives are performed on the time-sheet which is sandwiched between the  initial and final time-sheet of the associated row or column which is propagated: spatial derivatives map ${\cal {\bar G}}_2^{(j_o+1/2)}$ onto  ${\cal G}_1^{(j_o+1/2)}$; likewise   ${\cal {\bar G}}_1^{(j_o)}$ is mapped onto ${\cal G}_2^{(j_o)}$.  The numerical procedure follows a local time-sheet by time sheet  propagation which allows full parallelization in its execution as an initial-value problem.  For a single time step this explicit scheme scales like $N^2$, with $N$ denoting the number of grid points on a single time sheet.  

\subsection{Hermiticity under the direct scheme}

\begin{lem}\label{L3}
The direct scheme Eqs. \eqref{11} to \eqref{22} conserves Hermiticity of the density matrix.
\end{lem}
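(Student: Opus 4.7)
Hermiticity of the density matrix on the staggered grid amounts to three identities: $\rho_{11}(r;s) = \rho_{11}^*(s;r)$ and $\rho_{22}(r;s) = \rho_{22}^*(s;r)$ on the diagonal sub-grids, together with $\rho_{12}(r;s) = \rho_{21}^*(s;r)$ for the off-diagonal blocks.  The placement described in Sect.~\ref{placement} is consistent with all three identities in the sense that, under exchange of the first and second (primed) sets of indices, ${\cal G}_1\times {\cal G}_1 \to {\cal G}_1\times {\cal G}_1$, ${\cal G}_2\times {\cal G}_2 \to {\cal G}_2\times {\cal G}_2$, and ${\cal G}_1\times {\cal G}_2 \to {\cal G}_2\times {\cal G}_1$, so that the two sides of each identity live on matching lattice positions.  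The plan is to proceed by induction on time sheets: assuming Hermiticity holds on the initial pair of time sheets $j_o-1/2$ and $j_o$, I would show that the direct scheme propagates it to $j_o+1/2$ and $j_o+1$.

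The strategy is to show that the coupled system (\ref{11})--(\ref{22}) is left invariant by the composite operation $\tau$ consisting of complex conjugation together with exchange of the primed and unprimed index sets (mapping $\rho_{12}\leftrightarrow\rho_{21}$ and $\rho_{ii}\leftrightarrow\rho_{ii}$).  If $\tau$ is a symmetry of the scheme, then $\tau\rho$ is a solution whenever $\rho$ is, and since the scheme determines the future time-sheet values uniquely from the past ones, the Hermitian conjugate of the propagated density matrix coincides with it, which is exactly Hermiticity on the new time sheets.  The key algebraic ingredients are: (i) $M_\pm^* = -M_\pm$, since $V$ and $m$ are real so that $M_\pm = m_\pm/(2i\hbar v)$ is purely imaginary; (ii) $D_\pm^* = D_\mp$, because $D_\pm = D_y \pm iD_x$ with real $D_x, D_y$; (iii) the time-average ${\cal T}$ and the time-difference $D_o$ are real operators; and (iv) under the swap $r\leftrightarrow s$, a difference operator acting on the first argument of a matrix element becomes the primed operator acting on the second argument of the relabeled expression, and conversely.

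Applying $\tau$ to Eq.~(\ref{11}) illustrates the pattern: complex conjugation reverses the sign of the mass-difference factor and sends $D_+\to D_-$ and $D_-'\to D_+'$; substituting the induction hypothesis $\rho_{ij}^*(r;s) = \rho_{ji}(s;r)$ and then relabeling $r\leftrightarrow s$ turns $D_-$ on the first argument into $D_-'$ on the second argument of $\rho_{12}(s;r)$, and $D_+'$ on the second argument into $D_+$ on the first argument of $\rho_{21}(s;r)$.  The resulting right-hand side is precisely Eq.~(\ref{11}) evaluated at the swapped arguments.  The same manipulation works for Eq.~(\ref{22}).  For the off-diagonal pair, $\tau$ maps Eq.~(\ref{12}) onto Eq.~(\ref{21}) rather than back to itself, which is the correct behavior since $\tau\rho_{12}=\rho_{21}$: the sign flip of $[M_+-M_-]$ produces the $[M_- - M_+]$ coefficient of Eq.~(\ref{21}), while $D_+^*=D_-$ combined with the primed/unprimed interchange delivers the correct source-term structure.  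The main obstacle is the bookkeeping for the staggered time sheets and for the alternation between barred and unbarred grids in each source term, but the construction of Sect.~\ref{placement} was arranged so that $\tau$ maps ${\cal {\bar G}}_i^{(j_t)}$ and ${\cal G}_i^{(j_t)}$ consistently onto the grid of the swapped matrix element at the matching time sheet; once this is checked position by position the algebraic identities listed above close the induction step.
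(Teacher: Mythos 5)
Your proposal is correct and follows essentially the same route as the paper: the paper's proof likewise assumes Hermiticity on the initial pair of time sheets and shows, using $M_\pm^*=-M_\pm$ and identities of the form $\{D_+'\rho_{21}\}^*=D_-'\rho_{12}$, that complex conjugation of each updated matrix element reproduces the scheme's update for the transposed element (Eq.~\eqref{11} and Eq.~\eqref{22} mapping to themselves, Eq.~\eqref{12} to Eq.~\eqref{21}), which is precisely your symmetry operation $\tau$ verified by explicit computation on the staggered grids.
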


\begin{proof} 
It is sufficient to show conservation for one time step under the assumption of Hermiticity of the initial density matrix placed on time sheets $j_o-1/2$ and $j_o$.  
This is done in straight-forward way by showing that, within the scheme, complex conjugation is equivalent to a transposition of the density matrix on time-sheets $j_o+1/2$ and $j_o+1$.  

Using $M_\pm^*=-M_\pm$ and relations, such as 
\begin{eqnarray}
 \left\{D'_+\rho_{21}({\cal {\bar G}}_1^{(j_o)'};{\cal G}_1^{(j_o-1/2)})\right\}^* &=&\left\{\frac{{\rho_{21}}_{j_{x'},j_{y'} + \frac{1}{2};j_x,j_y}^{j_o' ; j_{o}- \frac{1}{2}} -{\rho_{21}}_{j_{x'},j_{y'} - \frac{1}{2};j_x,j_y}^{j_o' ; j_{o}- \frac{1}{2}}} {\Delta_y}+
  i\frac{{\rho_{21}}_{j_x'+ \frac{1}{2},j_{y'} ;j_{x},j_{y}}^{j_o' ; j_{o}- \frac{1}{2}} -{\rho_{21}}_{j_x'- \frac{1}{2},j_{y'} ;j_{x},j_{y}}^{j_o' ; j_{o}- \frac{1}{2} }}{\Delta_x}\right\}^*\nonumber \\
&=&\frac{{\rho_{12}}_{j_x,j_y;j_{x'},j_{y'} + \frac{1}{2}}^{  j_{o}- \frac{1}{2};j_o'} -{\rho_{12}}_{j_x,j_y;j_{x'},j_{y'} - \frac{1}{2}}^{  j_{o}- \frac{1}{2};j_o'}} {\Delta_y}-
  i\frac{{\rho_{12}}_{j_{x},j_{y};j_x'+ \frac{1}{2},j_{y'} }^{  j_{o}- \frac{1}{2};j_o'} -{\rho_{12}}_{j_{x},j_{y};j_x'- \frac{1}{2},j_{y'} }^{ j_{o}- \frac{1}{2};j_o' }}{\Delta_x}\nonumber \\
&=& D_-'\rho_{12}({\cal G}_1^{(j_o-1/2)};{\cal {\bar G}}_1^{(j_o)'})~,
\end{eqnarray}
as well as the short-hand grid notation introduced above one may write 
\begin{eqnarray}
 {\rho_{11}({\cal { G}}_1^{(j_o+\frac{1}{2})'};{\cal {G}}_1^{(j_o+\frac{1}{2})})}^* &=& \frac{1}{\frac{1}{\Delta_o} - \frac{({M_+({\cal {\bar G}}_1^{(j_o)'})}^*-{ M_+({\cal {\bar G}}_1^{(j_o)}) }^*)}{2}}
 \Bigl[   \frac{1}{\Delta_o} + \frac{({ M_+({\cal {\bar G}}_1^{(j_o)'}) }-{ M_+({\cal {\bar G}}_1^{(j_o)}) })}{2}{\rho_{11}({\cal { G}}_1^{(j_o-\frac{1}{2})'};{\cal {G}}_1^{(j_o-\frac{1}{2})})} \nonumber \\
   &-& {D_+'\rho_{21}({\cal {\bar G}}_1^{(j_o)'};{\cal G}_1^{(j_o-\frac{1}{2})})} - {D_-\rho_{12}({\cal  G}_1^{(j_o-\frac{1}{2})'};{\cal   {\bar G}}_1^{(j_o)})}\Bigr]^*\nonumber  \\
 & =&  \frac{1}{\frac{1}{\Delta_o} - \frac{({-M_+({\cal {\bar G}}_1^{(j_o)'})}+{ M_+({\cal {\bar G}}_1^{(j_o)}) })}{2}} \Bigl[   \frac{1}{\Delta_o} + \frac{( -M_+({\cal {\bar G}}_1^{(j_o)'}) +{ M_+({\cal {\bar G}}_1^{(j_o)}) })}{2}{\rho_{11}({\cal { G}}_1^{(j_o-\frac{1}{2})};{\cal {G}}_1^{(j_o-\frac{1}{2})'})} \nonumber \\
    &-&  {D_-'\rho_{12}({\cal G}_1^{(j_o-\frac{1}{2})};{\cal  {\bar G}}_1^{(j_o)'})} - {D_+\rho_{21}({\cal  {\bar G}}_1^{(j_o)};{\cal   G}_1^{(j_o-\frac{1}{2})'})}\Bigr] \nonumber \\
 & =&  \rho_{11}({\cal { G}}_1^{(j_o+\frac{1}{2})};{\cal {G}}_1^{(j_o+\frac{1}{2})'}) ~,
\end{eqnarray}
\begin{eqnarray}
  {\rho_{12}({\cal { G}}_1^{(j_o+\frac{1}{2})'};{\cal {G}}_2^{(j_o+1)})}^*  &=&  \frac{1}{\frac{1}{\Delta_o} - \frac{({M_+({\cal {\bar G}}_1^{(j_o)'})}^*-{ M_-({\cal {\bar G}}_2^{(j_o+ \frac{1}{2})}) }^*)}{2}}
 \Bigl[   \frac{1}{\Delta_o} + \frac{({ M_+({\cal {\bar G}}_1^{(j_o)'}) }-{ M_-({\cal {\bar G}}_2^{(j_o + \frac{1}{2})}) })}{2}{\rho_{12}({\cal { G}}_1^{(j_o-\frac{1}{2})'};{\cal {G}}_2^{(j_o)})} \nonumber \\
   &-&  {D_+'\rho_{22}({\cal {\bar G}}_1^{(j_o)'};{\cal G}_2^{(j_o)})} - {D_+\rho_{11}({\cal  G}_1^{(j_o+\frac{1}{2})'};{\cal   {\bar G}}_2^{(j_o+\frac{1}{2})})}\Bigr]^*\nonumber  \\
 &=& \frac{1}{\frac{1}{\Delta_o} - \frac{({-M_+({\cal {\bar G}}_1^{(j_o)'})}+{ M_-({\cal {\bar G}}_2^{(j_o+\frac{1}{2})}) })}{2}} \Bigl[  \frac{1}{\Delta_o} + \frac{( -M_+({\cal {\bar G}}_1^{(j_o)'}) +{ M_-({\cal {\bar G}}_2^{(j_o+\frac{1}{2})}) })}{2}{\rho_{12}({\cal { G}}_1^{(j_o-\frac{1}{2})};{\cal {G}}_2^{(j_o)'})} \nonumber \\
  &-&  {D_-'\rho_{22}({\cal G}_1^{(j_o)};{\cal  {\bar G}}_1^{(j_o)'})} - {D_-\rho_{11}({\cal  {\bar G}}_2^{(j_o+\frac{1}{2})};{\cal   G}_1^{(j_o+\frac{1}{2})'})}\Bigr] \nonumber \\
 &=&  \rho_{21}({\cal { G}}_2^{(j_o+1)};{\cal {G}}_1^{(j_o+\frac{1}{2})'}) ~,
\end{eqnarray}
\begin{eqnarray}
 {\rho_{21}({\cal { G}}_2^{(j_o+1)'};{\cal {G}}_1^{(j_o+\frac{1}{2})})}^* &=& \frac{1}{\frac{1}{\Delta_o} - \frac{{M_-({\cal {\bar G}}_2^{(j_o+\frac{1}{2})})'}^*-{M_+({\cal {\bar G}}_1^{(j_o)})}^*}{2}}
   \Bigl[  \frac{1}{\Delta_o} + \frac{{M_-({\cal {\bar G}}_2^{(j_o+\frac{1}{2})})'}-{M_+({\cal {\bar G}}_1^{(j_o)})}}{2}
  {\rho_{21}({\cal { G}}_2^{(j_o)'};{\cal  { G}}_1^{(j_o-\frac{1}{2})}) }  \nonumber  \\ 
  &-& { D_- '\rho_{11}({\cal  { \bar G}}^{(j_o+\frac{1}{2})'}_2;{\cal G}_1^{(j_o+\frac{1}{2})})}
  - {D_- \rho_{22}({\cal  G}_2^{(j_o)'};{\cal  {\bar G}}_1^{(j_o)})}\Bigr]^* \nonumber \\
&=&     \frac{1}{\frac{1}{\Delta_o} - \frac{-{M_-({\cal {\bar G}}_2^{(j_o+\frac{1}{2})'})}+{M_+({\cal {\bar G}}_1^{(j_o)})}}{2}}
\Bigl[  \frac{1}{\Delta_o} + \frac{-{M_-({\cal {\bar G}}_2^{(j_o+\frac{1}{2})})'}+{M_+({\cal {\bar G}}_1^{(j_o)})}}{2}
  {\rho_{12}({\cal { G}}_1^{(j_o-\frac{1}{2})};{\cal  { G}}_2^{(j_o)'}) }  \nonumber  \\ 
 & -& { D_+' \rho_{11}({\cal G}_1^{(j_o+\frac{1}{2})};{\cal   { \bar G}}^{(j_o+\frac{1}{2})'}_2)} 
  - {D_+ \rho_{22}({\cal {\bar G}}_1^{(j_o)};{\cal   G}_2^{(j_o)'})} \Bigr] \nonumber \\
&=& \rho_{12}({\cal { G}}_1^{(j_o+\frac{1}{2})};{\cal  { G}}_2^{(j_o+1)'}) ~,
\end{eqnarray}
and 
\begin{eqnarray}
 {\rho_{22}({\cal { G}}^{(j_o+1)'}_2;{\cal  { G}}^{(j_o+1)}_2)}^*  &= &  \frac{1}{\frac{1}{\Delta_o} - \frac{{M_-({\cal {\bar G}}_2^{(j_o+\frac{1}{2})'})}^*-{M_-({\cal {\bar G}}_2^{(j_o+\frac{1}{2})})}^*}{2}}
  \Bigl[  \frac{1}{\Delta_o} + \frac{{M_-({\cal {\bar G}}_2^{(j_o+\frac{1}{2})'})}-{M_-({\cal {\bar G}}_2^{(j_o+\frac{1}{2})})}}{2}
 {\rho_{22}({\cal { G}}_2^{(j_o)'};{\cal  { G}}_2^{(j_o)}) } \nonumber \\
  &-&   {D_-' \rho_{12}({\cal {\bar G}}_2^{(j_o+\frac{1}{2})'};{\cal   G}_2^{(j_o+1)})}   - 
  {D_+\rho_{21}({\cal  G}_2^{(j_o+1)'};{\cal  {\bar G}}_2^{(j_o+\frac{1}{2})})}\Bigr]^* \nonumber \\
&=&    \frac{1}{\frac{1}{\Delta_o} - \frac{-{M_-({\cal {\bar G}}_2^{(j_o+\frac{1}{2})'})}+{M_-({\cal {\bar G}}_2^{(j_o+\frac{1}{2})})}}{2}}
\Bigl[  \frac{1}{\Delta_o} + \frac{-{M_-({\cal {\bar G}}_2^{(j_o+\frac{1}{2})'})}+{M_-({\cal {\bar G}}_2^{(j_o+\frac{1}{2})})}}{2}
  {\rho_{22}({\cal { G}}_2^{(j_o)};{\cal  { G}}_2^{(j_o)'}) } \nonumber \\
  &-&  {D_+' \rho_{21}({\cal  G}_2^{(j_o+1)};{\cal  {\bar G}}_2^{(j_o+\frac{1}{2})'})}   - 
  {D_-\rho_{12}({\cal {\bar G}}_2^{(j_o+\frac{1}{2})};{\cal   G}_2^{(j_o+1)'})}\Bigr]\nonumber \\
&=& \rho_{22}({\cal { G}}^{(j_o+1)}_2;{\cal  { G}}^{(j_o+1)'}_2) ~.
\end{eqnarray}
\end{proof}

\subsection{Stability of the direct scheme}

We first consider mass and potential to be position independent.  Time-dependence is permitted.  

\begin{prop}\label{PVN}
For constant-in-position potential $V$ and mass $m$, the direct scheme Def. \ref{defdi} is stable for
\be
r_x^2+r_y^2\leq \frac{1}{4}
\ee
\end{prop}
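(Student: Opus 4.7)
The plan is to apply a von Neumann (Fourier) stability analysis, which is available because the hypothesis of position-independent $V$ and $m$ renders the scheme constant-coefficient in space. Specifically, the mass/potential terms $[M_\pm(\bar{\mathcal G})-M_\pm(\bar{\mathcal G}')]\,{\mathcal T}\rho$ on the right-hand sides of Eqs.~\eqref{11}--\eqref{22} drop out, since $M_\pm$ evaluated at equal time on any pair of spatial sites yields the same value. What remains is a pure constant-coefficient linear system for the four density-matrix components, to which Fourier analysis applies.

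Next, I would Fourier-transform simultaneously in the bra coordinates $(x,y)$ and ket coordinates $(x',y')$ and introduce the complex wavenumbers
\[
p=\Delta_o\Bigl(\tfrac{2\sin(k_x\Delta_x/2)}{\Delta_x}+i\tfrac{2\sin(k_y\Delta_y/2)}{\Delta_y}\Bigr),\qquad q=\Delta_o\Bigl(\tfrac{2\sin(k_x'\Delta_x/2)}{\Delta_x}+i\tfrac{2\sin(k_y'\Delta_y/2)}{\Delta_y}\Bigr),
\]
so that the Fourier-transformed centered differences become simple multipliers $\Delta_oD_-\!\to\!p$, $\Delta_oD_+\!\to\!-\bar p$, $\Delta_oD_-'\!\to\!q$, $\Delta_oD_+'\!\to\!-\bar q$. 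A direct estimate gives $|p|^2,|q|^2\le 4(r_x^2+r_y^2)$, with the extremum attained at $k_x\Delta_x=k_y\Delta_y=\pi$ (and similarly for the primed variables), so spectral conditions on $(p,q)$ translate directly into conditions on $(r_x,r_y)$.

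Third, exploiting the explicit sequential update order dictated by Def.~\ref{defdi}---first compute $\hat\rho_{11}^+$ from data on the two old time sheets, then $\hat\rho_{12}^+$ and $\hat\rho_{21}^+$ using the freshly computed $\hat\rho_{11}^+$, and finally $\hat\rho_{22}^+$ using the new $\hat\rho_{12}^+$ and $\hat\rho_{21}^+$---I would assemble by direct substitution the one-step $4\times 4$ amplification matrix $G(p,q)$ acting on $(\hat\rho_{11},\hat\rho_{12},\hat\rho_{21},\hat\rho_{22})^T$. The Lax--Richtmyer criterion then identifies stability with the spectral radius bound $\rho(G(p,q))\le 1$ holding uniformly in $(p,q)$ over the admissible range.

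The main obstacle is extracting the sharp CFL bound from the quartic characteristic polynomial of $G(p,q)$. Unlike the bra-ket scheme, where the amplification matrix factors as a tensor square of the pure-state propagator $K$ appearing in the proof of Lem.~\ref{L2} and the analysis reduces to the pure-state result of Ref.~\cite{hammer3d}, in the direct scheme $G$ is genuinely coupled because the bra and ket updates are interleaved sequentially. I would exploit the bra-ket swap symmetry $(p,q)\!\leftrightarrow\!(q,p)$ together with the Hermiticity preservation of Lem.~\ref{L3} to pass to the Hermitian/anti-Hermitian basis $\hat\rho_{11}\pm\hat\rho_{22}$, $\hat\rho_{12}\pm\hat\rho_{21}$ and block-diagonalize $G$ into two $2\times 2$ sectors, reducing the spectral analysis to two quadratic characteristic equations. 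Requiring all four resulting roots to lie in the closed unit disk, and optimizing the worst case over the admissible $(p,q)$---which is expected to occur at $p=q$ with $|p|$ maximal---then produces the stated bound $4(r_x^2+r_y^2)\le 1$. Carrying out this block reduction cleanly while keeping track of the cross-terms generated by the sequential updating is the central algebraic challenge.
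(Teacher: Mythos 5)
There is a concrete error at the very first step of your reduction. You claim that for position-independent $V$ and $m$ \emph{all} the mass/potential terms $\left[M_\pm(\bar{\mathcal G})-M_\pm(\bar{\mathcal G}')\right]{\mathcal T}\rho$ drop out because ``$M_\pm$ evaluated at equal time on any pair of spatial sites yields the same value.'' That is true only for the diagonal equations \eqref{11} and \eqref{22}, where the combination is $M_+-M_+$ or $M_--M_-$. In the off-diagonal equations \eqref{12} and \eqref{21} the combination is $M_+-M_-$ (respectively $M_--M_+$), and since $m_\pm=V\pm m$ one has $M_+-M_-=m/(i\hbar v)\neq 0$ for $m\neq 0$. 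Only the scalar potential $V$ cancels everywhere; the mass does not. This is exactly why the paper's von Neumann analysis carries the parameter $\beta=m\Delta_o/(\hbar v)$ through to the dispersion relation \eqref{sin2}, and why an extra check is needed that the radicand in \eqref{sin2} stays nonnegative for $\beta\geq 0$ on $0\leq p^{(\prime)}\leq\tfrac14$ (so that $\omega$ is real). Your scheme of analysis, as written, proves the claim only for $m=0$; it happens that the final bound $4(r_x^2+r_y^2)\leq 1$ is independent of $\beta$, since at $p=p'=p_{\max}$ the worst eigenvalue condition reads $(4p_{\max}+\beta^2)/(1+\beta^2)\leq 1$, but you cannot know that without keeping $\beta$ in the calculation.

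Beyond that, your overall strategy is the same as the paper's --- Fourier ansatz in both bra and ket coordinates, assembly of the $4\times 4$ one-step system, and a spectral-radius condition --- but the decisive algebra is left undone: you defer the computation of the characteristic polynomial to a proposed block-diagonalization in the basis $\hat\rho_{11}\pm\hat\rho_{22}$, $\hat\rho_{12}\pm\hat\rho_{21}$ and call it ``the central algebraic challenge.'' The paper resolves this by computing the characteristic determinant directly and obtaining the closed form \eqref{sin2} for $\sin^2(\omega\Delta_t/2)$ in terms of $p$, $p'$, and $\beta$, from which the maximization over $p=p'=r_x^2+r_y^2$ and the sign of the square root immediately yield the bound. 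Until you actually carry out the reduction (with the mass retained in the off-diagonal blocks) and verify that the extremum occurs at $p=q$ with $|p|$ maximal, the proof is not complete.
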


\begin{proof}
The proof is given by von Neumann analysis setting, for time-step $\Delta_t$, 
\be
\rho^+_{ij}=e^{i\omega \Delta_t}  e^{i {\bf k\cdot r}} e^{i {\bf k'\cdot r'}} \rho^-_{ij}, ~ i,j=1,2
\label{rhoc}
\ee
and inserting this expression into the scheme Def. \ref{defdi}.  The resulting  four linear  homogeneous equations in $\rho^-_{ij}$ have a characteristic determinant which has the solutions 
\be
\sin^2(\frac{\omega \Delta_t}{2})|_\pm= \frac{p+p'+ \frac{\beta^2}{2}}{1+\beta^2} \pm \frac{\sqrt{ (p+p'+ \frac{\beta^2}{2})^2  -(1+\beta^2)(p-p')^2}}{1+\beta^2}
\label{sin2}
\ee
with
$$
p^{(')}=r_y^2\sin^2{(\frac{k_y^{(')} \Delta_y}{2})}+r_x^2\sin^2{(\frac{k_x^{(')} \Delta_x}{2})}
$$
and $\beta=\frac{m\Delta_o}{\hbar v}$.  Noting that $\beta,p,p'\geq 0$, the maximum of the rhs of Eq. \eqref{sin2} is obtained for $p=p'=p_{max}=r_x^2+r_y^2$ and the positive sign for the square root.  This leads to the 
condition $4p_{max}<1$ rendering $\sin^2(\frac{\omega \Delta_t}{2})|_\pm\leq 1$.  Finally, in the domain  $0\leq p^{(')}\leq \frac{1}{4}$, the radicant in Eq. \eqref{sin2} remains positive definite for $\beta\geq0$.
\end{proof}

\begin{figure*}[h]
\centering
(a)  \includegraphics[width=6cm]{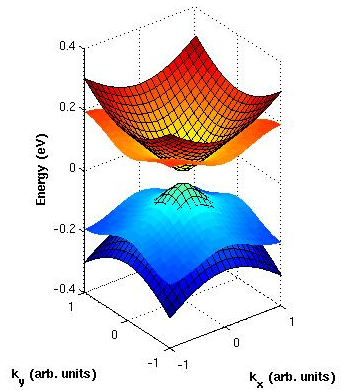} 
(b)  \includegraphics[width=6cm]{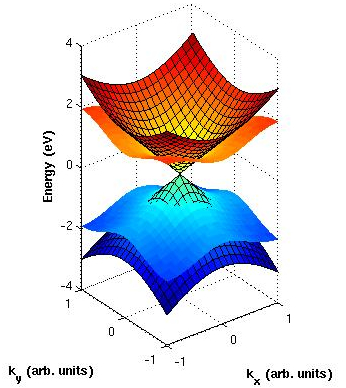}
\caption{ Exact energy-momentum dispersion (meshed-surface) versus the direct-scheme dispersion: mass $m=0.02$eV, $v=6.2\times 10^6$m/s, $\Delta_t=\frac{\Delta}{2\sqrt{2}}$, $\Delta=\Delta_x=\Delta_y$, (a)  $\Delta=6$nm ($150\times 150$ grid points), $\Delta_t=3.4\times 10^{-15}$s (b) $\Delta=0.6$nm ($1500\times 1500$ grid points), $\Delta_t=3.4\times 10^{-16}$s.}
\label{Fig0}
\end{figure*}

Comments: 

(i) Von Neumann analysis reveals the single-cone energy-momentum dispersion relation of the scheme.  For Hermitian $\rho$, one has ${\bf k'}= -{\bf k}$ and $p=p'$.  Hence for $r_x^2+r_y^2\leq \frac{1}{4}$ the
four real solutions for $\omega$  are $\omega_1=\omega_2=0$ and 
\be
\omega_3=-\omega_4=\frac{2}{\Delta_t} \arcsin{\left\{     \sqrt{\frac{4p + \beta^2}{1+\beta^2}}   \right\}}~.
\ee
The origin of these four solutions can be interpreted as follows.  For given $k$ vector there is a positive- (particle) and a negative-energy (antiparticle) solution.  Both of them are needed for the construction of a complete basis set.  Hence there are four types of matrix elements in Eq. \eqref{rhoc}: particle-particle ($\omega_1=0$) and antiparticle-antiparticle  ($\omega_2=0$), particle-antiparticle ($\omega_4$) and antiparticle-particle ($\omega_3$).  Thus the energy-momentum dispersion of the direct scheme is given by

\be
E_{k}=\pm \frac{\hbar}{\Delta_t} \arcsin{\left\{    \frac{\Delta_t }{\hbar}  \sqrt{\frac{\left(v\frac{2\hbar}{\Delta_x}\sin{(\frac{k_x \Delta_x}{2})}\right)^2+\left(v\frac{2\hbar}{\Delta_y}\sin{(\frac{k_y\Delta_y}{2})}\right)^2 + m^2}{1+\left(\frac{m\Delta_t}{\hbar}\right)^2}}   \right\}} ~.\label{EKDIR}
\ee
A comparison between the exact dispersion to the one for the direct scheme is given in Fig. \ref{Fig0} for a specific numerical example with $m=0.02$eV.  Note that for the case of 1500 grid points, the finite mass gap cannot be resolved for the scale used to allow the display of the entire range of k-vector space.   

(ii) Under the conditions of Prop. \ref{PVN}, preservation of positivity is ensured.  Starting from a Hermitian positive density matrix $\rho^-_{ij}$, the latter may be expanded in the stationary orthonormal plane-wave eigenkets of the scheme according to Eq. \eqref{mixeds} which  evolve into $\rho^+_{ij}$ according to Eq. \eqref{rhoc} and ${\bf k'}= -{\bf k}$ .

We now turn to the case of time- and position- dependent  mass and potential.
The direct scheme cannot be expected to conserve positivity in general since it is based on a finite difference approximation of the  linearized differential equation 
$$
\rho^+ =(1-\frac{i\Delta_t H}{\hbar}) \rho^- (1+\frac{i\Delta_t H}{\hbar}  )=  \rho^-  -\frac{i\Delta_t}{\hbar}\left[ H,\rho^-\right] +\left(\frac{\Delta_t}{\hbar}\right)^2 H\rho^-H \approx  \rho^-  -\frac{i\Delta_t}{\hbar}\left[ H,\rho^-\right] 
$$
Clearly, the original scheme is of Kraus form (with non-unitary Kraus operator), however, the linearized form is not.   The maximum violation of positivity of the latter  is of order $\left(\Delta_t\right)^2$ ($\left(\frac{\Delta_o}{\Delta}\right)^2$ for $m_\pm=0$) per time step.  For states $|\Phi\rangle$ for which $\langle \Phi |\rho^-|\Phi\rangle=0$ it is given by    $-\left(\frac{\Delta_t}{\hbar}\right)^2 \langle \Phi| H\rho^-H|\Phi\rangle$ ($\leq 0$ for positive $\rho^-$).    So starting from a positive definite initial density operator, it is conceivable that with time one looses positivity.  Nevertheless, the direct scheme supports a conserved functional which replaces conservation of the trace for the continuum Dirac equation and which is closely related to the  conserved functional of Lemma \ref{L1} for the bra-ket scheme.

\begin{lem}\label{LP3}
Under periodic, respectively, zero boundary conditions,
\begin{eqnarray}
T_{{\bf r}}^0 (\rho)= &&\Tr\{ \rho_{11}({\cal G}^{(j_o-1/2)}_1;{\cal  G}^{(j_o-1/2)}_1) + \rho_{22}({\cal G}_2^{(j_o)};{\cal  G}^{(j_o)}_2)-~~~~~~~~~~~~~~~~~~~~~~~~~~~~~~\nonumber \\
&& \Delta_oD_+\rho_{21}({\cal {\bar G}}^{(j_o)}_1;{\cal  G}^{(j_o-1/2)}_1)+
\Delta_o D_- \rho_{12}({\cal {\bar G}}^{(j_o-1/2)}_2;{\cal  G}^{(j_o)}_2)\}, {\bf r}=(r_x=\Delta_o/\Delta_x, r_y=\Delta_o/\Delta_y),  \label{tr3}
\end{eqnarray}
is a conserved quantity under the direct scheme.
\end{lem}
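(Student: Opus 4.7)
The plan is to evaluate $T_{\bf r}^0$ at two consecutive time-sheet pairs, $(j_o-1/2,j_o)$ and $(j_o+1/2,j_o+1)$, and show that the difference vanishes. Since every piece of the functional lives on one of these two time sheets, this is a one-step computation that reduces to three bookkeeping ingredients: (i) the direct-scheme update equations \eqref{11} and \eqref{22} for the diagonal elements, (ii) the vanishing of the mass/potential commutator under the diagonal trace, and (iii) the ``integration-by-parts'' identity of Corollary \ref{cor1}.

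First I would take the diagonal trace of equation \eqref{11}. At coincident spatial arguments $(j_x,j_y)=(j_x',j_y')$ the term $[M_+(\bar{\cal G}_1^{(j_o)})-M_+(\bar{\cal G}_1^{(j_o)'})]{\cal T}\rho_{11}$ vanishes pointwise because both evaluations of $M_+$ sit at the same spatial point. This leaves
$$\Tr\{D_o \rho_{11}\} = -\sum_{j_x,j_y} D_+\rho_{21}(\bar{\cal G}_1^{(j_o)};{\cal G}_1^{(j_o-1/2)'}) - \sum_{j_x,j_y} D_-'\rho_{12}({\cal G}_1^{(j_o-1/2)};\bar{\cal G}_1^{(j_o)'}).$$
Corollary \ref{cor1} converts the $D_-'$ term into $+\sum D_-\rho_{12}(\bar{\cal G}_2^{(j_o-1/2)};{\cal G}_2^{(j_o)'})$, so abbreviating the two surviving sums by $A_1$ and $B_1$ one gets $\Tr\{D_o\rho_{11}\}=-A_1+B_1$. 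The identical procedure applied to \eqref{22} at the next time-sheet pair, again with the mass/potential commutator collapsing under coincident arguments and Corollary \ref{cor1} used to flip $D_+'\rho_{21}$ into $-D_+\rho_{21}$, yields $\Tr\{D_o\rho_{22}\}=-B_2+A_2$, where $A_2,B_2$ are the analogous sums shifted forward by one time step.

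Finally I assemble the full difference. Writing the correction terms in \eqref{tr3} as $-\Delta_o A_k + \Delta_o B_k$ for the pair labelled $k\in\{1,2\}$, one obtains
\begin{eqnarray*}
\tfrac{1}{\Delta_o}\bigl[T_{\bf r}^0(+)-T_{\bf r}^0(-)\bigr]
&=& \Tr\{D_o\rho_{11}\}+\Tr\{D_o\rho_{22}\}-(A_2-A_1)+(B_2-B_1)\\
&=& (-A_1+B_1)+(-B_2+A_2)-(A_2-A_1)+(B_2-B_1)=0.
\end{eqnarray*}
Hence $T_{\bf r}^0$ is constant in time. The one subtlety requiring care is keeping track of the grid labels through Corollary \ref{cor1}: the integration-by-parts rule swaps the subgrid index $1\leftrightarrow 2$ together with the barred/unbarred attribution, and this swap is precisely what makes the boundary sums $A_1,B_1$ produced by $D_o\rho_{11}$ match the $A_2,B_2$ appearing in the shifted correction terms of \eqref{tr3}. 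The inclusion of the two correction terms in the definition of $T_{\bf r}^0$ is therefore forced by, and exactly cancels, the fact that $\rho_{11}$ and $\rho_{22}$ are placed on staggered time sheets.
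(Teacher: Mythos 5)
Your proof is correct and follows essentially the same route as the paper: take the lattice trace of the two diagonal update equations \eqref{11} and \eqref{22}, observe that the mass/potential commutator vanishes at coincident spatial arguments, and use Corollary \ref{cor1} to flip the primed difference operators so that the off-diagonal contributions telescope against the correction terms built into $T_{\bf r}^0$. The only difference is notational (your $A_k$, $B_k$ bookkeeping versus the paper's fully written-out chain of equalities); the substance is identical.
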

Here trace $\Tr$, consistent with Definition \ref{def1},  implies the sum over $(j_x,j_y) \in \mathbb{Z}^2 \cup (\mathbb{Z}+\frac{1}{2})^2$ performed on the respective time-sheets.\footnote{Note that $\rho$ also is a matrix in spin space indicated by the subscripts $(i,j)$ in $\rho_{ij}$.}
This conservation law corresponds to the  conservation of the trace of $\rho$ in the continuum limit, in which all spatial derivatives are carried out on a ``single" time-sheet.
\begin{proof}
Making use of the Cor. \ref{cor1}  (integration by parts on the lattice) one may write
\begin{eqnarray}
& &\Tr\left\{\rho_{11}({\cal G}^{(j_o+1/2)}_1;{\cal  G}^{(j_o+1/2)}_1) + \rho_{22}({\cal G}_2^{(j_o+1)};{\cal  G}^{(j_o+1)}_2)\right\}\nonumber \\
& = &\sum_{j_x,j_y }\bigg[ \rho_{11}({\cal G}^{(j_o+1/2)}_1;{\cal  G}^{(j_o+1/2)}_1)+\rho_{22}({\cal G}_2^{(j_o+1)};{\cal  G}^{(j_o+1)}_2)\bigg] \nonumber \\
& = & \sum_{j_x,j_y }\bigg[ \rho_{11}({\cal G}^{(j_o-1/2)}_1;{\cal  G}^{(j_o-1/2)}_1) - \Delta_o D_+\rho_{21}({\cal {\bar G}}^{(j_o)}_1;{\cal  G}^{(j_o-1/2)}_1)\nonumber \\
&-&  \Delta_o D_-'\rho_{12}({\cal {G}}^{(j_o-1/2)}_1;{\cal \bar {G}}^{(j_o)}_1)\bigg]\nonumber \\
& + & \sum_{j_x,j_y }\bigg[ \rho_{22}({\cal G}_2^{(j_o)};{\cal  G}^{(j_o)}_2) - \Delta_o D_-\rho_{12}({\cal {\bar G}}^{(j_o+1/2)}_2;{\cal  G}^{(j_o+1)}_2)\nonumber \\
&-& \Delta_o D_+'\rho_{21}({\cal { G}}^{(j_o+1)}_2;{\cal {\bar G}}^{(j_o+1/2)}_2)\bigg] \nonumber \\
& = & \sum_{j_x,j_y }\bigg[ \rho_{11}({\cal G}^{(j_o-1/2)}_1;{\cal  G}^{(j_o-1/2)}_1) - \Delta_o D_+\rho_{21}({\cal {\bar G}}^{(j_o)}_1;{\cal  G}^{(j_o-1/2)}_1)\nonumber \\
&+&  \Delta_o D_-\rho_{12}({\cal {\bar G}}^{(j_o-1/2)}_2;{ \cal G}^{(j_o)}_2)\bigg] \nonumber \\
& + & \sum_{j_x,j_y }\bigg[ \rho_{22}({\cal G}_2^{(j_o)};{\cal  G}^{(j_o)}_2) - \Delta_o D_-\rho_{12}({\cal {\bar G}}^{(j_o+1/2)}_2;{\cal  G}^{(j_o+1)}_2)\nonumber \\ 
&+&  \Delta_o D_+\rho_{21}({\cal {\bar G}}^{(j_o+1)}_1;{\cal  G}^{(j_o+1/2)}_1)\bigg] \nonumber \\ 
& = & \sum_{j_x,j_y }\bigg[ \rho_{11}({\cal G}^{(j_o-1/2)}_1;{\cal  G}^{(j_o-1/2)}_1) + {\rho_{22}({\cal G}^{(j_o)}_2;{\cal  G}^{(j_o)}_2})\nonumber \\
&-& \Delta_o D_+\rho_{21}({\cal {\bar G}}^{(j_o)}_1;{\cal  G}^{(j_o-1/2)}_1) + \Delta_o D_-\rho_{12}({\cal {\bar G}}^{(j_o-1/2)}_2;{\cal  G}^{(j_o)}_2) \nonumber \\
& + & \Delta_o D_+\rho_{21}({\cal {\bar G}}^{(j_o+1)}_1;{\cal  G}^{(j_o+1/2)}_1) - \Delta_o D_-\rho_{12}({\cal {\bar G}}^{(j_o+1/2)}_2;{\cal  G}^{(j_o+1)}_2)\bigg]~. \\ \nonumber
\end{eqnarray}

\end{proof}

In the time regime of positivity sufficient convergence conditions can be established from the conservation law Eq. (\ref{tr3}).
\begin{prop}\label{P5}
For $ r_x + r_y < 1/2$, with $r_\nu = \Delta_o/\Delta_\nu$, $\nu \in \{x,y\}$, and the assumption of positivity for $\rho$ the scheme satisfies the estimate:
\begin{eqnarray}
Tr\left\{ \rho_{11} + \rho_{22}\right\} \leq \frac{T_r^o}{1 - 2(r_x + r_y)}~.\label{stab2}
\end{eqnarray}
\end{prop}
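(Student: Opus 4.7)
The plan is to exploit the conservation law of Lemma \ref{LP3} together with the Cauchy--Schwarz inequality that follows from positivity of $\rho$. Since $T_{\mathbf{r}}^0(\rho)$ is time-invariant under the direct scheme, it suffices to establish, at any time $(j_o-1/2,j_o)$, a pointwise-in-time bound of the form
$$T_{\mathbf{r}}^0(\rho) \geq \bigl(1 - 2(r_x + r_y)\bigr)\,\Tr\{\rho_{11} + \rho_{22}\},$$
from which the claim follows by inversion whenever $r_x + r_y < 1/2$, with $T_{\mathbf{r}}^0$ evaluated on the initial data.

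First I would isolate the ``correction'' part of $T_{\mathbf{r}}^0$,
$$C \;:=\; -\Delta_o\,\Tr\{D_+\rho_{21}(\bar{\mathcal{G}}_1^{(j_o)};\mathcal{G}_1^{(j_o-1/2)})\} \;+\; \Delta_o\,\Tr\{D_-\rho_{12}(\bar{\mathcal{G}}_2^{(j_o-1/2)};\mathcal{G}_2^{(j_o)})\},$$
and estimate $|C|$ in terms of $S := \Tr\{\rho_{11} + \rho_{22}\}$. Expanding $D_\pm = D_y \pm i D_x$ as symmetric one-spacing differences, each of the two traces in $C$ breaks up into four lattice sums, each carrying a prefactor $\Delta_o/\Delta_\nu = r_\nu$ for $\nu \in \{x,y\}$, whose summands are off-diagonal density matrix entries $\rho_{12}(a;b)$ or $\rho_{21}(a;b)$ with $a$ and $b$ on adjacent time sheets of $\mathcal{G}_2$ and $\mathcal{G}_1$ respectively.

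Next I would apply the positivity estimate. Since $\rho$ is assumed positive semi-definite, the standard Cauchy--Schwarz bound $|\rho_{ij}(a;b)|^2 \leq \rho_{ii}(a;a)\,\rho_{jj}(b;b)$ holds, which via AM--GM gives
$$|\rho_{ij}(a;b)| \;\leq\; \tfrac{1}{2}\bigl[\rho_{ii}(a;a) + \rho_{jj}(b;b)\bigr].$$
Applying this term-by-term inside each lattice sum, then shifting the $(j_x,j_y)$ summation indices so that the arguments return to the diagonal positions -- a step which is legitimate under periodic or zero boundary conditions and preserves the constant relative shifts $(\Delta j_x, \Delta j_y)$ required by Def.~\ref{def1} -- each of the eight resulting lattice sums (four from $D_+\rho_{21}$, four from $D_-\rho_{12}$) contributes at most $\tfrac{1}{2}S$, weighted by $r_x$ or $r_y$ as appropriate. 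Counting the two $\pm$ terms per derivative and summing yields $|C| \leq 2(r_x + r_y)\,S$.

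Finally, from $T_{\mathbf{r}}^0(\rho) = S + C$ and $|C| \leq 2(r_x+r_y)S$ one obtains $S \leq T_{\mathbf{r}}^0 + 2(r_x+r_y)S$; rearranging, valid because $1 - 2(r_x+r_y) > 0$ by hypothesis, produces the estimate \eqref{stab2}. The anticipated obstacle is entirely of a bookkeeping nature: one must verify that the index shifts used in collapsing the difference sums keep the summands on the proper sub-grids $\mathcal{G}_i$ (rather than their barred companions), and one must exercise care that the positivity inequality is applied only to pairs $(a;b)$ whose spinor and time-sheet labels are mutually compatible with a diagonal ``expectation value'' interpretation; once these identifications are made the estimate is immediate.
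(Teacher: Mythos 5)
Your proposal is correct and follows essentially the same route as the paper's proof: both rest on the conservation of $T_{\mathbf{r}}^0$ (Lemma \ref{LP3}), bound the derivative correction by $2(r_x+r_y)\Tr\{\rho_{11}+\rho_{22}\}$ using positivity, and rearrange; your principal-minor Cauchy--Schwarz plus AM--GM step is equivalent to the paper's pure-state decomposition combined with $2|\Re(u^*v)|\le\|u\|^2+\|v\|^2$. The only cosmetic difference is that the paper first invokes Hermiticity to collapse the two derivative traces into $2\Re(-\Delta_o D_+\rho_{21})$ before estimating, whereas you bound the two terms separately, arriving at the same count.
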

\begin{proof}
Since Hermiticity is already shown, we can write (here complex conjugation is denoted by the bar symbol $\bar{}$ for compactness of notation)
$$\rho = \left(\begin{array}{cc}
\sum_i p_i u_i{\bar{u_i}} & \sum_i p_i u_i{\bar{v_i}}\\ \sum_i p_i v_i{\bar{u_i}} & \sum_i p_i v_i{\bar{v_i}}
\end{array}\right)$$
for each time step $j_t \in (\mathbb Z \cup \mathbb Z +\frac 1 2)$. Under the assumption of positivity of $\rho$  one has $0 < p_i \leq 1$, real. We use the short-hand notation $\Tr\{|M|\}=\sum_{(j_x,j_y)} |M|$, Cor. \ref{cor1}, as well as $2|\Re(u*v)| \leq \|u\|^2 + \|v\|^2$, to estimate  \\
\begin{eqnarray}\label{cvdir}
 &&Tr\left\{|- \Delta_o D_+\rho_{21} + \Delta_o D_-\rho_{12}|\right\} \nonumber \\
 & =&  Tr\left\{|-\Delta_oD_+\rho_{21} + \Delta_o \left\{D_+'\rho_{21}\right\}^*|\right\} ~\nonumber\\ 
 & =& Tr\left\{|-\Delta_oD_+\rho_{21} - \Delta_o \left\{D_+\rho_{21}\right\}^*|\right\}~\nonumber\\
& = &Tr\left\{|2\Re\left(-\Delta_oD_+\rho_{21}\right)|\right\} ~ \nonumber\\
 &=&2Tr\left\{ |r_y\Re\left[ \sum_i p_i v_i^{y_+} \bar{u_i} - p_i v_i^{y_-} \bar{u_i}\right] +  r_x \Re\left[ i \sum_i p_i v_i^{x_+} \bar{u_i} - p_i v_i^{x_-} \bar{u_i}\right]|\right\} ~\nonumber\\ 
 &\leq& 2 Tr\left\{ r_y|\Re\left[\sum_i p_i v_i^{y_+} \bar{u_i}\right]| + r_y|\Re\left[\sum_i p_i v_i^{y_-} \bar{u_i}\right]|\right\} ~\nonumber\\
 &+& 2Tr\left\{ r_x  |\Im\left[ \sum_i p_i v_i^{x_+} \bar{u_i}\right]| +  r_x|\Im\left[\sum_i p_i v_i^{x_-} \bar{u_i}\right] |\right\} ~\nonumber\\ 
&\leq& 2 (r_y + r_x) Tr\left\{\sum_i p_i u_i \bar{u_i} +\sum_i p_i v_i \bar{v_i}\right\} ~\nonumber\\
&\leq& 2 (r_y + r_x) Tr\left\{\rho_{11} + \rho_{22}\right\},~
\end{eqnarray}
followed by
\begin{eqnarray}
 T_r^o & = & Tr\left\{ \rho_{11} + \rho_{22} - \Delta_oD_+\rho_{21} + \Delta_o D_-\rho_{12} \right\} ~ \nonumber \\
 & \geq & Tr\left\{ \rho_{11} + \rho_{22} -|\Delta_oD_+\rho_{21} - \Delta_o D_-\rho_{12}|\right\} ~ \nonumber \\
 & \geq & Tr\left\{ \rho_{11} + \rho_{22}\right\} - 2(r_x + r_y) Tr\left\{\rho_{11} + \rho_{22}\right\} ~.
  \end{eqnarray}
  In Eq. \eqref{cvdir} we denote a single-site shift of a spinor component by a superscript.  For example, $v_i^{y_\pm}$ denotes a shift of $v_i$ by $\pm \Delta_y$.  
\end{proof}
  
In fact, a comparison of the conserved functional for the bra-ket and the direct scheme immediately allows a proof of Prop. \ref{P5} via Prop. \ref{P1}.  One observes that the difference between $\Tr\{ \rho\} $ (Def. \ref{def1})   and $T_{\bf{r}}^0$ is twice  the difference between $\Tr\{ \rho\} $  and  $E_\mathbf{r}^0 $
\bea
T_{\bf{r}}^0&=&\Tr\{ \rho_{11}({\cal G}^{(j_o-1/2)}_1;{\cal G}^{(j_o-1/2)}_1) + \rho_{22}({\cal G}_2^{(j_o)};{\cal G}^{(j_o)}_2)+
2 \Delta_o  \Re\big[D_- 
\rho_{12}({\cal {\bar G}}^{(j_o-1/2)}_2;{\cal G}^{(j_o)}_2)\big] \} \\ \nonumber 
&= &E_\mathbf{r}^0 + \Delta_o  \Re\big[D_- 
\rho_{12}({\cal {\bar G}}^{(j_o-1/2)}_2;{\cal G}^{(j_o)}_2)\big]~. \label{tr4} 
\eea
This accounts for the factor of $2$ in the denominator of the upper bound estimate in Eq. \eqref{stab2} as compared to Eq. \eqref{stab1}.   This observation immediately leads to an improved stability condition for the direct scheme via Prop. \ref{P2}.
\begin{prop}\label{P6}
Let $r=r_x=r_y=1/(2\sqrt{2})$ hold in \eqref{11}-\eqref{22}. Then, for all times  of positive definite $\rho$ this scheme is stable and satisfies 
\be
 {\tilde \Tr}(\rho_{11}^{j_o-1/2}) + {\tilde \Tr}(\rho_{22}^{j_o})  \le 2T^0~, \mbox{ with } T^0= T^0_{1/(2\sqrt{2}),1/(2\sqrt{2})}~.\label{BKL2}
\ee
\end{prop}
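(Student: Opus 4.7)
The plan is to reduce Proposition \ref{P6} to Proposition \ref{P2} through the algebraic identity $T_{\bf r}^0(\rho)=E_{2{\bf r}}^0(\rho)$ that is implicit in the discussion between Lemma \ref{LP3} and the present proposition. This identity follows because the cross-derivative term in $T_{\bf r}^0$ has twice the coefficient of the corresponding term in $E_{\bf r}^0$ (cf.\ Eq.~\eqref{tr4}), while that coefficient depends linearly on the ratios $\Delta_o/\Delta_x$ and $\Delta_o/\Delta_y$. In particular, at $r=r_x=r_y=1/(2\sqrt{2})$ one obtains the pointwise numerical identity $T_{r,r}^0(\rho)=E_{1/\sqrt{2},1/\sqrt{2}}^0(\rho)$ for every grid density matrix $\rho$.

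The first step would be to observe that the bound in Prop.~\ref{P2} is essentially algebraic: for any positive-definite grid density matrix $\rho$, irrespective of the scheme that produced it, one has $\tilde{\Tr}(\rho_{11})+\tilde{\Tr}(\rho_{22}) \le 2 E_{1/\sqrt{2},1/\sqrt{2}}^0(\rho)$. This is precisely what the proof of Prop.~\ref{P2} establishes: decompose $\rho=\sum_k \gamma_k |\Psi_k\rangle\langle\Psi_k|$ via Eq.~\eqref{mixeds}, recognize $\tilde{\Tr}(\rho_{11})+\tilde{\Tr}(\rho_{22})$ as the convex combination of the pure-state averaged norms $\|\tilde{\psi}_1^{(k)}\|^2+\|\tilde{\psi}_2^{(k)}\|^2$, and invoke the pure-state averaged-norm inequality of Ref.~\cite{hammer3d} for each $k$. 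None of these manipulations uses the bra-ket propagation rule, so the inequality is an intrinsic property of positive grid density matrices.

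I would then apply this algebraic bound to the density matrix $\rho(t)$ produced by the direct scheme at $r=1/(2\sqrt{2})$, substitute $E_{1/\sqrt{2},1/\sqrt{2}}^0(\rho(t)) = T_{r,r}^0(\rho(t))$ through the identity above, and finally invoke Lemma~\ref{LP3} (conservation of $T_{r,r}^0$ under the direct scheme) to replace $T_{r,r}^0(\rho(t))$ by its time-independent initial value $T^0$. Chaining these three moves yields $\tilde{\Tr}(\rho_{11}^{j_o-1/2})+\tilde{\Tr}(\rho_{22}^{j_o}) \le 2 T^0$, which is the asserted bound \eqref{BKL2}.

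The principal obstacle is the positivity hypothesis, rather than any individual identity in the chain. Unlike the bra-ket scheme of Lemma~\ref{L2}, the direct scheme is not automatically positivity-preserving -- the Kraus-form argument fails under linearization, as noted in the comments following Prop.~\ref{PVN} -- so the convex decomposition $\rho=\sum_k \gamma_k|\Psi_k\rangle\langle\Psi_k|$ with $\gamma_k \ge 0$ required in the first step is available only as long as positivity is retained. This caveat is already built into the statement of the proposition via the clause ``for all times of positive definite $\rho$''; delineating the exact time horizon over which positivity survives would require a separate analysis and is not needed for the algebraic reduction above.
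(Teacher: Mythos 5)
Your proposal is correct and follows essentially the same route as the paper, whose proof consists precisely of the observation $T^0_{1/(2\sqrt{2}),1/(2\sqrt{2})}=E^0_{1/\sqrt{2},1/\sqrt{2}}$ combined with an appeal to Prop.~\ref{P2}. You merely make explicit the two supporting facts the paper leaves implicit -- that the bound of Prop.~\ref{P2} is an algebraic property of any positive grid density matrix, and that Lemma~\ref{LP3} supplies the conservation of $T^0$ under the direct scheme -- together with the positivity caveat already present in the statement.
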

\begin{proof}
Observe that  $T^0_{1/(2\sqrt{2}),1/(2\sqrt{2})}= E^0_{1/\sqrt{2},1/\sqrt{2}}$ and use Prop. \ref{P2}.
\end{proof}
Note that  this estimate agrees with Prop. \ref{PVN} for position-independent mass and potential term.  Numerical simulations have confirmed this estimate, in particular, exceeding $r=r_x=r_y=1/(2\sqrt{2})$ has led to instability in specific simulations.

\section{Numerical simulations}\label{NumSim}

(2+1)D Dirac fermions have recently been realized as low-energy excitations in graphene and topological insulators.  In generic single-layer graphene there are four degenerate cones, while topological insulator states have an odd number of Dirac cones for a given surface.  Although our simulations are rather generic, we use parameters typical for TIs, such as Bi$_2$Te$_3$, using a Fermi velocity $v= 6.2\times 10^5$m/s.  
Simulation regions typically are ~ 1000x1000nm using at least ~ 100x100 grid points per time sheet.  At the rectangular boundary an absorbing layer is created by using a small imaginary contribution to the scalar potential which vanishes exponentially into the simulation region.  The schemes were implemented in a  MATLAB code using periodic boundary conditions allowing efficient global execution of the finite difference operations.  

\begin{figure}[h]
\centering
(a)  \includegraphics[width=6cm]{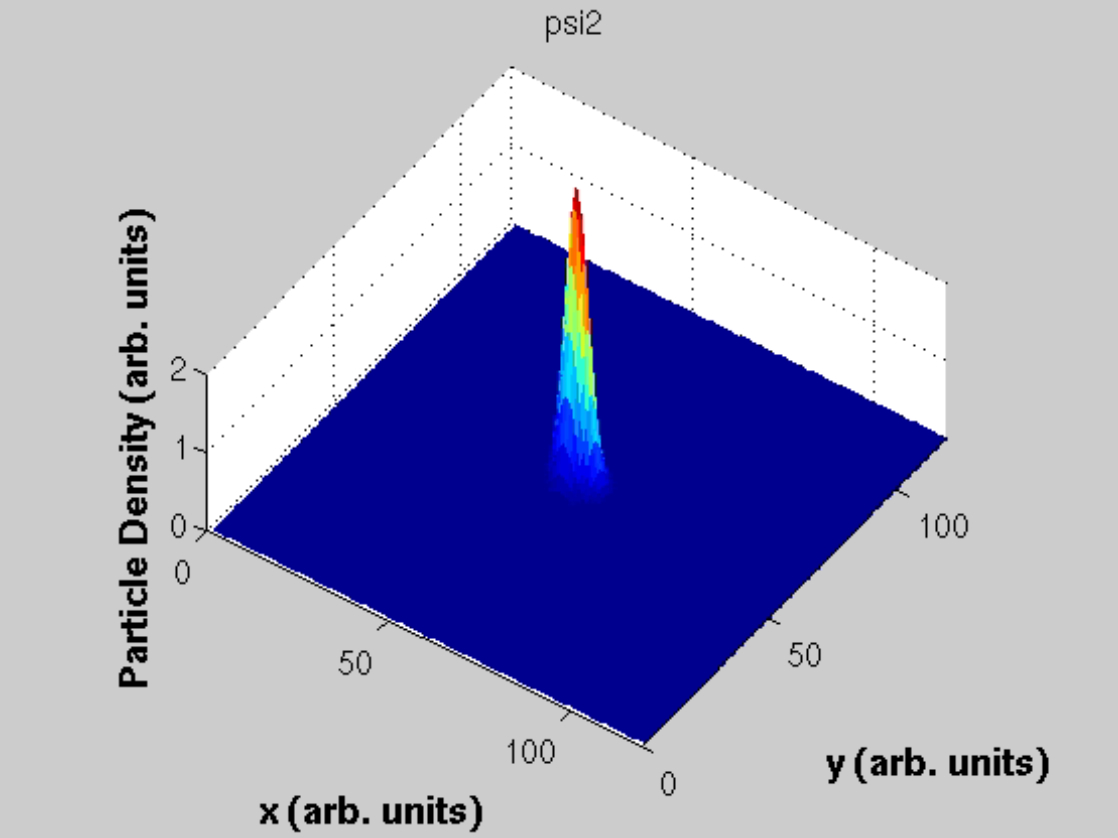}  (b) \includegraphics[width=6cm]{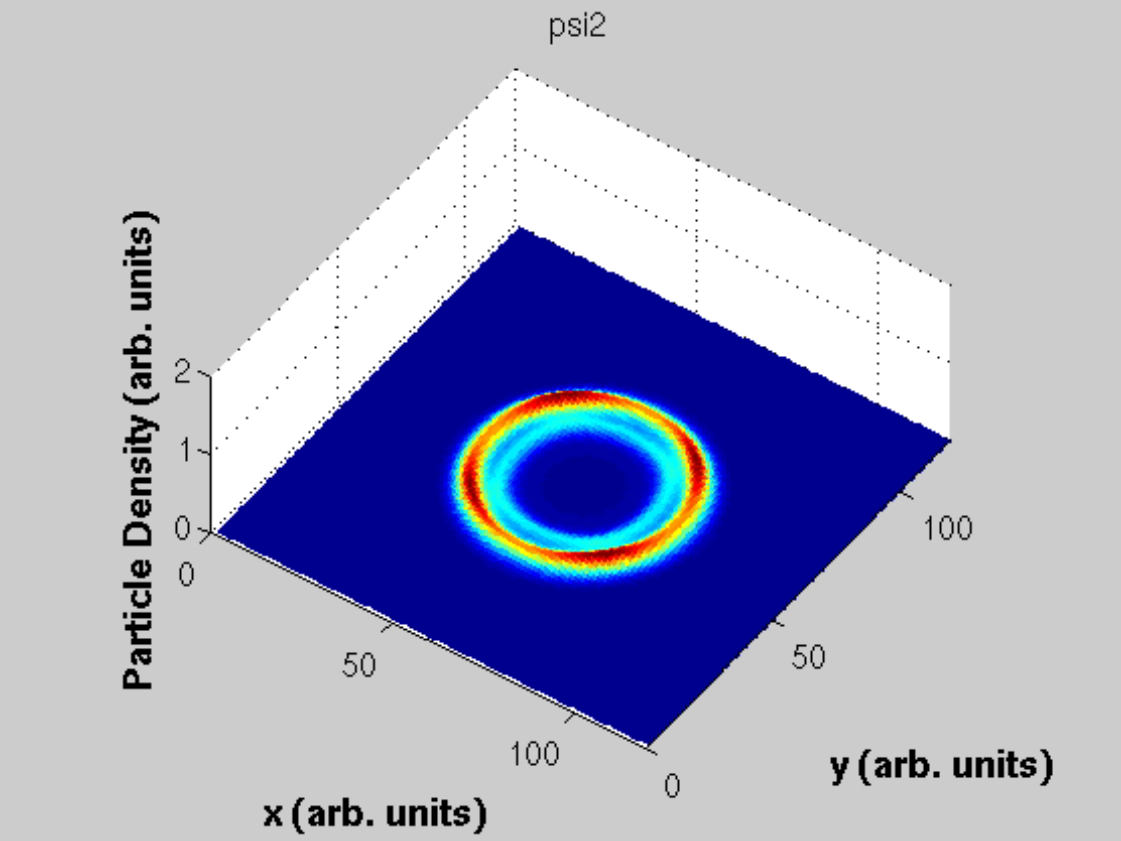}\\
~\\
(c) \includegraphics[width=6cm]{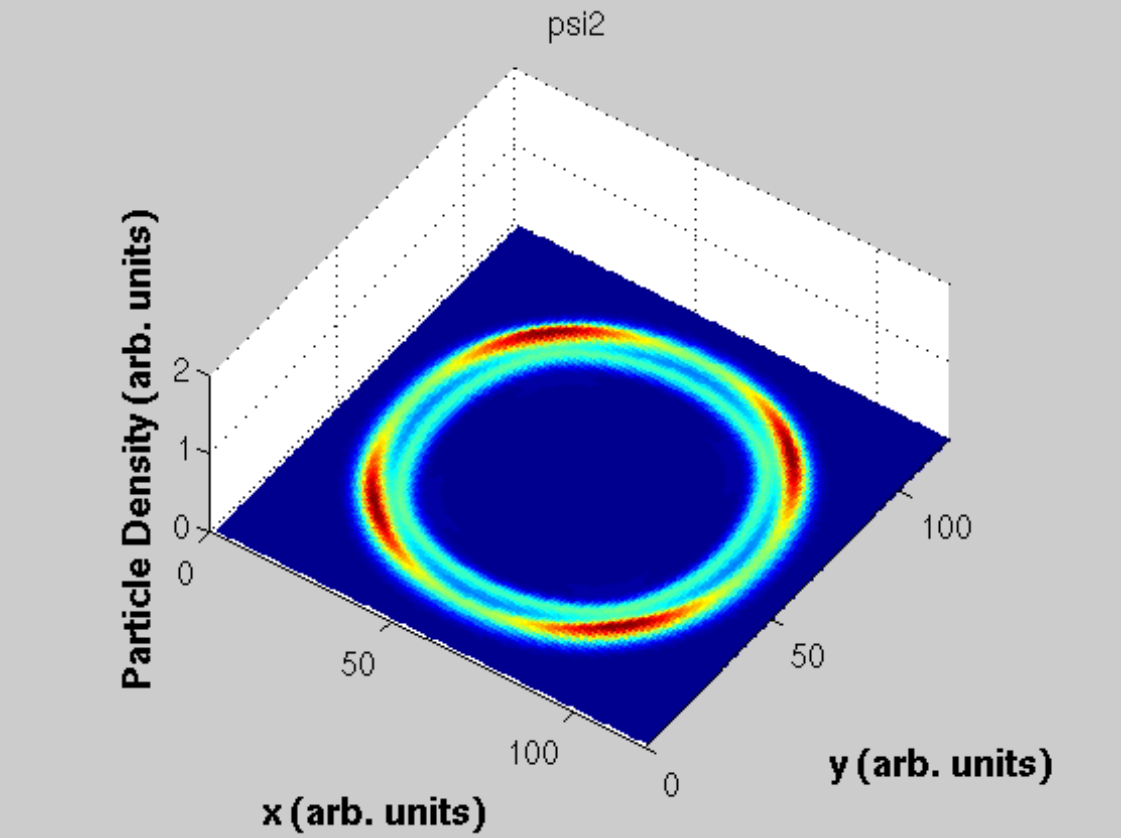}  (d) \includegraphics[width=6cm]{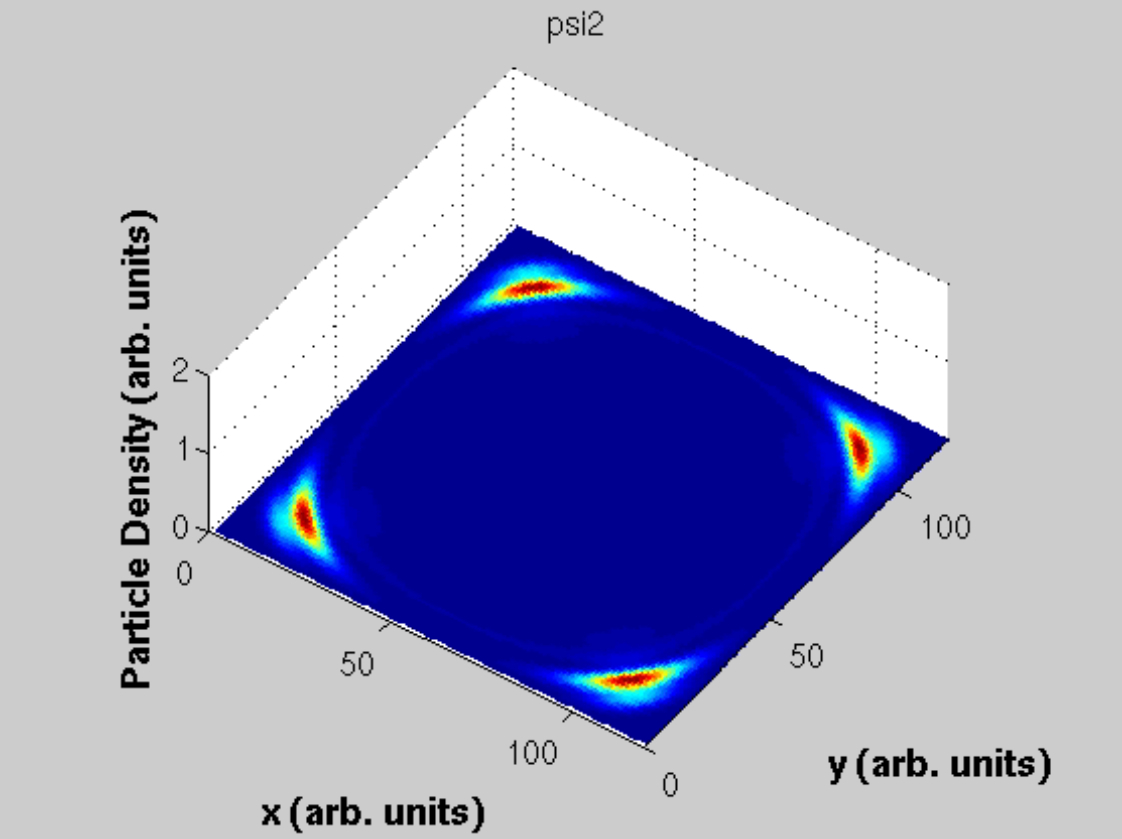}\\
~\\
\caption{Time-evolution of the single-particle density of an incoherent superposition of four Gaussian wave packets.}
\label{Fig1}
\end{figure}

\subsection{Gaussian superposition states}

We consider a mixed state of a zero-mass Dirac fermion  consisting of four  Gaussian wave packets centered, respectively,  at ${\bf k}_i= (\pm k_x,\pm k_y)$ with $E_x=\hbar |k_x| v=E_y=\hbar |k_y| v=0.01eV$.  All initial Gaussians share the same spectral width, average energy, and  center position, located at the center of the simulation region.  For this simulation the bra-ket scheme is used to study the propagation under free-particle conditions ($V=0$ in the simulation region).   Absorbing boundary conditions are implemented by means of an absorption layer via an imaginary potential contribution.\cite{MA-SL}  

Figs. 1(a)-1(d)  show snapshots of the single-particle density of states at four different times.  Compared to a Gaussian wave packet evolution under the Schr\"{o}dinger equation,  zero--mass Dirac fermions move dispersion-free.  More precisely, due to the linear dispersion, plane-wave contributions move with $\pm v\frac{{\bf k}}{k}$.  While the present lattice schemes maintain a monotonic energy dispersion in form of a single cone on the lattice, the linear dispersion regime is limited to small values of $k$ and $\Delta_t$, see Eq. \eqref{EKDIR} for the direct scheme.  
The lattice cone is represented by two energy bands of finite width which touch at $k=0$ and display particle-hole symmetry.   The analytic form for the dispersion underlying the bra-ket scheme was given in Ref. \cite{hammer3d}.    

 On a TI surface the number of flavors (Dirac cones) is odd and  Dirac fermions are helical. Hence charge transport is linked to spin transport, with  the particle spin being locked perpendicular to momentum.  For the present simulation the spin current density is shown in several snapshots (corresponding to Fig. \ref{Fig1}) of the expanding wave packets given in  
Fig. \ref{Fig2}.   For each snapshot the length of the "spin vector" is taken relative to the maximum value present at this particular time.  For the superposition of four Gaussians, the initial state is spin-unpolarized by symmetry.  However as the density current density evolves local spin density builds up and propagates along with the wave packets.  These results have been obtained within  the direct scheme as well.

\begin{figure}[h]
\centering
(a)  \includegraphics[width=6cm]{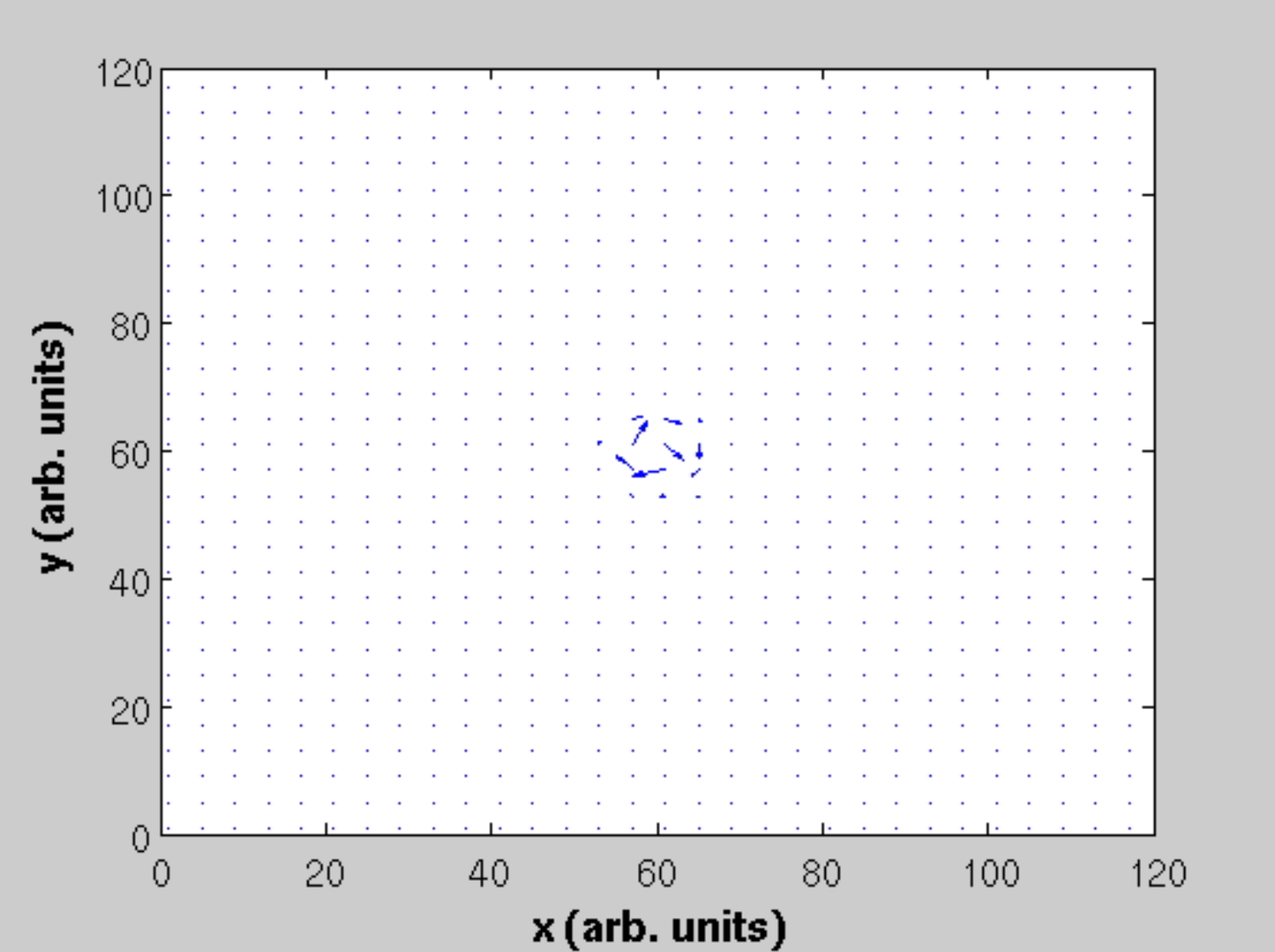}  (b) \includegraphics[width=6cm]{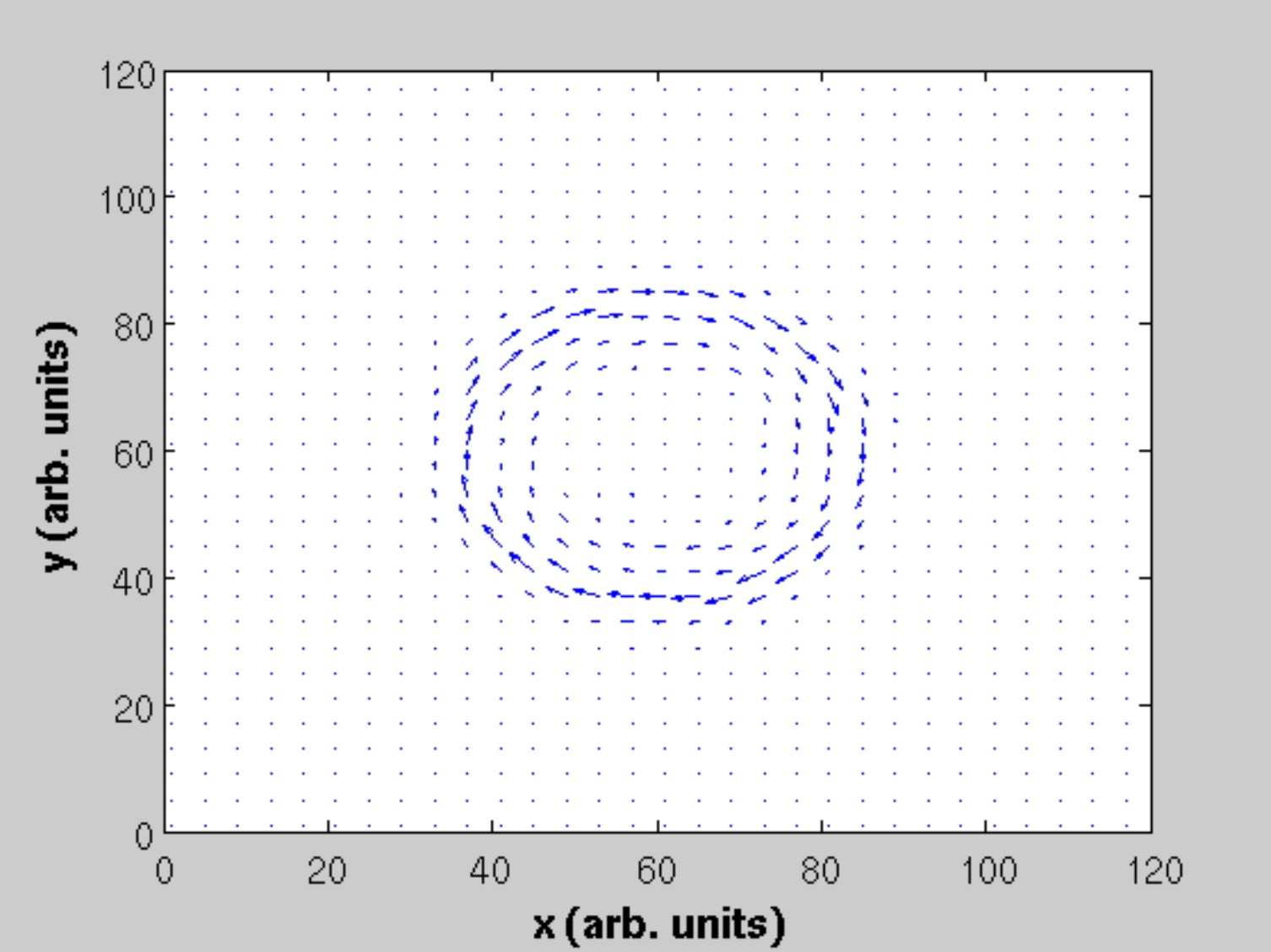}\\
~\\
(c) \includegraphics[width=6cm]{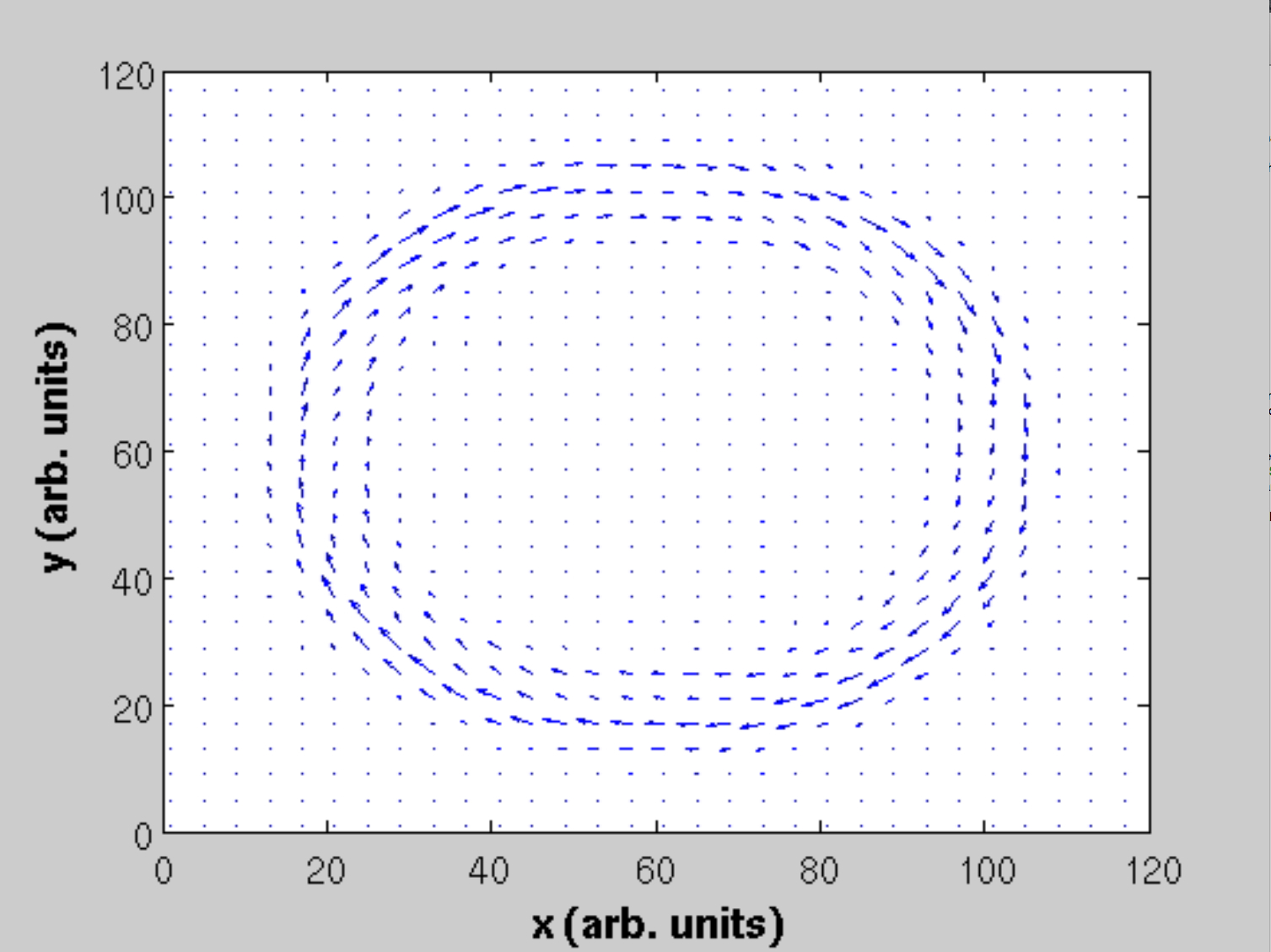}  (d) \includegraphics[width=6cm]{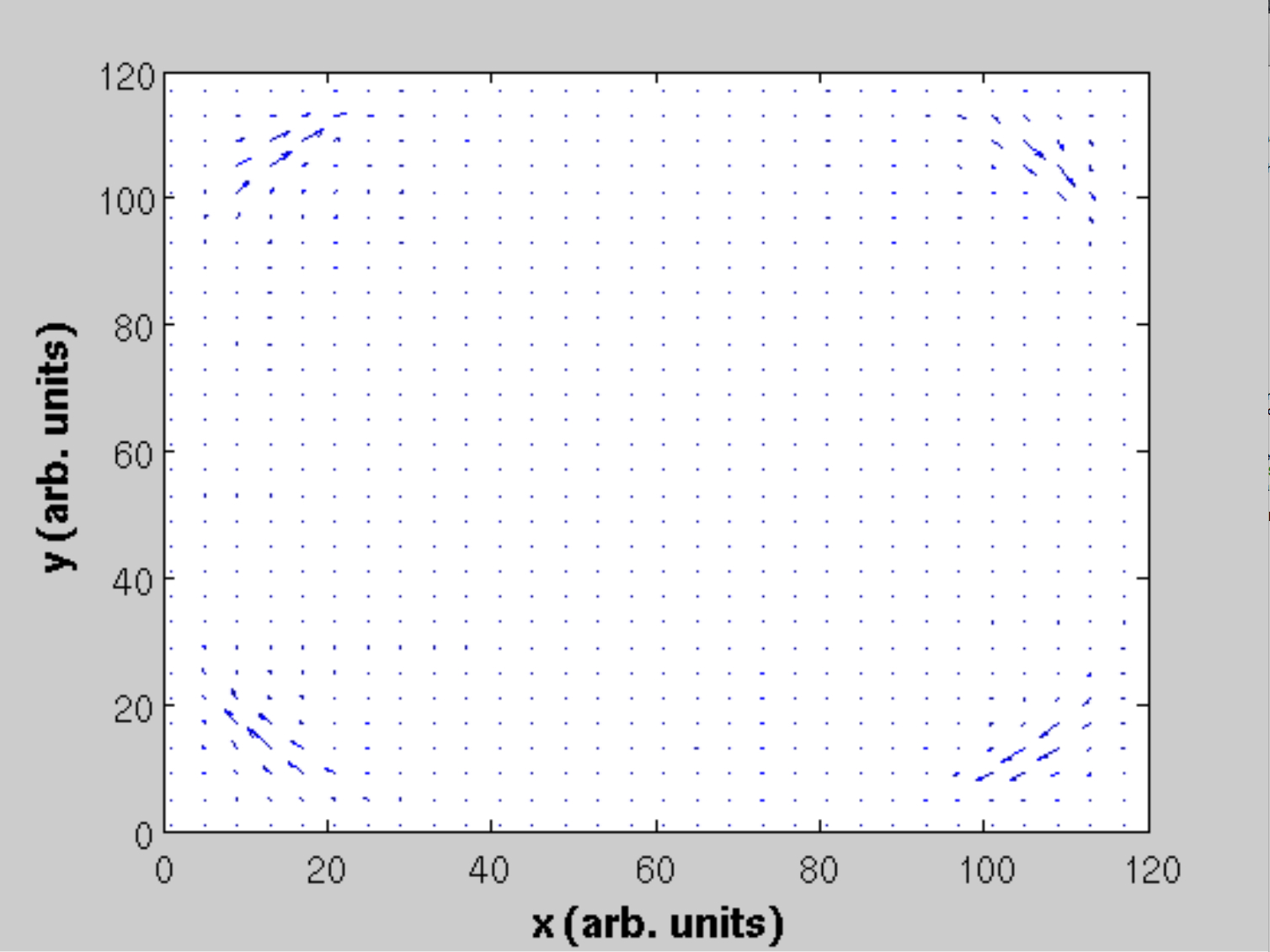}\\
~\\
\caption{Time-evolution of the spin current density of an incoherent superposition of four Gaussian wave packets.}
\label{Fig2}
\end{figure}

\subsection{Particle injection from an electric  contact}

For a second example we consider particle injection from an electric contact which is attached to one end of the rectangular simulation region.  Next to an absorbing layer placed around the simulation region to absorb out-flowing contributions, particle injection from the contact is simulated using the linearity of the Dirac equation which is preserved by the difference schemes: we cut the density matrix into two orthogonal pieces according to position: the density matrix on the grid within  the simulation region and the density matrix on the grid in the contact region.  We first construct the density matrix from plane-wave solutions of the free-particle Dirac equation with occupation probabilities given by the Fermi-Dirac distribution function at temperature $T=0K$.  We consider a filled Fermi sea of negative-energy states $E<0$ and study electron injection between in the energy interval $[0,E_F]$.   Since the simulation is based on periodic boundary conditions it is somewhat advantageous to use injection based on a one-sided Fermi-Dirac distribution, i.e. to omit outgoing states from the start rather than damping them out in the absorption layer.  In a first step we compute one time-step for the evolution of the truncated density matrix in the contact and record the contribution to the density matrix which arises on the grid of the simulation region.  From then on this contribution is added step-by-step to the calculation of the time evolution of the density matrix on the grid of the simulation region.  The basic assumption, common in transport simulations,  is that particle injection provided by the contact is independent of the outflux of particles.  

The buildup of particle density due to the presence of a contact at the left end of the simulation region is shown in Fig. \ref{Fig3}.  For this simulation 120x120 mesh points, including 2 absorbing layers, adjacent and opposite to the contact as shown in Fig. \ref{Fig4}(a), were used.  A Fermi energy  $E_F=0.01$eV and 50 randomly generated incident plane waves were used to construct the equilibrium density matrix. 
The simulation was performed over 500 time steps (third images in Fig. \ref{Fig3}).  At final simulation time, steady-state is not quite reached.  Also noticeable are two slight ridges in particle density propagating into the system. These can be attributed to the fact, that the construction of the equilibrium density matrix is based on plane waves for an infinite system, while the simulation is based on periodic boundary conditions 
(in y-direction).   They are not resolved in the spin texture.   A more detailed study of contact simulations based on this finite-difference approach will be given elsewhere.\cite{Poellau}

\begin{figure}[h]
\centering
(a)  \includegraphics[width=5cm]{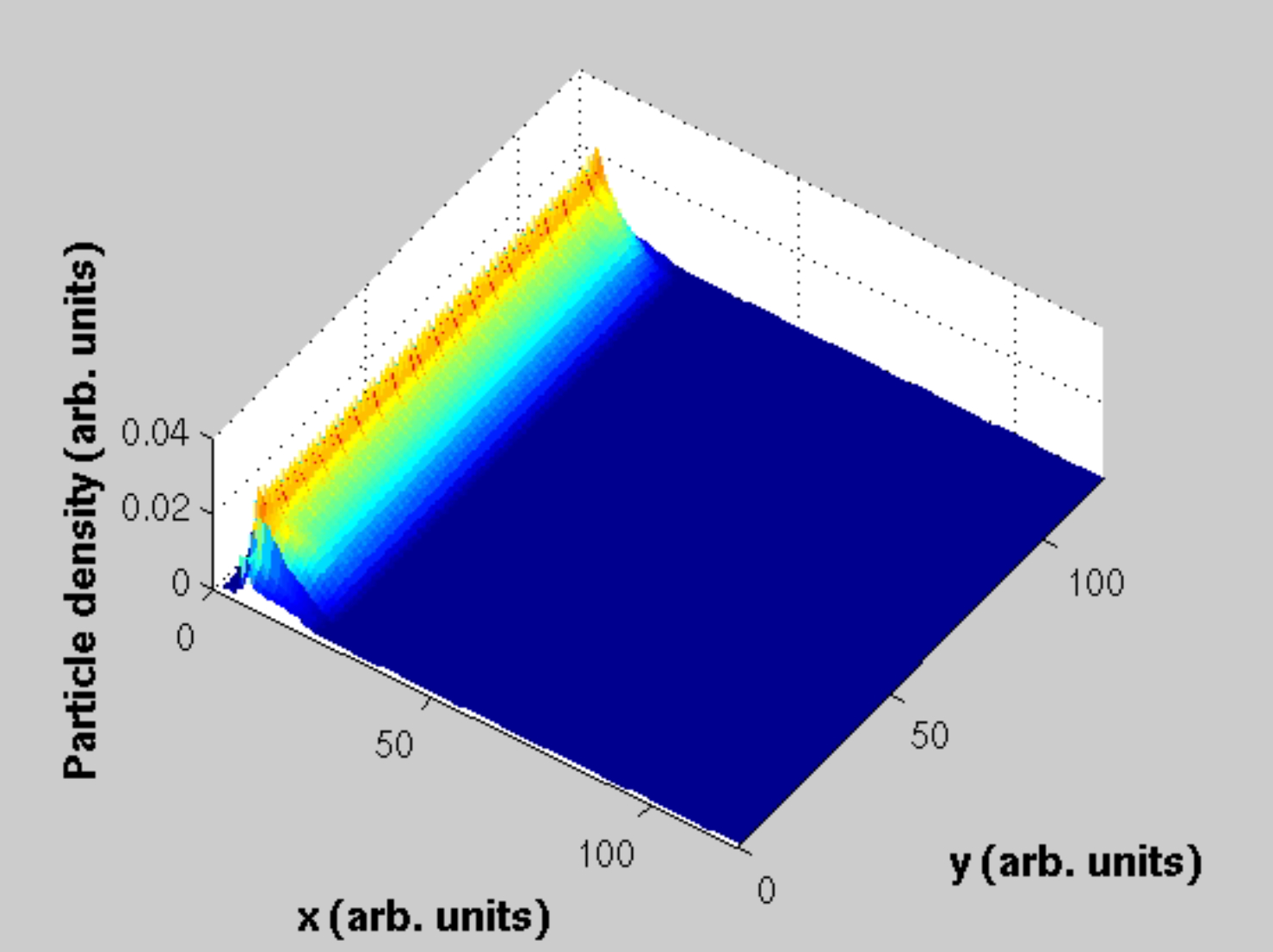}   \includegraphics[width=5cm]{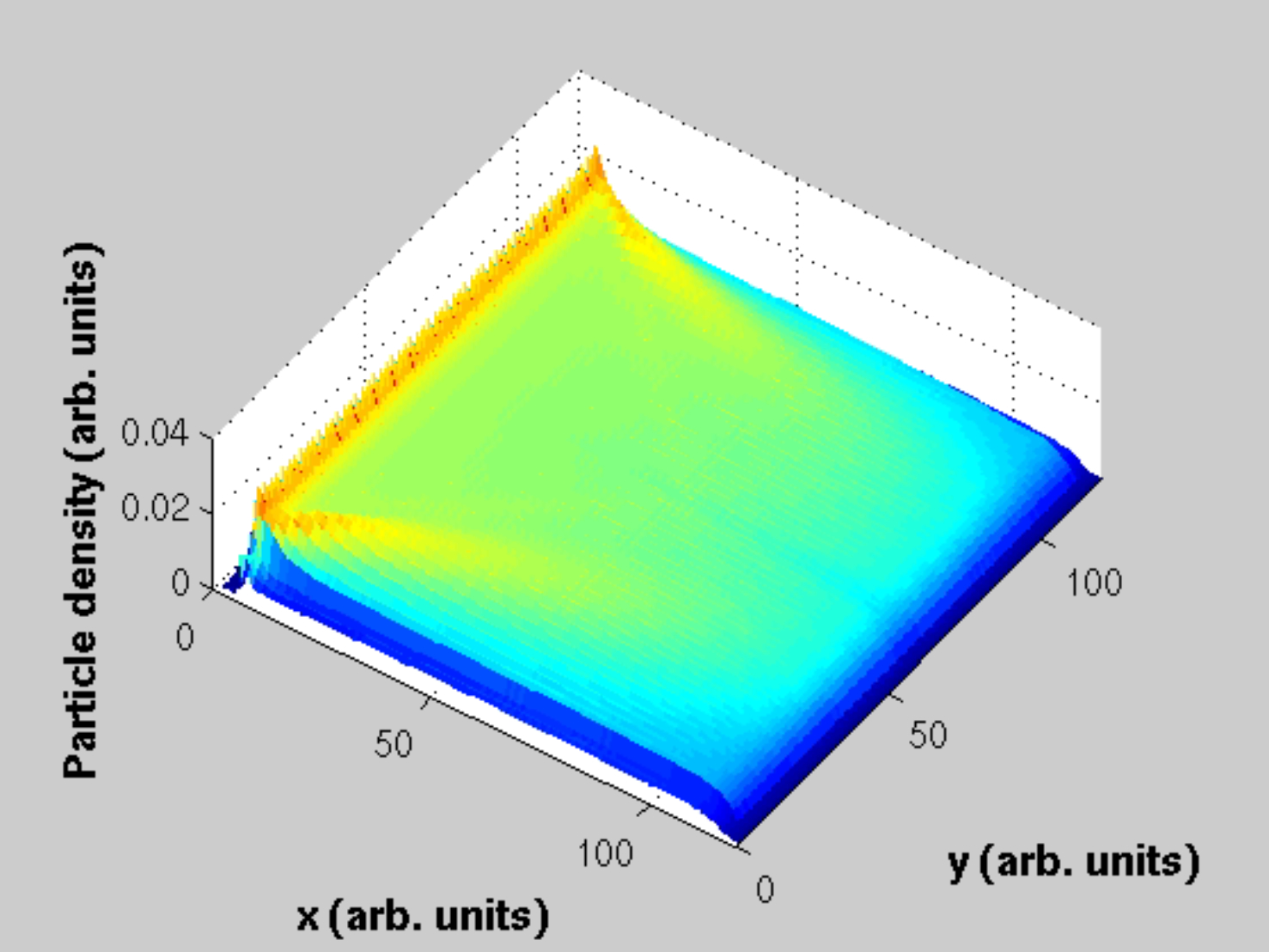}   \includegraphics[width=5cm]{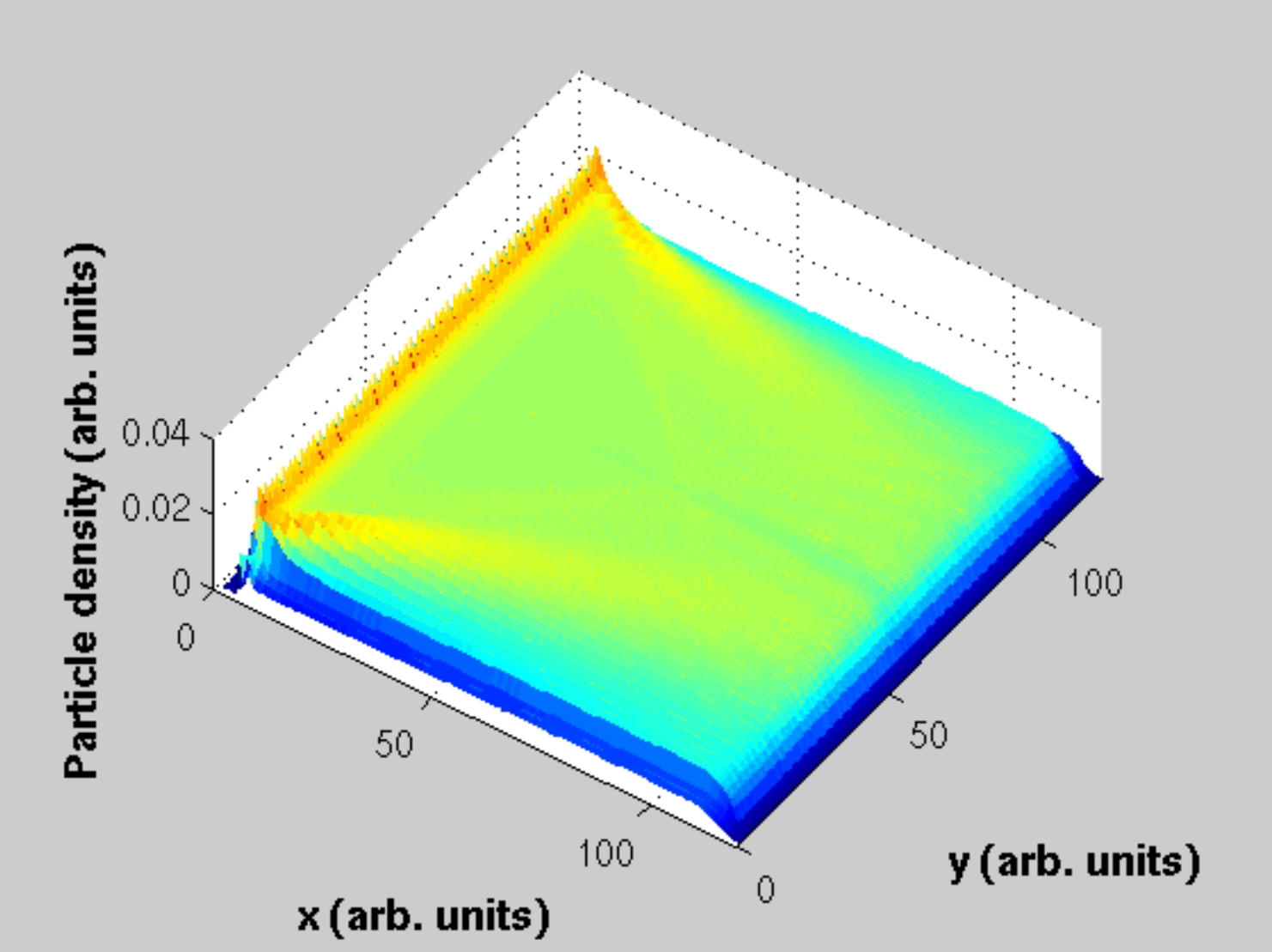}\\
~\\
(b)  \includegraphics[width=5cm]{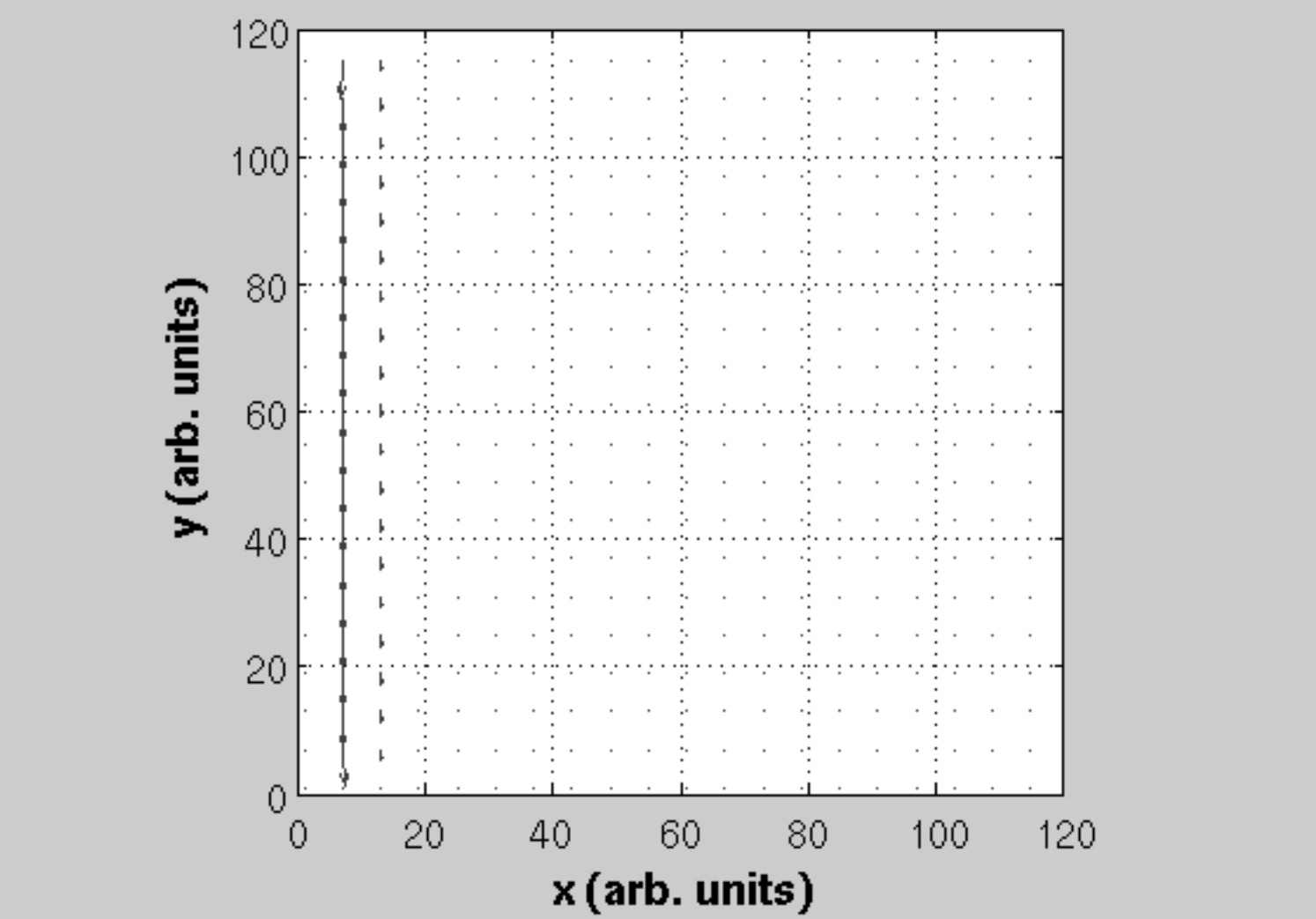}   \includegraphics[width=5cm]{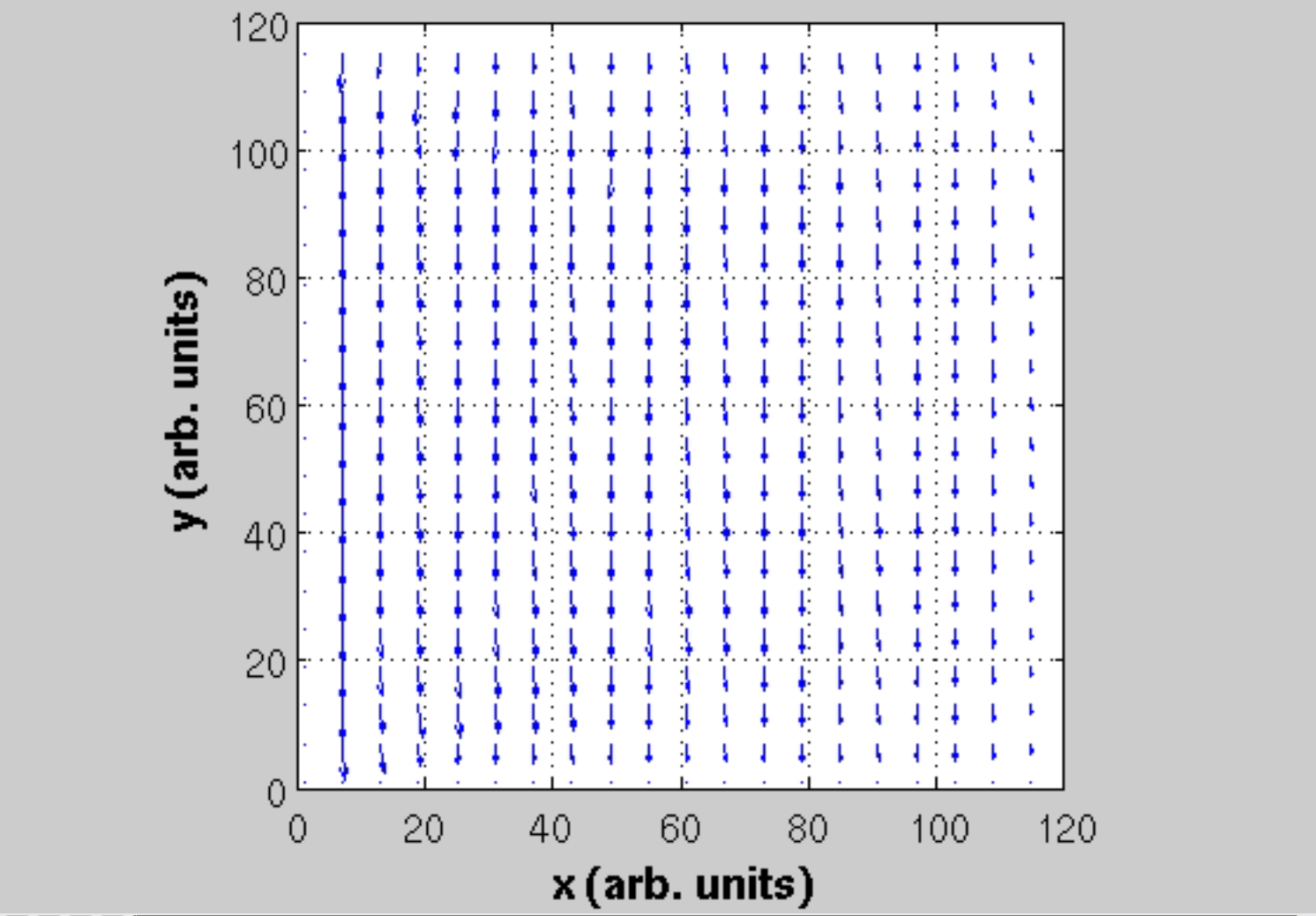}   \includegraphics[width=5cm]{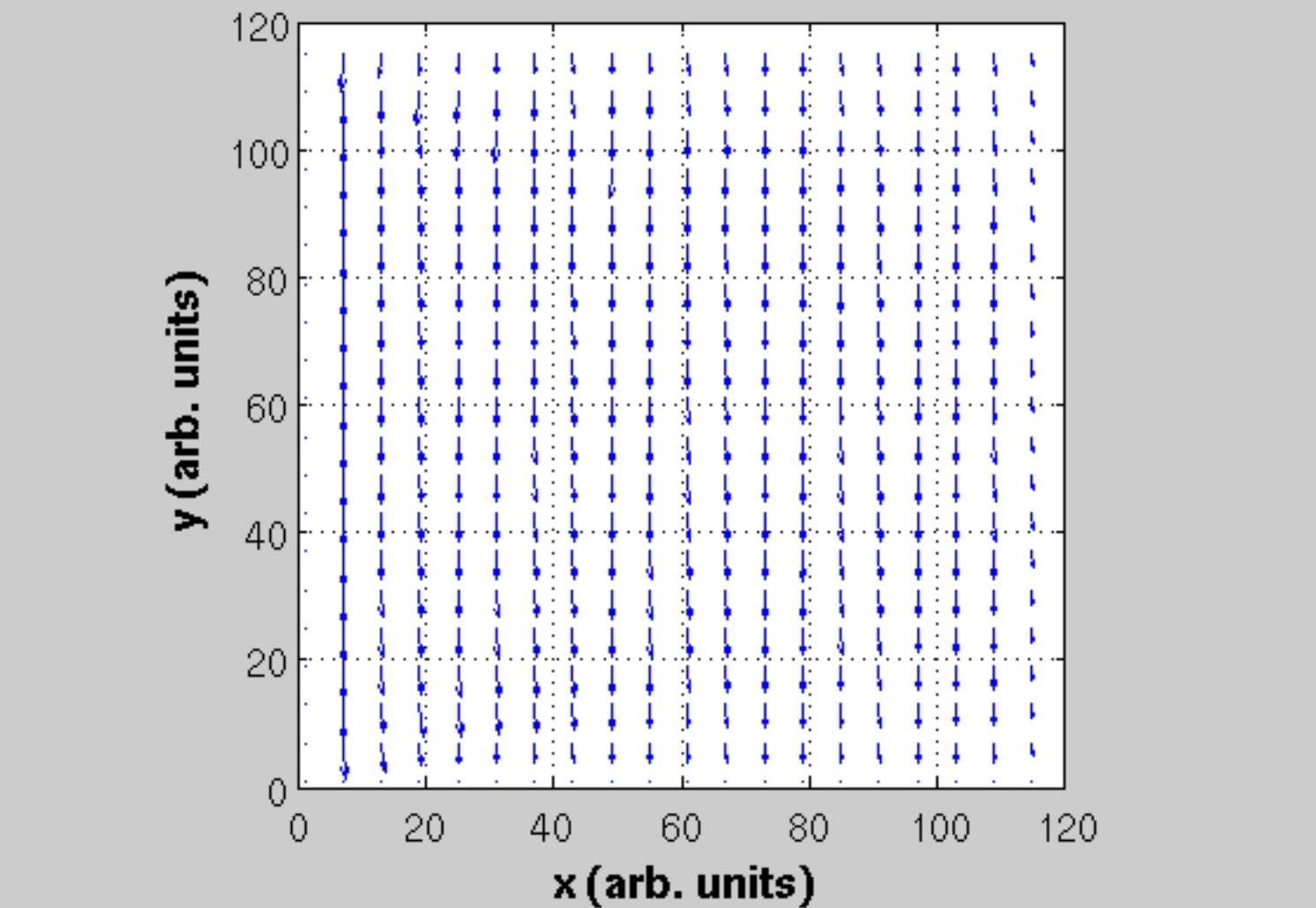}\\
~\\
\caption{(a) Charge buildup due to particle injection from a left-sided contact (particle reservoir) in thermal equilibrium at $T=0$.  The simulation region of 900nm$\times$900nm is discretized by 120$\times$120 mesh points. The corresponding applied bias is $0.02$eV. (b) Evolution of spin texture due to particle injection.}
\label{Fig3}
\end{figure}

Examples for imaginary potential contributions for the modeling of absorbing boundary conditions are given in Fig. \ref{Fig4}.  

\begin{figure}[h]
\centering
(a)  \includegraphics[width=5cm]{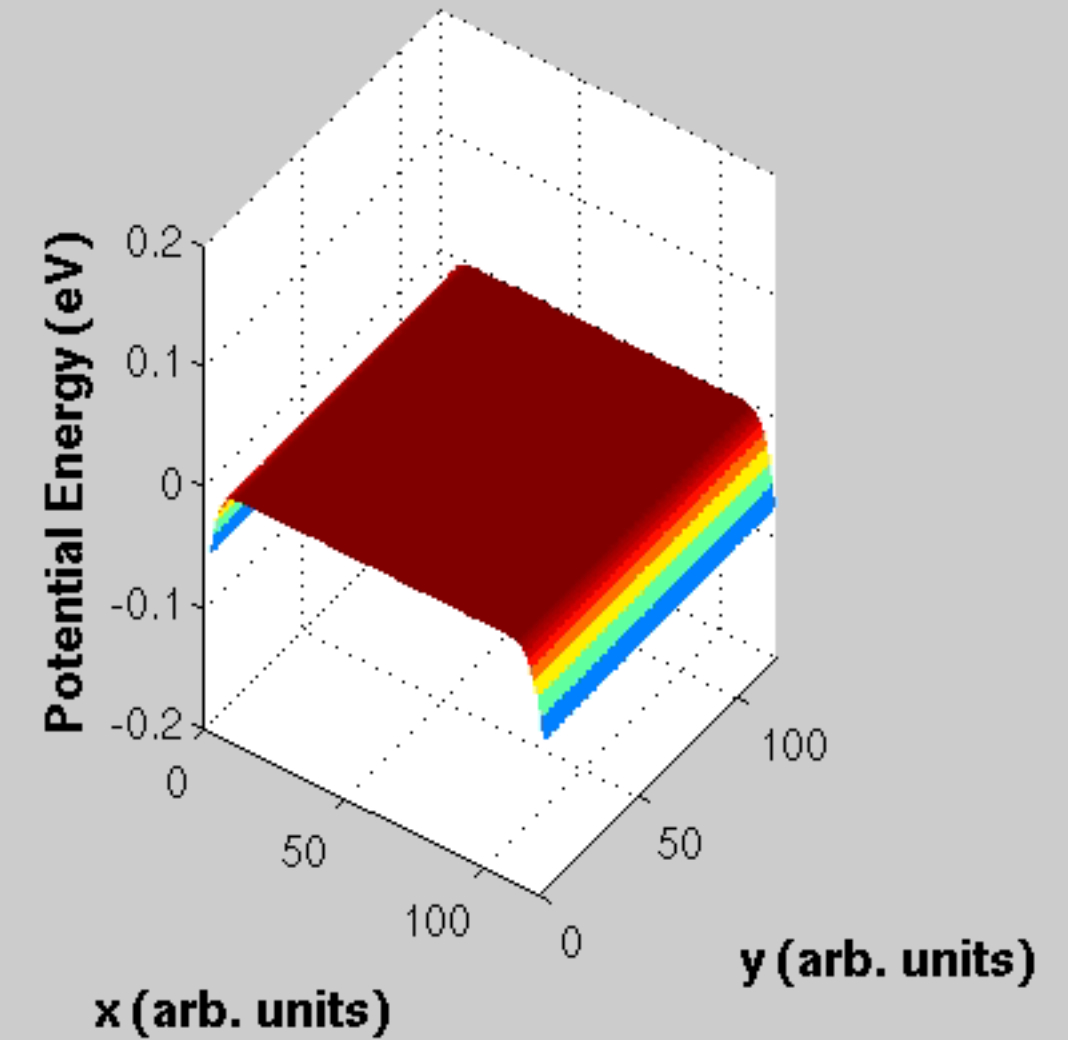}  (b)  \includegraphics[width=5cm]{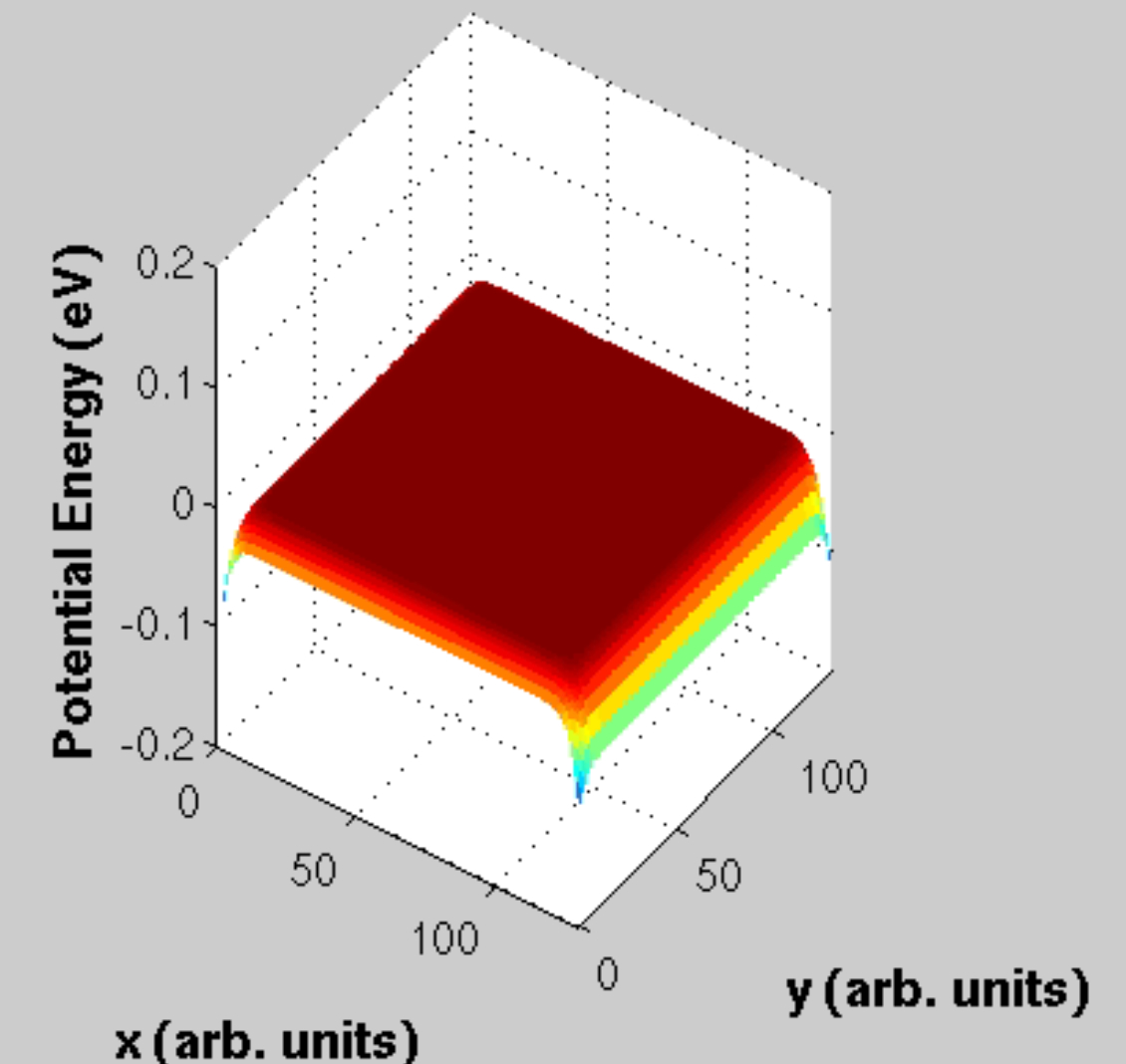} \\
\caption{(a) Imaginary potential contribution used to model absorbing boundary conditions: (a) absorbing layer at the contact end  and opposite side only, (b) absorbing layer surrounding the simulation region.}
\label{Fig4}
\end{figure}

\begin{figure}[h]
\centering
(a)  \includegraphics[width=6cm]{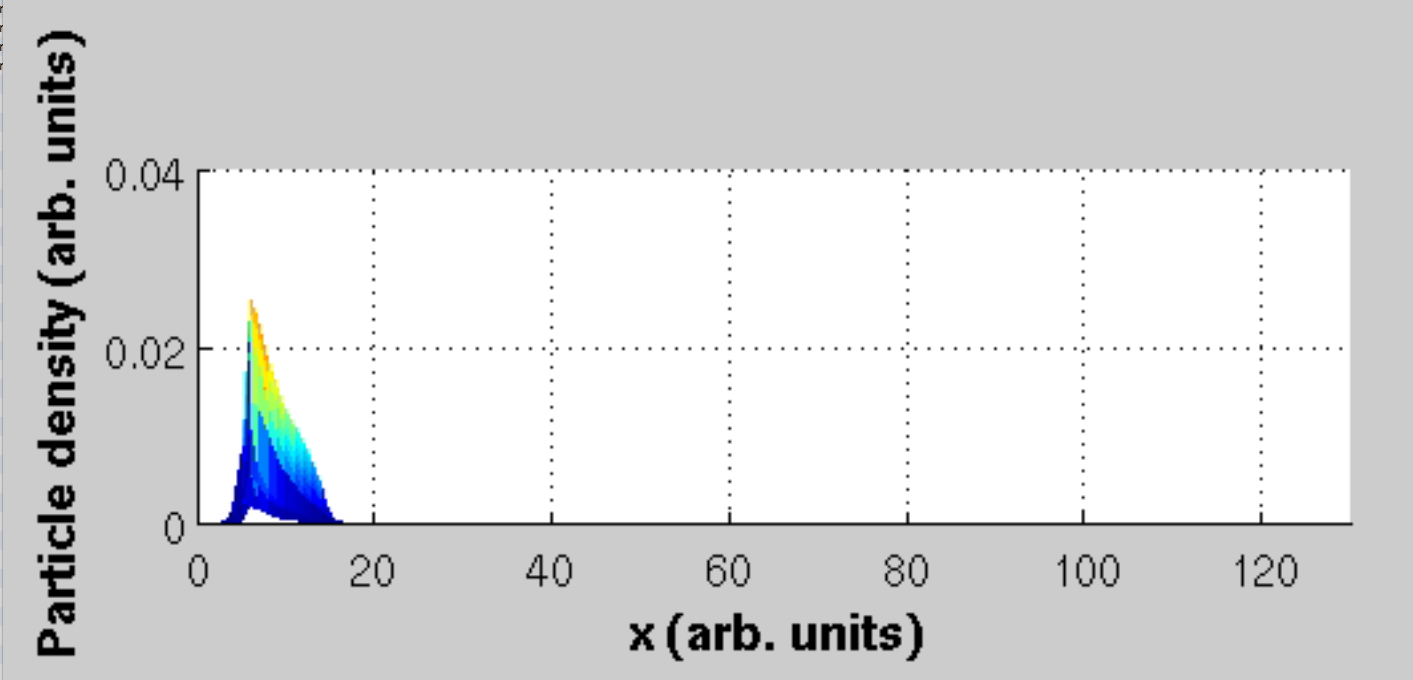}   (b) \includegraphics[width=6cm]{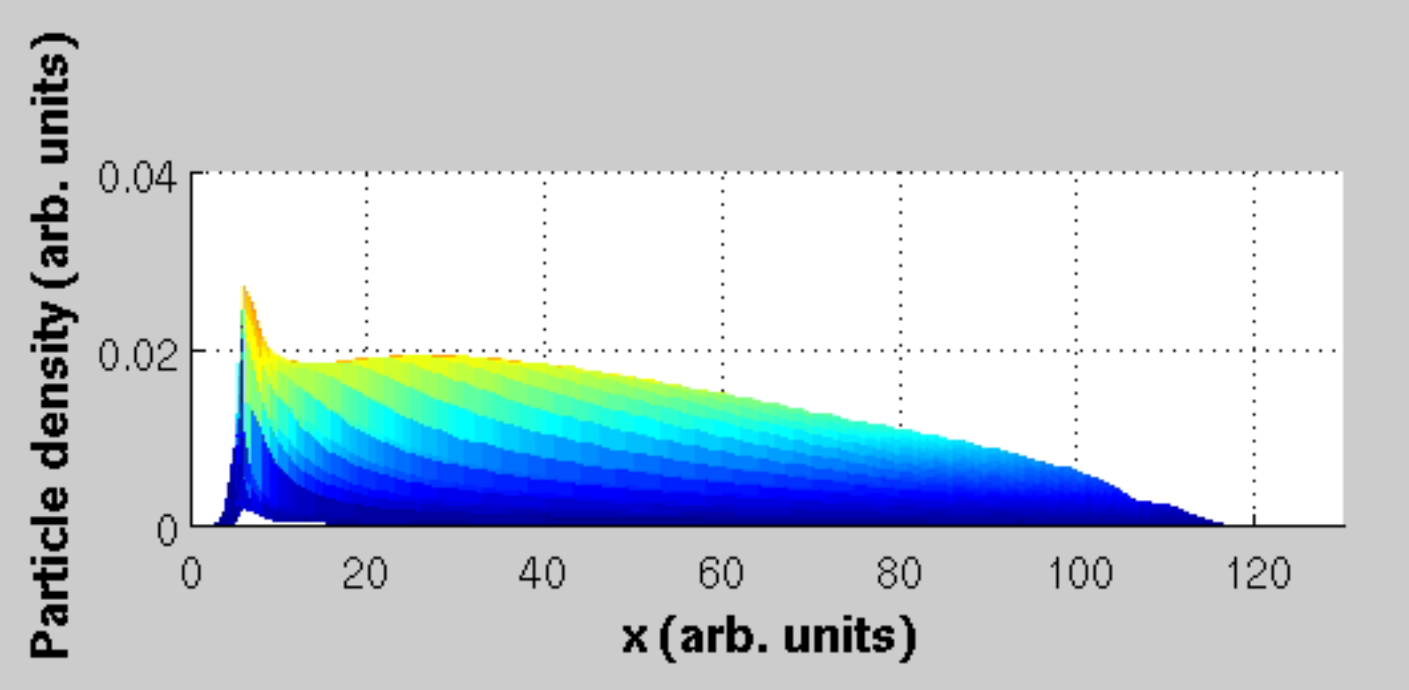} \\
(c)  \includegraphics[width=6cm]{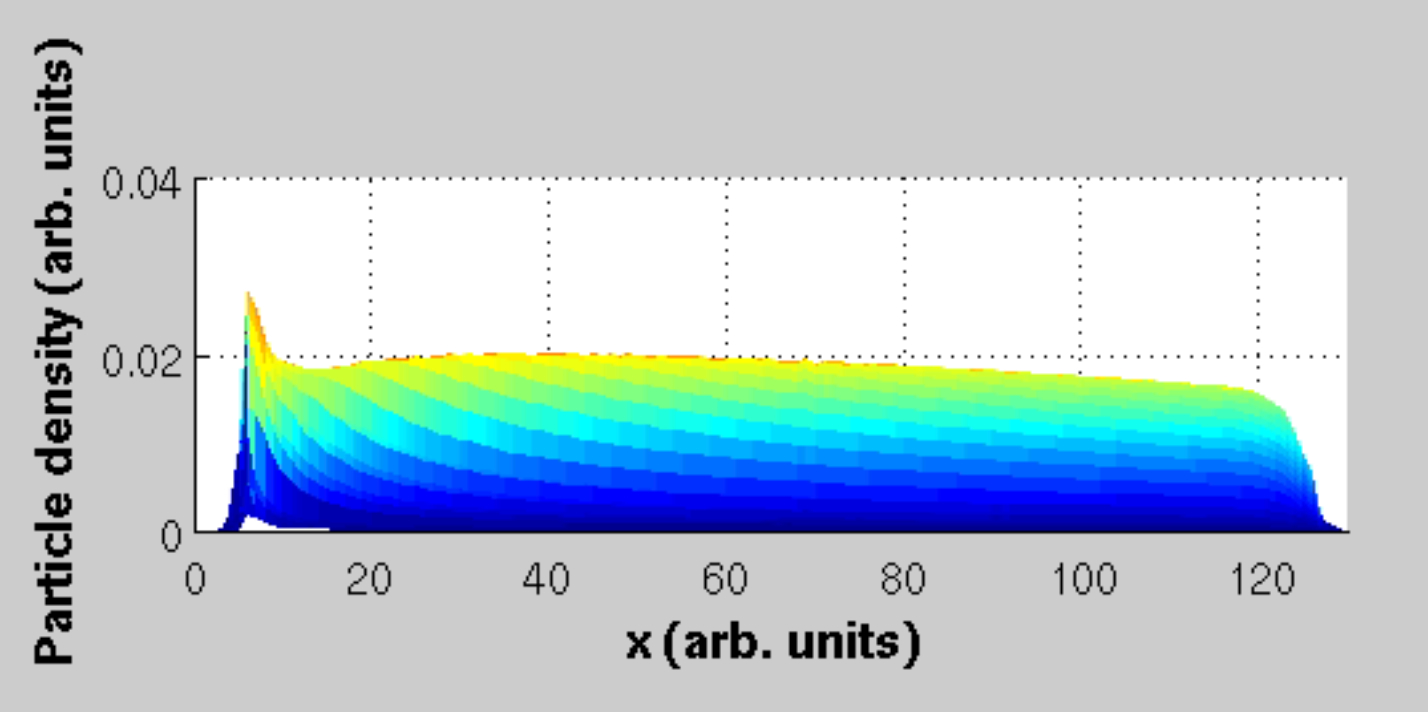}   (d)  \includegraphics[width=6cm]{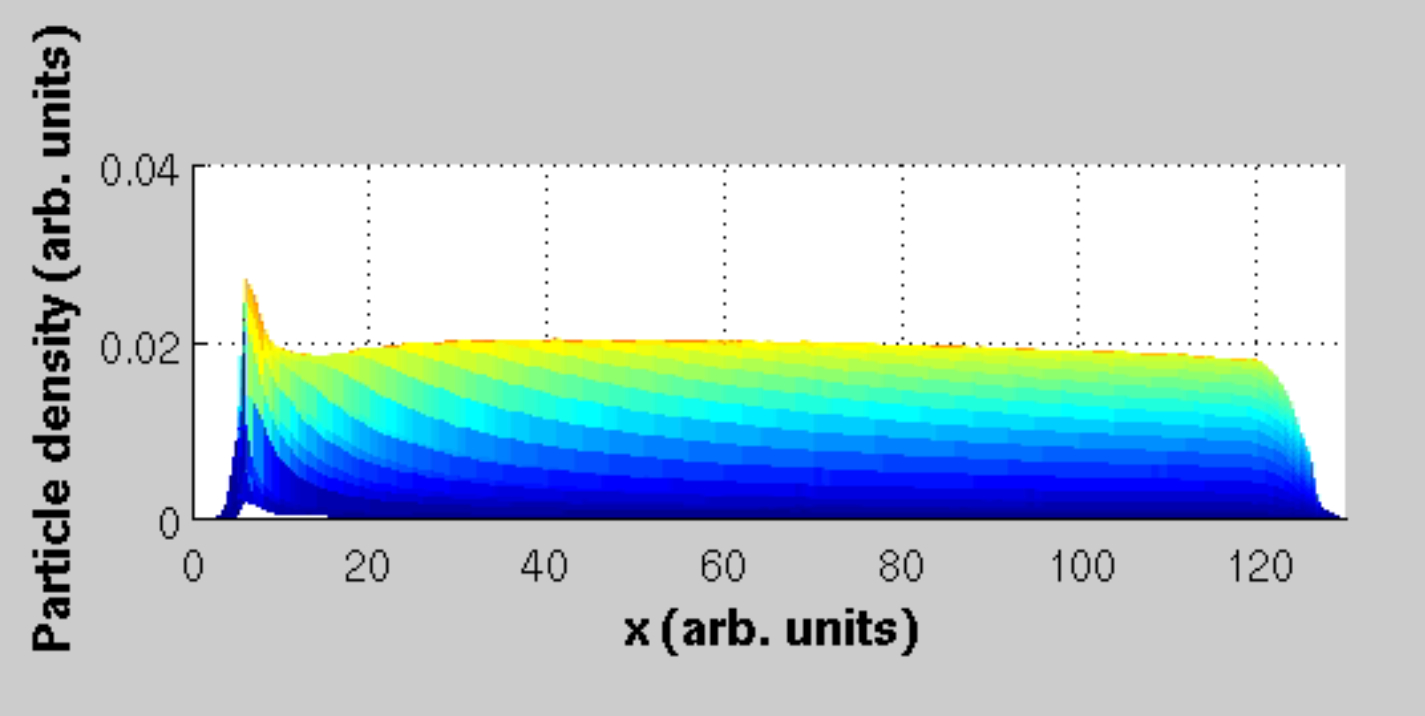} 
~\\
\caption{ Side view of the charge buildup due to particle injection from a left-sided contact (particle reservoir) in thermal equilibrium at $T=0$.  The simulation region of $L_x\times L_y$=900nm$\times$900nm is discretized by 130$\times$130 mesh points.  The applied bias is $0.02$eV. (a) 50 time steps, (b) 200 time steps, (c) 400 time steps, (d) 600 time steps. }
\label{Fig5}
\end{figure}

In Fig. \ref{Fig5} we show a side view of the charge build-up due to the action of an electric contact.  Input data are as above, however, a grid size of 130$\times$130 and the imaginary potential displayed in Fig. \ref{Fig4}(b) was used.  The time step $\Delta_t$ is 3.9$\times10^{-15}$s. Towards the central corridor of the simulation region fairly homogeneous charge distribution is reached.  Due to the additional absorption regions (compared to Fig. \ref{Fig4}(a)) near $y=0$ and $y=L_y$ parallel to the average propagation direction $x$ some loss of particle density is obtained at the y-boundaries.  Steady state is reached at around 600 time steps.

\section{Extensions of the schemes}\label{EXT}

\subsection{External magnetic fields}

The numerical schemes presented here can directly be used to model the surface-state contributions to electron transport in topological insulators.  These contributions inherently provide spin-polarized electric currents which may find application in spintronic devices.   Disorder effects within this approach can be treated at a microscopic level via an explicit  account of potential fluctuations.  For this purpose it will be useful to implement general electromagnetic potentials into the scheme.
In presence of the  vector potential the kinetic momentum entering the Hamiltonian Eq. \eqref{H} is modified from ${\bf p}$ to  $\left({\bf p}-\frac{q}{c} {\bf {\cal A}}(x,y,t)\right)$\footnote{$q$ is the fermion electric charge.} and is incorporated via the  Peierls substitution for the spinor components $\psi$ on the lattice \cite{peierls,hammer1d,hammer2D2cone}.   For the present case it takes the form  
\begin{eqnarray}
{\rho_{ij}}_{x_i,y_i; x_j,y_j}^{t_i ;t_j}  \rightarrow { \hat{\rho_{ij}}}_{x_i,y_i; x_j,y_j}^{t_i ;t_j} := \exp\{ -i ( a^{t_i}_{x_i,y_i}-a^{t_j}_{x_j,y_j})\} {\rho_{ij}}_{x_i,y_i; x_j,y_j}^{t_i ;t_j}~.
\label{peierlss}
\end{eqnarray}
The 
real phase contribution $a^{t_i}_{x_i,y_i}$ is defined as the line integral of the  vector potential ${\bf {\cal A}}$, starting at an arbitrary, but fixed position $(x_o,y_o)$ on the grid and ending on the lattice point 
$(x_i,y_i)$,
$$
a^{t_i}_{x_i,y_i}= \frac{q}{\hbar c}\int_{(x_o,y_o)}^{(x,y)} d{\bf s}\cdot {\bf {\cal A}}({\bf s},t)\mid_{x=x_i, y=y_i, t=t_i}~.
$$
Replacing ${\rho_{ij}}$ by the Peierls-transformed ${\hat\rho_{ij}}$ in Def.  \ref{bra-ket} and Def.  \ref{defdi} gives the respective numerical scheme in presence of an electromagnetic vector potential.  Therefore,   the gauge-invariant introduction of the electromagnetic vector potential via the Peierls substitution on the grid leaves the stability properties of the respective scheme intact.

\subsection{Lindblad equation and Green's function approach}

The two schemes presented here can be extended to a numerical treatment of the Lindblad equation for the Dirac Hamiltonian
\begin{equation}\label{Eqrho1}
\dot{\rho}(t)=\frac{1}{i\hbar} \left[H(t),\rho(t)\right]+ D\{\rho\},
\end{equation}
with the dissipator 
\begin{equation}\label{D}
D\{\rho\} =\sum_\mu D_\mu\{\rho\}  =\sum_\mu\left[L_\mu\rho(t)L_\mu^{+}-
\frac{1}{2}\{L_\mu^{+}L_\mu,\rho(t)\}\right]~,
\end{equation}
and  initial condition $\rho(0)=\rho_0$,
the dissipation term  containing Lindblad operators $L_\mu, \mu=1,...$ with 
the curly bracket $\{...,...\}$ denoting the anti--commutator.   Discretization of the dissipator follows the placement of the density matrix elements on the staggered grid,  as discussed in Sect. \ref{placement}.  The Lindblad extension allows a treatment of Dirac fermions as an open quantum system, i.e. an account of dissipative effects (within a Markov approximation) in the system arising, for example, from electric contacts, phonons, nuclear magnetic moments, and particle transfer between surface and bulk states. 

Next to the density matrix approach, the Green's function approach is the most versatile approach used in non-equilibrium electron  transport theory.\cite{wien2}  Since the (single-particle) Green's function fulfills the Dirac equation, in a real-space-time formulation of the latter a single-cone  implementation on a lattice is ensured following the procedure outlined in this paper.   The main difference is that the (single-particle) Green's function inherently uses two time indices, while the density matrix only needs one (in the continuum limit).   Hence, on the lattice introduced above, a single-particle Green's function matrix element needs up to four time sheets.

\section{Summary and Outlook}

In summary we have proposed two closely related single-cone finite-difference schemes for the numerical treatment of the von Neumann equation for (2+1)D Dirac fermions.  Main mathematical properties, such as stability the underlying energy--momentum dispersion relation, have been explored.  Elementary numerical examples for Dirac fermion propagation characterized by particle density and spin texture have been given.   

Numerical solution of the time--dependent Dirac equation is quite memory intensive.  Comparison of the two schemes shows that the advantage of the direct scheme regarding the number of matrix multiplications per time step is compensated by the reduced radius of convergence when compared to the bra--ket scheme.  In MATLAB which we used for numerical analysis global matrix operations as compared to do--loops are most efficient.  Therefore it is important to keep the number of full-sized matrices to a minimum.   Our schemes are direct and are highly efficient since they scale proportional to the square of the number of spatial grid points.  Nevertheless, 
the computation of the coherent part of the time-evolution is redundant since it is performed on left- and right-sided matrix indices independently.  As evident from the bra-ket scheme, one may proceed as follows: decompose the initial density matrix
in its orthonormal  basis according to Eq. \eqref{mixeds} and propagate the ket, transpose the ket to get the bra's time-evolution, and construct the density matrix.\cite{hammer3d}  Depending on the number of basis functions needed to represent the problem this may be significantly less memory expensive.   Extension to a treatment of the Lindblad equation along these lines, termed the quantum--jump approach,  has been explored for the case of the Schr\"{o}dinger equation.\cite{Gardiner-Zoller} Conceptual advantages of a density matrix or Green's function formulation arise when dissipative effects and interaction processes, in general,  need to be included, in particular for systems with a continuous spectrum  typical to quantum transport.\cite{Datta} 

Open questions arise regarding the treatment of simulation boundaries.  In the present work a combination of periodic and absorbing boundary conditions has been used to suppress communication between opposite simulation boundaries.  We are currently exploring the potential  of 
Lindblad operators to imprint the signature of  electric contacts onto the simulation region.\cite{Poellau}  Finally there are open questions regarding conservation of positivity of the direct scheme for the case of general position- and time-dependent electromagnetic potentials.


\section*{References}

\newpage
%

\begin{appendix}
\section{Explicit  form of the direct scheme \label{A}}

Here we give the explicit form of  (\ref{11}) - (\ref{22}) for the direct scheme.  Pairs of subscripts denote spatial grid position, the superscript denotes the time sheet.  For clarity, we use double time superscripts throughout.\\

For density matrix elements initially placed on ${\cal G}_1 (j_x, j_y, j_0-\frac{1}{2})$ and  ${\cal G}_2 (j_x + \frac{1}{2}, j_y, j_0)$, potential and mass terms on $\bar{\cal G}_1(j_x,j_y,j_0)$ and $\bar{\cal G}_2 (j_x+\frac{1}{2}, j_y, j_0 +\frac{1}{2})$ one has
\begin{align}
\begin{array}{l}
\frac{{{\rho_{11}}}_{j_x,j_y;j_{x'},j_{y'}}^{j_o + \frac{1}{2}; j_{o} + \frac{1}{2}} - {\rho_{11}}_{j_x,j_y;j_{x'},j_{y'}}^{j_o - \frac{1}{2}; j_{o} - \frac{1}{2}} }{\Delta_t} =
\frac{1}{i \hbar}({m_+}_{j_x, j_y}^{j_o} - {m_+}_{ j_{x'}, j_{y'}}^{ j_{o}}) \frac{{\rho_{11}}_{j_x,j_y;j_{x'},j_{y'}}^{j_o + \frac{1}{2}; j_{o} + \frac{1}{2}} + {\rho_{11}}_{j_x,j_y;j_{x'},j_{y'}}^{j_o - \frac{1}{2}, j_{o} - \frac{1}{2}}}{2}\nonumber \\ -
v\frac{{\rho_{21}}_{j_x,j_y + \frac{1}{2};j_{x'},j_{y'}}^{j_o ; j_{o}- \frac{1}{2}} -{\rho_{21}}_{j_x,j_y - \frac{1}{2};j_{x'},j_{y'}}^{j_o ; j_{o}- \frac{1}{2}} }{\Delta_y}-
iv\frac{{\rho_{21}}_{j_x+ \frac{1}{2},j_y ;j_{x'},j_{y'}}^{j_o ; j_{o}- \frac{1}{2}} -{\rho_{21}}_{j_x- \frac{1}{2},j_y ;j_{x'},j_{y'}}^{j_o ; j_{o}- \frac{1}{2} }}{\Delta_x}\nonumber \\ -
v\frac{{\rho_{12}}_{j_x,j_y ;j_{x'},j_{y'}+ \frac{1}{2}}^{j_o-\frac{1}{2} ; j_{o} } -{\rho_{12}}_{j_x,j_y ;j_{x'},j_{y'}- \frac{1}{2}}^{j_o- \frac{1}{2} ; j_{o}} }{\Delta_{y'}} +
iv\frac{{\rho_{12}}_{j_x,j_y ;j_{x'}+ \frac{1}{2},j_{y'}}^{j_o-\frac{1}{2} ; j_{o} } -{\rho_{12}}_{j_x,j_y ;j_{x'}- \frac{1}{2},j_{y'}}^{j_o- \frac{1}{2} ; j_{o}} }{\Delta_{x'}}~,
\end{array}
\end{align}

\begin{align}
\begin{array}{l}
\frac{{\rho_{12}}_{j_x,j_y;j_{x'}+\frac{1}{2},j_{y'}}^{j_o + \frac{1}{2}; j_{o} + 1} - {\rho_{12}}_{j_x,j_y;j_{x'}+\frac{1}{2},j_{y'}}^{j_o - \frac{1}{2}; j_{o} } }{\Delta_t} =
\frac{1}{i \hbar}({m_+}_{j_x, j_y}^{j_o} - {m_-}_{ j_{x'}+\frac{1}{2}, j_{y'}}^{ j_{o}+\frac{1}{2}}) \frac{{\rho_{12}}_{j_x,j_y;j_{x'}+\frac{1}{2},j_{y'}}^{j_o + \frac{1}{2}; j_{o}+1} + {\rho_{12}}_{j_x,j_y;j_{x'}+\frac{1}{2},j_{y'}}^{j_o - \frac{1}{2}; j_{o} }}{2}\nonumber \\ -
v\frac{{\rho_{22}}_{j_x,j_y + \frac{1}{2};j_{x'}+\frac{1}{2},j_{y'}}^{j_o ; j_{o}} -{\rho_{22}}_{j_x,j_y - \frac{1}{2};j_{x'}+\frac{1}{2},j_{y'}}^{j_o ; j_{o}} }{\Delta_y}-
iv\frac{{\rho_{22}}_{j_x+ \frac{1}{2},j_y ;j_{x'}+\frac{1}{2},j_{y'}}^{j_o ; j_{o}} -{\rho_{22}}_{j_x- \frac{1}{2},j_y ;j_{x'}+\frac{1}{2},j_{y'}}^{j_o; j_{o} }}{\Delta_x}\nonumber \\ -
v\frac{{\rho_{11}}_{j_x,j_y ;j_{x'}+\frac{1}{2},j_{y'}+ \frac{1}{2}}^{j_o+\frac{1}{2} ; j_{o}+\frac{1}{2} } -{\rho_{11}}_{j_x,j_y ;j_{x'}+\frac{1}{2} ,j_{y'}- \frac{1}{2}}^{j_o+ \frac{1}{2} ; j_{o}+\frac{1}{2} } }{\Delta_{y'}} -
iv\frac{{\rho_{11}}_{j_x,j_y ;j_{x'}+1,j_{y'}}^{j_o+\frac{1}{2} ; j_{o}+\frac{1}{2}  } -{\rho_{11}}_{j_x,j_y ;j_{x'},j_{y'}}^{j_o+ \frac{1}{2} ; j_{o}+\frac{1}{2} } }{\Delta_{x'}}~,
\end{array}
\end{align}

\begin{align}
\begin{array}{l}
\frac{{\rho_{21}}_{j_x+\frac{1}{2},j_y;j_{x'},j_{y'}}^{j_o + 1; j_{o} + \frac{1}{2}} - {\rho_{21}}_{j_x+\frac{1}{2},j_y;j_{x'},j_{y'}}^{j_o ; j_{o}-\frac{1}{2} } }{\Delta_t} =\nonumber \\
\frac{1}{i \hbar}({m_-}_{j_x+\frac{1}{2}, j_y}^{j_o+\frac{1}{2}} - {m_+}_{ j_{x'}, j_{y'}}^{ j_{o}}) \frac{{\rho_{21}}_{j_x+\frac{1}{2},j_y;j_{x'},j_{y'}}^{j_o + 1; j_{o}+\frac{1}{2}} + {\rho_{21}}_{j_x+\frac{1}{2},j_y;j_{x'},j_{y'}}^{j_o ; j_{o} -\frac{1}{2} }}{2}\nonumber \\-
v\frac{{\rho_{11}}_{j_x+\frac{1}{2},j_y + \frac{1}{2};j_{x'},j_{y'}}^{j_o+\frac{1}{2} ; j_{o}+\frac{1}{2}} -{\rho_{11}}_{j_x+\frac{1}{2},j_y - \frac{1}{2};j_{x'},j_{y'}}^{j_o+\frac{1}{2} ; j_{o}+\frac{1}{2}} }{\Delta_y}+
iv\frac{{\rho_{11}}_{j_x+1,j_y ;j_{x'},j_{y'}}^{j_o+\frac{1}{2} ; j_{o}+\frac{1}{2}} -{\rho_{11}}_{j_x,j_y ;j_{x'},j_{y'}}^{j_o+\frac{1}{2} ; j_{o}+\frac{1}{2} }}{\Delta_x}\nonumber \\-
v\frac{{\rho_{22}}_{j_x+\frac{1}{2},j_y ;j_{x'},j_{y'}+ \frac{1}{2}}^{j_o ; j_{o}} -{\rho_{22}}_{j_x+\frac{1}{2},j_y ;j_{x'},j_{y'}- \frac{1}{2}}^{j_o ; j_{o} } }{\Delta_{y'}} +
iv\frac{{\rho_{22}}_{j_x+\frac{1}{2},j_y ;j_{x'}+\frac{1}{2},j_{y'}}^{j_o ; j_{o}  } -{\rho_{22}}_{j_x+\frac{1}{2},j_y ;j_{x'}- \frac{1}{2},j_{y'}}^{j_o ; j_{o} } }{\Delta_{x'}}~,
\end{array}
\end{align}

\begin{align}
\begin{array}{l}
\frac{{\rho_{22}}_{j_x+\frac{1}{2},j_y;j_{x'}+\frac{1}{2},j_{y'}}^{j_o + 1; j_{o} + 1} - {\rho_{22}}_{j_x+\frac{1}{2},j_y;j_{x'}+\frac{1}{2},j_{y'}}^{j_o ; j_{o} } }{\Delta_t} = \nonumber \\
\frac{1}{i \hbar}({m_-}_{j_x+\frac{1}{2}, j_y}^{j_o+\frac{1}{2}} - {m_-}_{ j_{x'}+\frac{1}{2}, j_{y'}}^{ j_{o}+\frac{1}{2}}) \frac{{\rho_{22}}_{j_x+\frac{1}{2},j_y;j_{x'}+\frac{1}{2},j_{y'}}^{j_o + 1; j_{o}+1} + {\rho_{22}}_{j_x+\frac{1}{2},j_y;j_{x'}+\frac{1}{2},j_{y'}}^{j_o ; j_{o} }}{2}\nonumber \\ -
v\frac{{\rho_{12}}_{j_x+\frac{1}{2},j_y + \frac{1}{2};j_{x'}+\frac{1}{2},j_{y'}}^{j_o+\frac{1}{2} ; j_{o}+1} -{\rho_{12}}_{j_x+\frac{1}{2},j_y - \frac{1}{2};j_{x'}+\frac{1}{2},j_{y'}}^{j_o+\frac{1}{2} ; j_{o}+1} }{\Delta_y}+
iv\frac{{\rho_{12}}_{j_x+1,j_y ;j_{x'}+\frac{1}{2},j_{y'}}^{j_o+\frac{1}{2} ; j_{o}+1} -{\rho_{12}}_{j_x,j_y ;j_{x'}+\frac{1}{2},j_{y'}}^{j_o+\frac{1}{2} ; j_{o}+1 }}{\Delta_x}\nonumber \\ -
v\frac{{\rho_{21}}_{j_x+\frac{1}{2},j_y ;j_{x'}+\frac{1}{2},j_{y'}+ \frac{1}{2}}^{j_o+1 ; j_{o}+\frac{1}{2}} -{\rho_{21}}_{j_x+\frac{1}{2},j_y ;j_{x'}+\frac{1}{2},j_{y'}- \frac{1}{2}}^{j_o+1 ; j_{o}+\frac{1}{2} } }{\Delta_{y'}} -
iv\frac{{\rho_{21}}_{j_x+\frac{1}{2},j_y ;j_{x'}+1,j_{y'}}^{j_o +1; j_{o}+\frac{1}{2}  } -{\rho_{21}}_{j_x+\frac{1}{2},j_y ;j_{x'},j_{y'}}^{j_o+1 ; j_{o}+\frac{1}{2} } }{\Delta_{x'}}~,
\end{array}
\end{align}

For density matrix elements on ${\cal G}_1 (j_x, j_y, j_0-\frac{1}{2})$ and $ {\cal G}_2 (j_x , j_y+ \frac{1}{2}, j_0)$,  potential and mass terms on $\bar{{\cal G}_1} (j_x,j_y,j_0), \bar{{\cal G}_2} (j_x, j_y+\frac{1}{2}, j_0 +\frac{1}{2}):$\\

\begin{align}
\begin{array}{l}
\frac{{{\rho_{11}}}_{j_x,j_y;j_{x'},j_{y'}}^{j_o + \frac{1}{2}; j_{o} + \frac{1}{2}} - {\rho_{11}}_{j_x,j_y;j_{x'},j_{y'}}^{j_o - \frac{1}{2}; j_{o} - \frac{1}{2}} }{\Delta_t} =
\frac{1}{i \hbar}({m_+}_{j_x, j_y}^{j_o} - {m_+}_{ j_{x'}, j_{y'}}^{ j_{o}}) \frac{{\rho_{11}}_{j_x,j_y;j_{x'},j_{y'}}^{j_o + \frac{1}{2}; j_{o} + \frac{1}{2}} + {\rho_{11}}_{j_x,j_y;j_{x'},j_{y'}}^{j_o - \frac{1}{2}, j_{o} - \frac{1}{2}}}{2}\nonumber \\ -
v\frac{{\rho_{21}}_{j_x,j_y + \frac{1}{2};j_{x'},j_{y'}}^{j_o ; j_{o}- \frac{1}{2}} -{\rho_{21}}_{j_x,j_y - \frac{1}{2};j_{x'},j_{y'}}^{j_o ; j_{o}- \frac{1}{2}} }{\Delta_y}-
iv\frac{{\rho_{21}}_{j_x+ \frac{1}{2},j_y ;j_{x'},j_{y'}}^{j_o ; j_{o}- \frac{1}{2}} -{\rho_{21}}_{j_x- \frac{1}{2},j_y ;j_{x'},j_{y'}}^{j_o ; j_{o}- \frac{1}{2} }}{\Delta_x}\nonumber \\ -
v\frac{{\rho_{12}}_{j_x,j_y ;j_{x'},j_{y'}+ \frac{1}{2}}^{j_o-\frac{1}{2} ; j_{o} } -{\rho_{12}}_{j_x,j_y ;j_{x'},j_{y'}- \frac{1}{2}}^{j_o- \frac{1}{2} ; j_{o}} }{\Delta_{y'}} +
iv\frac{{\rho_{12}}_{j_x,j_y ;j_{x'}+ \frac{1}{2},j_{y'}}^{j_o-\frac{1}{2} ; j_{o} } -{\rho_{12}}_{j_x,j_y ;j_{x'}- \frac{1}{2},j_{y'}}^{j_o- \frac{1}{2} ; j_{o}} }{\Delta_{x'}}~,
\end{array}
\end{align}

\begin{align}
\begin{array}{l}
\frac{{\rho_{12}}_{j_x,j_y;j_{x'},j_{y'}+\frac{1}{2}}^{j_o + \frac{1}{2}; j_{o} + 1} - {\rho_{12}}_{j_x,j_y;j_{x'},j_{y'}+\frac{1}{2}}^{j_o - \frac{1}{2}; j_{o} } }{\Delta_t} =
\frac{1}{i \hbar}({m_+}_{j_x, j_y}^{j_o} - {m_-}_{ j_{x'}, j_{y'}+\frac{1}{2}}^{ j_{o}+\frac{1}{2}}) \frac{{\rho_{12}}_{j_x,j_y;j_{x'},j_{y'}+\frac{1}{2}}^{j_o + \frac{1}{2}; j_{o}+1} + {\rho_{12}}_{j_x,j_y;j_{x'},j_{y'}+\frac{1}{2}}^{j_o - \frac{1}{2}; j_{o} }}{2}\nonumber \\ -
v\frac{{\rho_{22}}_{j_x,j_y + \frac{1}{2};j_{x'},j_{y'}+\frac{1}{2}}^{j_o ; j_{o}} -{\rho_{22}}_{j_x,j_y - \frac{1}{2};j_{x'},j_{y'}+\frac{1}{2}}^{j_o ; j_{o}} }{\Delta_y}-
iv\frac{{\rho_{22}}_{j_x+ \frac{1}{2},j_y ;j_{x'},j_{y'}+\frac{1}{2}}^{j_o ; j_{o}} -{\rho_{22}}_{j_x- \frac{1}{2},j_y ;j_{x'}+,j_{y'}\frac{1}{2}}^{j_o; j_{o} }}{\Delta_x}\nonumber \\ -
v\frac{{\rho_{11}}_{j_x,j_y ;j_{x'},j_{y'}+ 1}^{j_o+\frac{1}{2} ; j_{o}+\frac{1}{2} } -{\rho_{11}}_{j_x,j_y ;j_{x'} ,j_{y'}}^{j_o+ \frac{1}{2} ; j_{o}+\frac{1}{2} } }{\Delta_{y'}} -
iv\frac{{\rho_{11}}_{j_x,j_y ;j_{x'}+\frac{1}{2},j_{y'}+\frac{1}{2}}^{j_o+\frac{1}{2} ; j_{o}+\frac{1}{2}  } -{\rho_{11}}_{j_x,j_y ;j_{x'}-\frac{1}{2},j_{y'}+\frac{1}{2}}^{j_o+ \frac{1}{2} ; j_{o}+\frac{1}{2} } }{\Delta_{x'}}~,
\end{array}
\end{align}

\begin{align}
\begin{array}{l}
\frac{{\rho_{21}}_{j_x,j_y+\frac{1}{2};j_{x'},j_{y'}}^{j_o + 1; j_{o} + \frac{1}{2}} - {\rho_{21}}_{j_x,j_y+\frac{1}{2};j_{x'},j_{y'}}^{j_o ; j_{o}-\frac{1}{2} } }{\Delta_t} =\nonumber \\
\frac{1}{i \hbar}({m_-}_{j_x, j_y+\frac{1}{2}}^{j_o+\frac{1}{2}} - {m_+}_{ j_{x'}, j_{y'}}^{ j_{o}}) \frac{{\rho_{21}}_{j_x+\frac{1}{2},j_y;j_{x'},j_{y'}}^{j_o + 1; j_{o}+\frac{1}{2}} + {\rho_{21}}_{j_x,j_y+\frac{1}{2};j_{x'},j_{y'}}^{j_o ; j_{o} -\frac{1}{2} }}{2}\nonumber \\-
v\frac{{\rho_{11}}_{j_x,j_y + 1;j_{x'},j_{y'}}^{j_o+\frac{1}{2} ; j_{o}+\frac{1}{2}} -{\rho_{11}}_{j_x+,j_y;j_{x'},j_{y'}}^{j_o+\frac{1}{2} ; j_{o}+\frac{1}{2}} }{\Delta_y}+
iv\frac{{\rho_{11}}_{j_x+\frac{1}{2},j_y+\frac{1}{2} ;j_{x'},j_{y'}}^{j_o+\frac{1}{2} ; j_{o}+\frac{1}{2}} -{\rho_{11}}_{j_x-\frac{1}{2},j_y+\frac{1}{2} ;j_{x'},j_{y'}}^{j_o+\frac{1}{2} ; j_{o}+\frac{1}{2} }}{\Delta_x}\nonumber \\-
v\frac{{\rho_{22}}_{j_x,j_y+\frac{1}{2} ;j_{x'},j_{y'}+ \frac{1}{2}}^{j_o ; j_{o}} -{\rho_{22}}_{j_x,j_y+\frac{1}{2} ;j_{x'},j_{y'}- \frac{1}{2}}^{j_o ; j_{o} } }{\Delta_{y'}} +
iv\frac{{\rho_{22}}_{j_x,j_y+\frac{1}{2} ;j_{x'}+\frac{1}{2},j_{y'}}^{j_o ; j_{o}  } -{\rho_{22}}_{j_x,j_y+\frac{1}{2} ;j_{x'}- \frac{1}{2},j_{y'}}^{j_o ; j_{o} } }{\Delta_{x'}}~,
\end{array}
\end{align}

\begin{align}
\begin{array}{l}
\frac{{\rho_{22}}_{j_x,j_y+\frac{1}{2};j_{x'},j_{y'}+\frac{1}{2}}^{j_o + 1; j_{o} + 1} - {\rho_{22}}_{j_x,j_y+\frac{1}{2};j_{x'},j_{y'}+\frac{1}{2}}^{j_o ; j_{o} } }{\Delta_t} = \nonumber \\
\frac{1}{i \hbar}({m_-}_{j_x, j_y+\frac{1}{2}}^{j_o+\frac{1}{2}} - {m_-}_{ j_{x'}, j_{y'}+\frac{1}{2}}^{ j_{o}+\frac{1}{2}}) \frac{{\rho_{22}}_{j_x,j_y+\frac{1}{2};j_{x'},j_{y'}+\frac{1}{2}}^{j_o + 1; j_{o}+1} + {\rho_{22}}_{j_x,j_y+\frac{1}{2};j_{x'},j_{y'}+\frac{1}{2}}^{j_o ; j_{o} }}{2}\nonumber \\ -
v\frac{{\rho_{12}}_{j_x,j_y + 1;j_{x'},j_{y'}+\frac{1}{2}}^{j_o+\frac{1}{2} ; j_{o}+1} -{\rho_{12}}_{j_x,j_y ;j_{x'},j_{y'}+\frac{1}{2}}^{j_o+\frac{1}{2} ; j_{o}+1} }{\Delta_y}+
iv\frac{{\rho_{12}}_{j_x+\frac{1}{2},j_y+\frac{1}{2} ;j_{x'},j_{y'}+\frac{1}{2}}^{j_o+\frac{1}{2} ; j_{o}+1} -{\rho_{12}}_{j_x-\frac{1}{2},j_y +\frac{1}{2};j_{x'},j_{y'}+\frac{1}{2}}^{j_o+\frac{1}{2} ; j_{o}+1 }}{\Delta_x}\nonumber \\ -
v\frac{{\rho_{21}}_{j_x,j_y+\frac{1}{2} ;j_{x'},j_{y'}+ 1}^{j_o+1 ; j_{o}+\frac{1}{2}} -{\rho_{21}}_{j_x,j_y+\frac{1}{2} ;j_{x'},j_{y'}}^{j_o+1 ; j_{o}+\frac{1}{2} } }{\Delta_{y'}} -
iv\frac{{\rho_{21}}_{j_x,j_y+\frac{1}{2} ;j_{x'}+\frac{1}{2},j_{y'}+\frac{1}{2}}^{j_o +1; j_{o}+\frac{1}{2}  } -{\rho_{21}}_{j_x,j_y+\frac{1}{2} ;j_{x'}- \frac{1}{2},j_{y'}+ \frac{1}{2}}^{j_o+1 ; j_{o}+\frac{1}{2} } }{\Delta_{x'}}~,
\end{array}
\end{align}

For density matrix elements placed on ${\cal G}_1 (j_x+\frac{1}{2}, j_y+\frac{1}{2}, j_o-\frac{1}{2})$ and $ {\cal G}_2 (j_x+ \frac{1}{2} , j_y, j_o)$;  potential and mass terms on $\bar{{\cal G}_1} (j_x+\frac{1}{2},j_y+\frac{1}{2},j_o),\bar{{\cal G}_2} (j_x+\frac{1}{2}, j_y, j_o +\frac{1}{2})$\\

\begin{align}
\begin{array}{l}
\frac{{\rho_{11}}_{j_x+\frac{1}{2},j_y+\frac{1}{2};j_{x'}+\frac{1}{2},j_{y'}+\frac{1}{2}}^{j_o + \frac{1}{2}; j_{o} + \frac{1}{2}} - {\rho_{11}}_{j_x+\frac{1}{2},j_y+\frac{1}{2};j_{x'}+\frac{1}{2},j_{y'}+\frac{1}{2}}^{j_o - \frac{1}{2}; j_{o} - \frac{1}{2}} }{\Delta_o} =\nonumber \\
\frac{1}{i \hbar}({m_+}_{j_x+\frac{1}{2}, j_y+\frac{1}{2}}^{j_o} - {m_+}_{ j_{x'}+\frac{1}{2}, j_{y'}+\frac{1}{2}}^{ j_{o}}) \frac{{\rho_{11}}_{j_x+\frac{1}{2},j_y+\frac{1}{2};j_{x'}+\frac{1}{2},j_{y'}+\frac{1}{2}}^{j_o + \frac{1}{2}; j_{o} + \frac{1}{2}} + {\rho_{11}}_{j_x+\frac{1}{2},j_y+\frac{1}{2};j_{x'}+\frac{1}{2},j_{y'}+\frac{1}{2}}^{j_o - \frac{1}{2}, j_{o} - \frac{1}{2}}}{2}\nonumber \\ -
v\frac{{\rho_{21}}_{j_x+\frac{1}{2},j_y + 1;j_{x'},j_{y'}}^{j_o ; j_{o}- \frac{1}{2}} -{\rho_{21}}_{j_x+\frac{1}{2},j_y ;j_{x'},j_{y'}}^{j_o ; j_{o}- \frac{1}{2}} }{\Delta_y}-
iv\frac{{\rho_{21}}_{j_x+ 1,j_y+\frac{1}{2} ;j_{x'},j_{y'}}^{j_o ; j_{o}- \frac{1}{2}} -{\rho_{21}}_{{j_x},j_y+\frac{1}{2} ;j_{x'},j_{y'}}^{j_o ; j_{o}- \frac{1}{2} }}{\Delta_x}\nonumber \\ -
v\frac{{\rho_{12}}_{j_x+\frac{1}{2},j_y+\frac{1}{2} ;j_{x'},j_{y'}+ \frac{1}{2}}^{j_o-\frac{1}{2} ; j_{o} } -{\rho_{12}}_{j_x+\frac{1}{2},j_y+\frac{1}{2} ;j_{x'},j_{y'}- \frac{1}{2}}^{j_o- \frac{1}{2} ; j_{o}} }{\Delta_{y'}}+
iv\frac{{\rho_{12}}_{j_x+\frac{1}{2},j_y+\frac{1}{2} ;j_{x'}+ \frac{1}{2},j_{y'}}^{j_o-\frac{1}{2} ; j_{o} } -{\rho_{12}}_{j_x+\frac{1}{2},j_y+\frac{1}{2} ;j_{x'}- \frac{1}{2},j_{y'}}^{j_o- \frac{1}{2} ; j_{o}} }{\Delta_{x'}}~,
\end{array}
\end{align}

\begin{align}
\begin{array}{l}
\frac{{\rho_{12}}_{j_x+\frac{1}{2},j_y+\frac{1}{2};j_{x'}+\frac{1}{2},j_{y'}}^{j_o + \frac{1}{2}; j_{o} + 1} - {\rho_{12}}_{j_x+\frac{1}{2},j_y+\frac{1}{2};j_{x'}+\frac{1}{2},j_{y'}}^{j_o - \frac{1}{2}; j_{o} } }{\Delta_o} =\nonumber \\
\frac{1}{i \hbar}({m_+}_{j_x+\frac{1}{2}, j_y+\frac{1}{2}}^{j_o} - {m_-}_{ j_{x'}+\frac{1}{2}, j_{y'}}^{ j_{o}+\frac{1}{2}}) \frac{{\rho_{12}}_{j_x+\frac{1}{2},j_y+\frac{1}{2};j_{x'}+\frac{1}{2},j_{y'}}^{j_o + \frac{1}{2}; j_{o}+1} + {\rho_{12}}_{j_x+\frac{1}{2},j_y+\frac{1}{2};j_{x'}+\frac{1}{2},j_{y'}}^{j_o - \frac{1}{2}; j_{o} }}{2}\nonumber \\ -
v\frac{{\rho_{22}}_{j_x+\frac{1}{2},j_y +1;j_{x'}+\frac{1}{2},j_{y'}}^{j_o ; j_{o}} -{\rho_{22}}_{j_x+\frac{1}{2},j_y;j_{x'}+\frac{1}{2},j_{y'}}^{j_o ; j_{o}} }{\Delta_y}-
iv\frac{{\rho_{22}}_{j_x+1,j_y+\frac{1}{2};j_{x'}+\frac{1}{2},j_{y'}}^{j_o ; j_{o}} -{\rho_{22}}_{j_x,j_y+\frac{1}{2} ;j_{x'}+\frac{1}{2},j_{y'}}^{j_o; j_{o} }}{\Delta_x}\nonumber \\ -
v\frac{{\rho_{11}}_{j_x+\frac{1}{2},j_y+\frac{1}{2} ;j_{x'}+\frac{1}{2},j_{y'}+\frac{1}{2}}^{j_o+\frac{1}{2} ; j_{o}+\frac{1}{2} } -{\rho_{11}}_{j_x+\frac{1}{2},j_y+\frac{1}{2} ;j_{x'}+\frac{1}{2} ,j_{y'}-\frac{1}{2}}^{j_o+ \frac{1}{2} ; j_{o}+\frac{1}{2} } }{\Delta_{y'}} -
iv\frac{{\rho_{11}}_{j_x+\frac{1}{2},j_y+\frac{1}{2} ;j_{x'}+1,j_{y'}}^{j_o+\frac{1}{2} ; j_{o}+\frac{1}{2}  } -{\rho_{11}}_{j_x+\frac{1}{2},j_y+\frac{1}{2} ;j_{x'},j_{y'}}^{j_o+ \frac{1}{2} ; j_{o}+\frac{1}{2} } }{\Delta_{x'}}~,
\end{array}
\end{align}

\begin{align}
\begin{array}{l}
\frac{{\rho_{21}}_{j_x+\frac{1}{2},j_y;j_{x'}+\frac{1}{2},j_{y'}+\frac{1}{2}}^{j_o + 1; j_{o} + \frac{1}{2}} - {\rho_{21}}_{j_x+\frac{1}{2},j_y;j_{x'}+\frac{1}{2},j_{y'}+\frac{1}{2}}^{j_o ; j_{o}-\frac{1}{2} } }{\Delta_o} =\nonumber \\
\frac{1}{i \hbar}({m_-}_{j_x+\frac{1}{2}, j_y}^{j_o+\frac{1}{2}} - {m_+}_{ j_{x'}+\frac{1}{2}, j_{y'}+\frac{1}{2}}^{ j_{o}}) \frac{{\rho_{21}}_{j_x+\frac{1}{2},j_y;j_{x'}+\frac{1}{2},j_{y'}+\frac{1}{2}}^{j_o + 1; j_{o}+\frac{1}{2}} + {\rho_{21}}_{j_x+\frac{1}{2},j_y;j_{x'}+\frac{1}{2},j_{y'}+\frac{1}{2}}^{j_o ; j_{o} -\frac{1}{2} }}{2}\nonumber \\ -
v\frac{{\rho_{11}}_{j_x+\frac{1}{2},j_y+\frac{1}{2};j_{x'}+\frac{1}{2},j_{y'}+\frac{1}{2}}^{j_o+\frac{1}{2} ; j_{o}+\frac{1}{2}} -{\rho_{11}}_{j_x+\frac{1}{2},j_y-\frac{1}{2};j_{x'}+\frac{1}{2},j_{y'}+\frac{1}{2}}^{j_o+\frac{1}{2} ; j_{o}+\frac{1}{2}} }{\Delta_y}+
iv\frac{{\rho_{11}}_{j_x+1,j_y;j_{x'}+\frac{1}{2},j_{y'}+\frac{1}{2}}^{j_o+\frac{1}{2} ; j_{o}+\frac{1}{2}} -{\rho_{11}}_{j_x,j_y ;j_{x'}+\frac{1}{2},j_{y'}+\frac{1}{2}}^{j_o+\frac{1}{2} ; j_{o}+\frac{1}{2} }}{\Delta_x}\nonumber \\ -
v\frac{{\rho_{22}}_{j_x+\frac{1}{2},j_y ;j_{x'}+\frac{1}{2},j_{y'}+ 1}^{j_o ; j_{o}} -{\rho_{22}}_{j_x+\frac{1}{2,j_y} ;j_{x'}+\frac{1}{2},j_{y'}}^{j_o ; j_{o} } }{\Delta_{y'}}+
iv\frac{{\rho_{22}}_{j_x+\frac{1}{2},j_y ;j_{x'}+1,j_{y'}+\frac{1}{2}}^{j_o ; j_{o}  } -{\rho_{22}}_{j_x+\frac{1}{2},j_y ;j_{x'},j_{y'}+\frac{1}{2}}^{j_o ; j_{o} } }{\Delta_{x'}}~,
\end{array}
\end{align}

\begin{align}
\begin{array}{l}
\frac{{\rho_{22}}_{j_x+\frac{1}{2},j_y;j_{x'}+\frac{1}{2},j_{y'}}^{j_o + 1; j_{o} + 1} - {\rho_{22}}_{j_x+\frac{1}{2},j_y;j_{x'}+\frac{1}{2},j_{y'}}^{j_o ; j_{o} } }{\Delta_o} = \nonumber \\
\frac{1}{i \hbar}({m_-}_{j_x+\frac{1}{2}, j_y}^{j_o+\frac{1}{2}} - {m_-}_{ j_{x'}+\frac{1}{2}, j_{y'}}^{ j_{o}+\frac{1}{2}}) \frac{{\rho_{22}}_{j_x+\frac{1}{2},j_y;j_{x'}+\frac{1}{2},j_{y'}}^{j_o + 1; j_{o}+1} + {\rho_{22}}_{j_x+\frac{1}{2},j_y;j_{x'}+\frac{1}{2},j_{y'}}^{j_o ; j_{o} }}{2}\nonumber \\ -
v\frac{{\rho_{12}}_{j_x+\frac{1}{2},j_y+\frac{1}{2};j_{x'}+\frac{1}{2},j_{y'}}^{j_o+\frac{1}{2} ; j_{o}} -{\rho_{12}}_{j_x+\frac{1}{2},j_y-\frac{1}{2} ;j_{x'}+\frac{1}{2},j_{y'}}^{j_o+\frac{1}{2} ; j_{o}} }{\Delta_y}+
iv\frac{{\rho_{12}}_{j_x,j_y+1 ;j_{x'}+\frac{1}{2},j_{y'}}^{j_o+\frac{1}{2} ; j_{o}+1} -{\rho_{12}}_{j_x,j_y;j_{x'}+\frac{1}{2},j_{y'}}^{j_o+\frac{1}{2} ; j_{o}+1 }}{\Delta_x}\nonumber \\-
v\frac{{\rho_{21}}_{j_x+\frac{1}{2},j_y ;j_{x'}+\frac{1}{2},j_{y'}+\frac{1}{2}}^{j_o +1; j_{o}+\frac{1}{2}} -{\rho_{21}}_{j_x+\frac{1}{2},j_y ;j_{x'}+\frac{1}{2},j_{y'}-\frac{1}{2}}^{j_o+1 ; j_{o}+\frac{1}{2} } }{\Delta_{y'}}-
iv\frac{{\rho_{21}}_{j_x+\frac{1}{2},j_y ;j_{x'}+1,j_{y'}}^{j_o +1; j_{o}+\frac{1}{2}  } -{\rho_{21}}_{j_x+\frac{1}{2},j_y ;j_{x'},j_{y'}}^{j_o +1; j_{o}+\frac{1}{2} } }{\Delta_{x'}}~,
\end{array}
\end{align}

Finally, for density matrix elements initially placed on ${\cal G}_1 (j_x+\frac{1}{2}, j_y+\frac{1}{2}, j_o-\frac{1}{2})$ and ${\cal G}_2 (j_x , j_y+ \frac{1}{2}, j_o)$,  $ \bar{{\cal G}_1} (j_x+\frac{1}{2},j_y+\frac{1}{2},j_o),\bar{{\cal G}_2} (j_x, j_y+\frac{1}{2}, j_o +\frac{1}{2}):$\\
\begin{align}
\begin{array}{l}
\frac{{\rho_{11}}_{j_x+\frac{1}{2},j_y+\frac{1}{2};j_{x'}+\frac{1}{2},j_{y'}+\frac{1}{2}}^{j_o + \frac{1}{2}; j_{o} + \frac{1}{2}} - {\rho_{11}}_{j_x+\frac{1}{2},j_y+\frac{1}{2};j_{x'}+\frac{1}{2},j_{y'}+\frac{1}{2}}^{j_o - \frac{1}{2}; j_{o} - \frac{1}{2}} }{\Delta_o} =\nonumber \\
\frac{1}{i \hbar}({m_+}_{j_x+\frac{1}{2}, j_y+\frac{1}{2}}^{j_o} - {m_+}_{ j_{x'}+\frac{1}{2}, j_{y'}+\frac{1}{2}}^{ j_{o}}) \frac{{\rho_{11}}_{j_x+\frac{1}{2},j_y+\frac{1}{2};j_{x'}+\frac{1}{2},j_{y'}+\frac{1}{2}}^{j_o + \frac{1}{2}; j_{o} + \frac{1}{2}} + {\rho_{11}}_{j_x+\frac{1}{2},j_y+\frac{1}{2};j_{x'}+\frac{1}{2},j_{y'}+\frac{1}{2}}^{j_o - \frac{1}{2}, j_{o} - \frac{1}{2}}}{2}\nonumber \\ -
v\frac{{\rho_{21}}_{j_x+\frac{1}{2},j_y + 1;j_{x'},j_{y'}}^{j_o ; j_{o}- \frac{1}{2}} -{\rho_{21}}_{j_x+\frac{1}{2},j_y ;j_{x'},j_{y'}}^{j_o ; j_{o}- \frac{1}{2}} }{\Delta_y}-
iv\frac{{\rho_{21}}_{j_x+ 1,j_y+\frac{1}{2} ;j_{x'},j_{y'}}^{j_o ; j_{o}- \frac{1}{2}} -{\rho_{21}}_{{j_x},j_y+\frac{1}{2} ;j_{x'},j_{y'}}^{j_o ; j_{o}- \frac{1}{2} }}{\Delta_x}\nonumber \\ -
v\frac{{\rho_{12}}_{j_x+\frac{1}{2},j_y+\frac{1}{2} ;j_{x'},j_{y'}+ \frac{1}{2}}^{j_o-\frac{1}{2} ; j_{o} } -{\rho_{12}}_{j_x+\frac{1}{2},j_y+\frac{1}{2} ;j_{x'},j_{y'}- \frac{1}{2}}^{j_o- \frac{1}{2} ; j_{o}} }{\Delta_{y'}}+
iv\frac{{\rho_{12}}_{j_x+\frac{1}{2},j_y+\frac{1}{2} ;j_{x'}+ \frac{1}{2},j_{y'}}^{j_o-\frac{1}{2} ; j_{o} } -{\rho_{12}}_{j_x+\frac{1}{2},j_y+\frac{1}{2} ;j_{x'}- \frac{1}{2},j_{y'}}^{j_o- \frac{1}{2} ; j_{o}} }{\Delta_{x'}}~,
\end{array}
\end{align}

\begin{align}
\begin{array}{l}
\frac{{\rho_{12}}_{j_x+\frac{1}{2},j_y+\frac{1}{2};j_{x'},j_{y'}+\frac{1}{2}}^{j_o + \frac{1}{2}; j_{o} + 1} - {\rho_{12}}_{j_x+\frac{1}{2},j_y+\frac{1}{2};j_{x'},j_{y'}+\frac{1}{2}}^{j_o - \frac{1}{2}; j_{o} } }{\Delta_o} =\nonumber \\
\frac{1}{i \hbar}({m_+}_{j_x+\frac{1}{2}, j_y+\frac{1}{2}}^{j_o} - {m_-}_{ j_{x'}, j_{y'}+\frac{1}{2}}^{ j_{o}+\frac{1}{2}}) \frac{{\rho_{12}}_{j_x+\frac{1}{2},j_y+\frac{1}{2};j_{x'},j_{y'}+\frac{1}{2}}^{j_o + \frac{1}{2}; j_{o}+1} + {\rho_{12}}_{j_x+\frac{1}{2},j_y+\frac{1}{2};j_{x'},j_{y'}+\frac{1}{2}}^{j_o - \frac{1}{2}; j_{o} }}{2}\nonumber \\ -
v\frac{{\rho_{22}}_{j_x+\frac{1}{2},j_y +1;j_{x'},j_{y'}+\frac{1}{2}}^{j_o ; j_{o}} -{\rho_{22}}_{j_x+\frac{1}{2},j_y;j_{x'},j_{y'}+\frac{1}{2}}^{j_o ; j_{o}} }{\Delta_y}-
iv\frac{{\rho_{22}}_{j_x+1,j_y+\frac{1}{2};j_{x'},j_{y'}+\frac{1}{2}}^{j_o ; j_{o}} -{\rho_{22}}_{j_x,j_y+\frac{1}{2} ;j_{x'},j_{y'}+\frac{1}{2}}^{j_o; j_{o} }}{\Delta_x}\nonumber \\ -
v\frac{{\rho_{11}}_{j_x+\frac{1}{2},j_y+\frac{1}{2} ;j_{x'},j_{y'}+ 1}^{j_o+\frac{1}{2} ; j_{o}+\frac{1}{2} } -{\rho_{11}}_{j_x+\frac{1}{2},j_y+\frac{1}{2} ;j_{x'} ,j_{y'}}^{j_o+ \frac{1}{2} ; j_{o}+\frac{1}{2} } }{\Delta_{y'}} -
iv\frac{{\rho_{11}}_{j_x+\frac{1}{2},j_y+\frac{1}{2} ;j_{x'}+\frac{1}{2},j_{y'}+\frac{1}{2}}^{j_o+\frac{1}{2} ; j_{o}+\frac{1}{2}  } -{\rho_{11}}_{j_x+\frac{1}{2},j_y+\frac{1}{2} ;j_{x'}-\frac{1}{2},j_{y'}+\frac{1}{2}}^{j_o+ \frac{1}{2} ; j_{o}+\frac{1}{2} } }{\Delta_{x'}}~,
\end{array}
\end{align}

\begin{align}
\begin{array}{l}
\frac{{\rho_{21}}_{j_x,j_y+\frac{1}{2};j_{x'}+\frac{1}{2},j_{y'}+\frac{1}{2}}^{j_o + 1; j_{o} + \frac{1}{2}} - {\rho_{21}}_{j_x,j_y+\frac{1}{2};j_{x'}+\frac{1}{2},j_{y'}+\frac{1}{2}}^{j_o ; j_{o}-\frac{1}{2} } }{\Delta_o} =\nonumber \\
\frac{1}{i \hbar}({m_-}_{j_x, j_y+\frac{1}{2}}^{j_o+\frac{1}{2}} - {m_+}_{ j_{x'}+\frac{1}{2}, j_{y'}+\frac{1}{2}}^{ j_{o}}) \frac{{\rho_{21}}_{j_x,j_y+\frac{1}{2};j_{x'}+\frac{1}{2},j_{y'}+\frac{1}{2}}^{j_o + 1; j_{o}+\frac{1}{2}} + {\rho_{21}}_{j_x,j_y+\frac{1}{2};j_{x'}+\frac{1}{2},j_{y'}+\frac{1}{2}}^{j_o ; j_{o} -\frac{1}{2} }}{2}\nonumber \\ -
v\frac{{\rho_{11}}_{j_x,j_y + 1;j_{x'}+\frac{1}{2},j_{y'}+\frac{1}{2}}^{j_o+\frac{1}{2} ; j_{o}+\frac{1}{2}} -{\rho_{11}}_{j_x,j_y;j_{x'}+\frac{1}{2},j_{y'}+\frac{1}{2}}^{j_o+\frac{1}{2} ; j_{o}+\frac{1}{2}} }{\Delta_y}+
iv\frac{{\rho_{11}}_{j_x+\frac{1}{2},j_y+\frac{1}{2} ;j_{x'}+\frac{1}{2},j_{y'}+\frac{1}{2}}^{j_o+\frac{1}{2} ; j_{o}+\frac{1}{2}} -{\rho_{11}}_{j_x-\frac{1}{2},j_y+\frac{1}{2} ;j_{x'}+\frac{1}{2},j_{y'}+\frac{1}{2}}^{j_o+\frac{1}{2} ; j_{o}+\frac{1}{2} }}{\Delta_x}\nonumber \\ -
v\frac{{\rho_{22}}_{j_x,j_y+\frac{1}{2} ;j_{x'}+\frac{1}{2},j_{y'}+ 1}^{j_o ; j_{o}} -{\rho_{22}}_{j_x,j_y+\frac{1}{2} ;j_{x'}+\frac{1}{2},j_{y'}}^{j_o ; j_{o} } }{\Delta_{y'}}+
iv\frac{{\rho_{22}}_{j_x,j_y +\frac{1}{2};j_{x'}+1,j_{y'}+\frac{1}{2}}^{j_o ; j_{o}  } -{\rho_{22}}_{j_x,j_y +\frac{1}{2};j_{x'},j_{y'}+\frac{1}{2}}^{j_o ; j_{o} } }{\Delta_{x'}}~,
\end{array}
\end{align}

\begin{align}
\begin{array}{l}
\frac{{\rho_{22}}_{j_x,j_y+\frac{1}{2};j_{x'},j_{y'}+\frac{1}{2}}^{j_o + 1; j_{o} + 1} - {\rho_{22}}_{j_x,j_y+\frac{1}{2};j_{x'},j_{y'}+\frac{1}{2}}^{j_o ; j_{o} } }{\Delta_o} = \nonumber \\
\frac{1}{i \hbar}({m_-}_{j_x, j_y+\frac{1}{2}}^{j_o+\frac{1}{2}} - {m_-}_{ j_{x'}, j_{y'}+\frac{1}{2}}^{ j_{o}+\frac{1}{2}}) \frac{{\rho_{22}}_{j_x,j_y+\frac{1}{2};j_{x'},j_{y'}+\frac{1}{2}}^{j_o + 1; j_{o}+1} + {\rho_{22}}_{j_x,j_y+\frac{1}{2};j_{x'},j_{y'}+\frac{1}{2}}^{j_o ; j_{o} }}{2}\nonumber \\ -
v\frac{{\rho_{12}}_{j_x,j_y + 1;j_{x'},j_{y'}+\frac{1}{2}}^{j_o+\frac{1}{2} ; j_{o}} -{\rho_{12}}_{j_x,j_y ;j_{x'},j_{y'}+\frac{1}{2}}^{j_o+\frac{1}{2} ; j_{o}} }{\Delta_y}+
iv\frac{{\rho_{12}}_{j_x+\frac{1}{2},j_y+\frac{1}{2} ;j_{x'},j_{y'}+\frac{1}{2}}^{j_o+\frac{1}{2} ; j_{o}+1} -{\rho_{12}}_{j_x-\frac{1}{2},j_y +\frac{1}{2};j_{x'},j_{y'}+\frac{1}{2}}^{j_o+\frac{1}{2} ; j_{o}+1 }}{\Delta_x}\nonumber \\-
v\frac{{\rho_{21}}_{j_x,j_y+\frac{1}{2} ;j_{x'},j_{y'}+1}^{j_o +1; j_{o}+\frac{1}{2}} -{\rho_{21}}_{j_x,j_y+\frac{1}{2} ;j_{x'},j_{y'}}^{j_o+1 ; j_{o}+\frac{1}{2} } }{\Delta_{y'}}-
iv\frac{{\rho_{21}}_{j_x,j_y+\frac{1}{2} ;j_{x'}+\frac{1}{2},j_{y'}+\frac{1}{2}}^{j_o +1; j_{o}+\frac{1}{2}  } -{\rho_{21}}_{j_x,j_y+\frac{1}{2} ;j_{x'}-\frac{1}{2},j_{y'}+\frac{1}{2}}^{j_o +1; j_{o}+\frac{1}{2} } }{\Delta_{x'}}~.
\end{array}
\end{align}

\end{appendix}

\end{document}